\newcommand{\local}{${\mathsf{LOCAL}}$}
\newcommand{\congest}{${\mathsf{CONGEST}}$}
\newcommand{\dist}{\mbox{\rm dist}}
\DeclareMathOperator*{\E}{\mathrm{E}}
  \newcommand{\BBtalgo}{\mathsf{BroadcastKnownCovFamily}}                 
\newcommand{\BBalgo}{\mathsf{BroadcastKnownDiam}} 
\newcommand{\UnBBalgo}{\mathsf{Broadcast}}                                                                                  
\newcommand{\mincut}{\text{MinCut}}
\newcommand{\poly}{\mathop\mathrm{poly}}
\newcommand{\cP}{\mathcal{P}}
\newcommand{\cG}{{\mathcal G}}
\newtheorem*{rep@theorem}{\rep@title}
\newcommand{\newreptheorem}[2]{%
\newenvironment{rep#1}[1]{%
\def\rep@title{#2 \ref{##1}}%
\begin{rep@theorem}}%
{\end{rep@theorem}}}
\newtheorem{theorem}{Theorem}
\newtheorem{corollary}[theorem]{Corollary}
\theoremstyle{plain}
\newtheorem{observation}[theorem]{Observation}
\newtheorem{lemma}[theorem]{Lemma}
\newtheorem*{lemma*}{Lemma}
\theoremstyle{plain}
\newtheorem{fact}[theorem]{Fact}
\newtheorem{claim}[theorem]{Claim}
\newtheorem{question}[theorem]{Question}
\newtheorem{definition}[theorem]{Definition}
\begin{document}

\title{Broadcast CONGEST Algorithms against Adversarial Edges}
\author{
Yael Hitron \\
        \small Weizmann Institute\\
        \small yael.hitron@weizmann.ac.il
\and				
Merav Parter \thanks{This project is funded by the European Research Council (ERC) under the European Union’s Horizon 2020 research and innovation programme (grant agreement No. 949083).}\\
        \small Weizmann Institute \\
        \small merav.parter@weizmann.ac.il
}
\date{}
\maketitle

\begin{abstract}
We consider the corner-stone broadcast task with an adaptive adversary that controls a fixed number of $t$ edges in the input communication graph. In this model, the adversary sees the \emph{entire} communication in the network and the random coins of the nodes, while maliciously manipulating the messages sent through a set of $t$ edges  (unknown to the nodes).  Since the influential work of [Pease, Shostak and Lamport, JACM'80], broadcast algorithms against plentiful adversarial models have been studied in both theory and practice for over more than four decades. Despite this extensive research, there is no round efficient broadcast algorithm for \emph{general} graphs in the \congest\ model of distributed computing. Even for a \emph{single} adversarial edge (i.e., $t=1$), the state-of-the-art round complexity is polynomial in the number of nodes.

We provide the first round-efficient broadcast algorithms against adaptive edge adversaries. Our two key results for $n$-node graphs of diameter $D$ are as follows:
\begin{itemize}
\item For $t=1$, there is a deterministic algorithm that solves the problem within $\widetilde{O}(D^2)$ rounds\footnote{As usual, the notation $\widetilde{O}(\cdot)$ hides poly-logarithmic terms in the number of nodes $n$.}, provided that the graph is $3$ edge-connected. This round complexity beats the natural barrier of $O(D^3)$ rounds, the existential lower bound on the maximal length of $3$ edge-disjoint paths between a given pair of nodes in $G$. This algorithm can be extended to a $\widetilde{O}(D^{O(t)})$-round algorithm against $t$ adversarial edges in $(2t+1)$ edge-connected graphs.

\item For expander graphs with minimum degree of $\Theta(t^2\log n)$, there is a considerably improved broadcast algorithm with $O(t \log ^2 n)$ rounds against $t$ adversarial edges. This algorithm exploits the connectivity and conductance properties of $G$-subgraphs obtained by employing the Karger's edge sampling technique.	 
	
\end{itemize}
Our algorithms mark a new connection between the areas of fault-tolerant network design and reliable distributed communication. 
\end{abstract}

\newpage
\tableofcontents
\newpage
\setcounter{page}{1}
\section{Introduction}
Guaranteeing the uninterrupted operation of communication networks is a significant objective in network algorithms. The area of resilient distributed computation has been receiving a growing
attention over the last years as computer networks grow in size and become more vulnerable to byzantine failures. 
Since the introduction of this setting by Pease et al. \cite{pease1980reaching} and Lamport et al. \cite{LamportSP82, pease1980reaching} distributed broadcast algorithms against various adversarial models have been studied in theory and practice for over more than four decades. Resilient distributed algorithms have been provided for broadcast and consensus \cite{Dolev82, DolevFFLS82, fischer1983consensus,BrachaT85,toueg1987fast,santoro1989time,BermanGP89,SantoroW90,BermanG93,FeldmanM97,GarayM98,fitzi2000partial,Koo04, PelcP05,katz2006expected,DolevH08, MaurerT12, imbs2015simple,CohenHMOS19,KhanNV19}, as well as for the related fundamental problems of gossiping \cite{blough1993optimal,bagchi1994information,censor2017fast}, and agreement \cite{DolevFFLS82,pease1980reaching,bracha87,CoanW92,GarayM98}. See \cite{pelc1996fault} for a survey on this topic. A key limitation of many of these algorithms is that they assume that the communication graph is the complete graph.

Our paper is concerned with communication graphs of arbitrary topologies. In particular, it addresses the following basic question, which is still fairly open, especially in the \congest\ model of distributed computing \cite{Peleg:2000}:
\begin{question}\label{ques:cost}
	What is the cost (in terms of the number of rounds) for providing resilience against adversarial edges in distributed networks with arbitrary topologies?
\end{question}

An important milestone in this regard was made by Dolev \cite{Dolev82} who showed that the $(2t+1)$ node-connectivity of the graph is a necessary condition for guaranteeing the correctness of the computation in the presence of $t$ adversarial nodes. Pelc \cite{Pelc92} provided the analogous argument for $t$ adversarial edges, requiring edge-connectivity of $(2t+1)$. The broadcast protocol presented therein requires a linear number of rounds (linear in the number of nodes) and exponentially large messages. Byzantine broadcast algorithms for general graph topologies have been addressed mostly under simplified settings \cite{PelcP05}, e.g., probabilistic faulty models \cite{pelc2007feasibility, Pelc92}, cryptographic assumptions \cite{ganesh2016broadcast,AbrahamDDN019,AbrahamCDNP0S19,Nayak0SVX20}, or under bandwidth-\emph{free} settings (e.g., allowing neighbors to exchange exponentially large messages) \cite{Dolev82, MaurerT12, KhanNV19, Koo04, DolevH08,KhanNV19,ChlebusKO20}. 

In this paper, we consider the following extension of the standard \congest\ model to the adversarial setting:
\begin{mdframed}[hidealllines=false,backgroundcolor=white!25]
	\textbf{The Adversarial \congest\ Model:} The network is abstracted as an $n$-node graph $G = (V, E)$, with a
	processor on each node. Each node has a unique identifier of $O(\log n)$ bits. 
	Initially, the processors only know the identifiers of their incident edges, as well as a polynomial estimate on the number of nodes $n$. 
	
	There is a \emph{computationally unbounded} adversary that controls a fixed set of at most $t$ edges in the graph, denoted hereafter as \emph{adversarial} edges. The nodes do not know the identity of the adversarial edges, but they know the bound $t$. The adversary knows the graph topology, the random coins of the nodes, and the protocol executed by the nodes. In each round, it is allowed to send $O(\log n)$ bit messages on each of the adversarial edges $F$ (possibly a distinct message on each edge direction). It is \emph{adaptive} as it can determine its behavior in round $r$ based on the messages exchanged throughout the entire network up to round $r$. 
\end{mdframed}
The definition naturally extends to adversarial \emph{nodes} $F \subseteq V$ for which the adversary can send
in each round, arbitrarily bad $O(\log n)$ bit messages on each of the edges incident to $F$. The primary complexity measure of this model is the \emph{round} complexity. In contrast to many prior works in the adversarial setting, in our model, the nodes are \emph{not} assumed to know the graph's topology, and not even its \emph{diameter}. 
To address \Cref{ques:cost}, we provide a comprehensive study of the adversarial broadcast problem, which is formally defined as follows:
\begin{mdframed}[hidealllines=false,backgroundcolor=white!25]
	\textbf{The adversarial broadcast task:} Given is a $(2t+1)$ edge-connected graph $G=(V,E)$ and a set $F \subset E$ of $|F|\leq t$ edges controlled by the adversary. There is a designated source node $s \in V$ that holds a message $m_0$. It is then required for all the nodes to output $m_0$, while ignoring all other messages. 
\end{mdframed} 
To this date, all existing broadcast algorithms in the adversarial \congest\ model require a polynomial number of rounds, even when handling a single adversarial edge! Recently, Chlebus, Kowalski, and Olkowski \cite{ChlebusKO20} extended the result of Garay and Moses \cite{GarayM98} to general $(2t+1)$ node-connected graphs with minimum degree $3t$. Their algorithms, however, use \emph{exponentially} large communication. Their message size can be improved to polynomial only when using authentication schemes (which we totally avoid in this paper). It is also noteworthy that the existing protocols for \emph{node} failures might still require polynomially many rounds for general graphs, even for a single adversarial edge and for \emph{small} diameter graphs. 

A natural approach for broadcasting a message $m_0$ in the presence of $t$ adversarial edges is to route the message along $(2t+1)$ edge-disjoint paths between the source node $s$, and each target node $v$. This allows each node to deduce $m_0$ by taking the majority message. This approach has been applied in previous broadcast algorithms (e.g., \cite{Dolev82, Pelc92}) under the assumption that the nodes know \emph{the entire graph}, and therefore can compute these edge-disjoint paths. A recent work of \cite{ByzCompilersHP21} demonstrated that there are $D$-diameter 
$(2t+1)$ edge-connected graphs, for which the maximal length of any collection of $(2t+1)$ edge-disjoint paths between a given pair of nodes might be as large as $(D/ t)^{\Theta(t)}$. For $t=1$, the length lower bound becomes $\Omega(D^3)$. Providing round efficient algorithms in the adversarial \congest\ model calls for a new approach.

\paragraph{Our approach in a nutshell.} Our approach is based on combining the perspectives of fault-tolerant (FT) network design, and distributed graph algorithms. The combined power of these points of view allows us to characterize the round complexity of the adversarial broadcast task as a function of the graph diameter $D$, and the number of adversarial edges $t$. This is in contrast to prior algorithms that obtain a polynomial round complexity (in the number of nodes). On a high level, one of the main tools that we borrow from FT network design 
is the FT sampling technique \cite{alon1995color,weimann2013replacement,dinitz2011fault,grandoni2019faster,parter2019small,ParterYPODC19,ChakrabortyC20}, and its recent derandomization by \cite{bodwin2021optimal,RPC21}. For a given graph $G$ and a bound on the number of faults $k$, the FT sampling technique defines a small family $\mathcal{G}=\{G_i \subseteq G\}$ of $G$-subgraphs denoted as \emph{covering family}, which is formally defined as follows:
\begin{definition}[$(L,k)$ covering family]\label{def:covering-Lk}
	For a given graph $G$, a family of $G$-subgraphs $\mathcal{G}=\{G_1,\ldots, G_\ell\}$ is an $(L,k)$ \emph{covering family} if for every $\langle u,v,E' \rangle \in V \times V \times E^{\leq k}$ and any $L$-length $u$-$v$ path $P \subseteq G \setminus E'$, there exists a subgraph $G_i$ such that (P1) $P \subseteq G_i$ and (P2) $E' \cap G_i=\emptyset$.
\end{definition}
As the graph topology is unknown, one cannot hope to compute a family of subgraphs that are completely known to the nodes. 
Instead, we require the nodes to \emph{locally} know the covering family in the following manner. 

\begin{definition}[Local Knowledge of a Subgraph Family]
	A family of ordered subgraphs $\mathcal{G}=\{G_1,\ldots, G_\ell\}$ where each $G_i \subseteq G$, is \emph{locally known} if given the identifier of an edge $e=(u,v)$ and an index $i$, the nodes $u,v$ can locally determine if $e \in G_i$.
\end{definition}
In the context of $(2t+1)$ edge-connected graphs with $t$ adversarial edges, we set $L=O(t D)$ and $k=O(t)$. The randomized FT sampling technique \cite{weimann2013replacement,dinitz2011fault} provides an $(L,k)$ covering family $\mathcal{G}$ of cardinality $O(L^k \log n)$. \cite{RPC21} provided a deterministic construction with $O((L \poly(\log n))^{k+1})$ subgraphs. 

One can show that by the properties of the covering family, exchanging the message $m_0$ over all subgraphs in $\mathcal{G}$ (in the adversarial \congest\ model) guarantees that all nodes successfully receive $m_0$. This holds since for every $v \in V$ and a fixed set of adversarial edges $F$, the family $\mathcal{G}$ contains a subgraph $G_i$ which contains a short $s$-$v$ path (of length $L$) and does not contain any of the adversarial edges. 
Given this observation, our challenge is two folds:
\begin{enumerate}[noitemsep]
	\item provide a round-efficient algorithm for exchanging $m_0$ over all $\mathcal{G}$-subgraphs simultaneously,
	\item guarantee that each node outputs the message $m_0$ while ignoring the remaining messages.
\end{enumerate}

To address the first challenge, we show that the family of subgraphs obtained by this technique has an additional key property of \emph{bounded width}. Informally, a family $\mathcal{G}$ of $G$-subgraphs has a bounded width if each $G$-edge appears in all but a bounded number of subgraphs in $\mathcal{G}$. The bounded width of  $\mathcal{G}$ allows us to exchange messages in all these subgraphs simultaneously, in a nearly optimal number of rounds. The round complexity of this scheme is based on a very careful analysis which constitutes the key technical contribution in this paper. To the best of our knowledge, the bounded width property of the FT sampling technique has been used before only in the context of data structures \cite{weimann2013replacement,0001W20}. It is therefore interesting to see that it finds new applications in the context of reliable distributed communication. The second challenge is addressed by performing an additional communication phase which filters out the corrupted messages. As we will see, the round complexities of our broadcast algorithms for general graphs will be dominated by the cardinality of covering families (which are nearly tight in a wide range of parameters, as shown in \cite{RPC21}).

We also consider the family of expander graphs, which received a lot of attention in the context of distributed resilient computation \cite{DworkPPU88, Upfal94, KingSSV06, AugustineP015}. For these graphs, we are able to show covering families\footnote{Using a somewhat more relaxed definition of these families.} of considerably smaller cardinality that scales linearly with the number of the adversarial edges. This covering family is obtained by using Karger's edge sampling technique \cite{karger1999random}, and its conductance-based analysis by Wulff-Nilsen \cite{wulff2017fully}. We hope this result will also be useful in the context of FT network design. We next describe our key contribution in more detail.

\subsection*{Our Results}
We adopt the gradual approach of fault tolerant graph algorithms, and start by studying broadcast algorithms against a single adversarial edge. Perhaps surprisingly, already this case has been fairly open. We show:
\begin{mdframed}[hidealllines=true,backgroundcolor=gray!25]
	\vspace{-8pt}
	\begin{theorem} \label{lem:broadcast}[Broadcast against a Single Adversarial Edge]
		Given a $D$--diameter, $3$ edge-connected graph $G$, there exists a deterministic algorithm for broadcast against a single adversarial edge that runs in $\widetilde{O}(D^2)$ adversarial-\congest\ rounds. In addition, at the end of the algorithm, all nodes also compute a linear estimate for the \emph{diameter} of the graph.
	\end{theorem}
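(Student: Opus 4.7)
The plan is to instantiate the covering-family strategy sketched in the introduction for $t=1$. First I would handle the unknown diameter by a doubling search over guesses $\tilde D = 2, 4, 8, \ldots$, letting the smallest successful iteration double as a linear diameter estimate. For a fixed guess $\tilde D$ I would construct a deterministic $(L, 1)$-covering family $\mathcal G = \{G_1, \ldots, G_\ell\}$ of $G$ with $L = \Theta(\tilde D)$, using the derandomized FT-sampling construction of \cite{RPC21}. For $k=1$ this yields $\ell = \widetilde{O}(\tilde D^2)$ subgraphs together with the promised bounded-width property: every edge of $G$ is missing from at most $W = \poly\log n$ members of $\mathcal G$. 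The family is locally known, so no preprocessing communication is required to set it up.

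Next I would broadcast $m_0$ on all subgraphs in parallel. In each $G_i$ the source $s$ initiates a BFS-style flood of the labeled packet $(i, m_0)$; each node accepts and relays a packet $(i, \cdot)$ only along edges it locally knows to belong to $G_i$. This firewalls each subgraph's flood against corruption by edges outside $G_i$, so only the single adversarial edge $e^\star$ can tamper with the floods in subgraphs that contain it. Sequential broadcasting would cost $\widetilde{O}(\tilde D^3)$ rounds, which is too expensive; instead I would pipeline. The BFS dilation in each $G_i$ is $O(\tilde D)$, and any edge of $G$ must forward at most $\ell - W$ distinct genuine packets, so a careful layer-by-layer schedule (exploiting the low width) completes all floods in $\widetilde{O}(\ell + \tilde D) = \widetilde{O}(\tilde D^2)$ rounds. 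Each node $v$ then holds a multiset $M_v$ of at most $\ell$ labeled messages, and the covering-family property guarantees that at least one entry of $M_v$ equals $m_0$.

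The third step is to filter out the corrupted entries of $M_v$. Here I would exploit $3$-edge-connectivity: of any two edge-disjoint $s$-$v$ paths in $G$, the adversarial edge lies on at most one, so a candidate witnessed by two edge-disjoint routes in $G$ must equal $m_0$. Concretely, I would either strengthen the covering-family construction so that for every $v$ there exist two subgraphs $G_i, G_j$ whose BFS $s$-$v$ paths are edge-disjoint in $G$ and both avoid $e^\star$, or augment the protocol with a second covering-family broadcast of a short digest of $m_0$ that each node uses to discard candidates not confirmed along an edge-disjoint route. Both variants preserve the $\widetilde{O}(\tilde D^2)$ bound up to constant factors and cause every node to output $m_0$.

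The main obstacle I anticipate is exactly this filtering step: combining a covering-family guarantee, which delivers only \emph{some} clean subgraph per node, with $3$-edge-connectivity to extract a \emph{verifiable} answer, without paying an extra factor of $\ell$ per candidate and without assuming global knowledge of an edge-disjoint path decomposition. A natural resolution is to pair each covering subgraph with a companion subgraph that avoids the same single fault but routes through disjoint edges; this would amount to an edge-disjoint strengthening of the covering-family definition. The advertised linear diameter estimate is then a by-product: the smallest doubling guess $\tilde D$ at which every node confirms its output is a $2$-approximation of $D$, and global confirmation itself uses one more covering-family broadcast, so the total round count remains $\widetilde{O}(D^2)$.
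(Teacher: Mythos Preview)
Your overall architecture---doubling search, a locally-known $(L,1)$ covering family from \cite{RPC21}, pipelined flooding over the subgraphs---matches the paper. The genuine gap is the filtering step, which you yourself flag as the main obstacle. Your two suggested fixes (edge-disjoint companion subgraphs, or a second digest broadcast) are not developed enough to work: the first would require nodes to know which BFS paths through different $G_i$'s are edge-disjoint in $G$, which they cannot determine locally; the second does not explain why a digest broadcast would be any less corruptible than the original. The paper resolves this with a short and rather different \emph{acceptance phase}: after flooding, $s$ floods a single $accept(m_0)$ token over $G$ (not over the $G_i$'s), and a node $u$ accepts a value $m'$ only if it both (i) stored $(m',i)$ during Phase~1 for some $i$, and (ii) received $accept(m')$ from a neighbor $w$ with $(u,w)\notin G_i$. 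The point is that if the $accept$ arrived over the adversarial edge $e'$, then condition (ii) forces $e'\notin G_i$, so $G_i$ was entirely clean and the stored $(m',i)$ must equal $m_0$. Conversely, for each $u$ and its BFS parent $p(u)$ in $G\setminus\{e'\}$, the covering property supplies a $G_j$ avoiding both $e'$ and $(u,p(u))$, so $u$ stores $(m_0,j)$ and accepts along the tree. This costs only $O(L)$ extra rounds and needs no edge-disjointness reasoning.

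A secondary issue: your pipeline accounting is off. The width $W$ bounds how many subgraphs an edge is \emph{missing} from, so each edge belongs to roughly $\ell$ subgraphs and has congestion $\Theta(\ell)$, not $\ell-W$; a naive congestion-plus-dilation bound therefore gives $\widetilde{O}(\ell+\tilde D)=\widetilde{O}(\tilde D^2)$ only because $\ell=\widetilde{O}(\tilde D^2)$ already, which is circular. The paper's actual argument (its Lemma~7) is more delicate: it fixes a target path $P\subseteq G_i\setminus\{e'\}$ of length $\eta\le L$ and shows the total queueing delay along $P$ is at most $\sum_{k<i}|\{e\in P: e\notin G_k\}|\le \eta\cdot\omega(\mathcal G)$, via an injection into the set of (index, missing-edge) pairs. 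This yields $O(L\cdot\omega(\mathcal G)+|\mathcal G|)$ rounds, and with the paper's width bound $\omega(\mathcal G)=\widetilde{O}(D)$ (not $\poly\log n$ as you wrote) this is $\widetilde{O}(D^2)$.
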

\end{mdframed}
This improves considerably upon the (implicit) state-of-the-art $n^{O(D)}$ bound obtained by previous algorithms (e.g., by \cite{MaurerT12,ChlebusKO20}). 
In addition, in contrast to many previous works (including \cite{MaurerT12, ChlebusKO20}), our algorithm does not assume global knowledge of the graph or any estimate on the graph's diameter. In fact, at the end of the broadcast algorithm, the nodes also obtain a linear estimate of the graph diameter. 

Using the covering family obtained by the standard FT-sampling technique, it is fairly painless to provide a broadcast algorithm with a round complexity of $\widetilde{O}(D^3)$. Our main efforts are devoted to improving the complexity to $\widetilde{O}(D^2)$ rounds. Note that the round complexity of $D^3$ appears to be a natural barrier for this problem for the following reason. There exists a $3$ edge-connected $D$-diameter graph $G=(V,E)$ and a pair of nodes $s,v$ such that in any collection of $3$ edge-disjoint $s$-$v$ paths $P_1,P_2,P_3$, the length of the longest path is $\Omega(D^3)$ (By Corollary 40 of \cite{ByzCompilersHP21}). The improved bound of $\widetilde{O}(D^2)$ rounds is obtained by exploiting another useful property of the covering families of \cite{RPC21}. One can show that, in our context, each $G$-edge appears on all but $O(\log n)$ many subgraphs in the covering family. This plays a critical role in showing that the simultaneous message exchange on all these subgraphs can be done in $\widetilde{O}(D^2)$ rounds (i.e., linear in the number of subgraphs in this family). 

\paragraph{Multiple adversarial edges.}
We consider the generalization of our algorithms to support $t$ adversarial edges. For $t=O(1)$, we provide broadcast algorithms with $\poly(D)$ rounds. 
\\
\begin{mdframed}[hidealllines=true,backgroundcolor=gray!25]
	\vspace{-8pt}
	\begin{theorem} \label{thm:t-broadcast}[Broadcast against $t$-Adversarial Edges]
		There exists a deterministic broadcast algorithm against $t$ adversarial edges, for every $D$--diameter $(2t+1)$ edge-connected graph, with round complexity of $(tD\log n)^{O(t)}$. Moreover, this algorithm can be implemented in 
		$O(t D \log n)$ \local\ rounds (which is nearly optimal).
	\end{theorem}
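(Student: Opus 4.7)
The plan is to lift the single-edge broadcast framework of \Cref{lem:broadcast} to general $t$ by instantiating the $(L,k)$ covering family machinery with the parameters appropriate for $(2t+1)$ edge-connectivity, and then re-engineering both the bounded-width simultaneous broadcast and the filtering layer. I would set $k=t$ and $L = O(tD)$, the scale already used in the paper: this length suffices because $G\setminus F$ is $(t+1)$ edge-connected for every $|F|\le t$ and therefore contains an $s$-$v$ path of length $O(tD)$ for every $v\in V$. Invoking the deterministic FT-sampling construction of \cite{RPC21} then yields a locally-known $(L,t)$ covering family $\mathcal{G} = \{G_1,\ldots,G_\ell\}$ of cardinality $\ell = O((L\poly(\log n))^{t+1}) = (tD\log n)^{O(t)}$, with the additional bounded-width guarantee that each $G$-edge is missing from only $O(t\log n)$ subgraphs.

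Given $\mathcal{G}$, I would run the multi-subgraph generalization of $\BBtalgo$ that disseminates $m_0$ simultaneously inside every $G_i\in\mathcal{G}$. As in the $t=1$ argument, the bounded-width property enables pipelining the $\ell$ subgraph-broadcasts along each edge of $G$ with almost no wasted slot, so the total number of adversarial-\congest\ rounds is $\wt{O}(\ell \cdot \poly(tD)) = (tD\log n)^{O(t)}$. The covering property guarantees that for every node $v$ and every adversarial set $F$ with $|F|\le t$, some index $i^\star$ satisfies both $G_{i^\star}\cap F = \emptyset$ and the existence of an $s$-$v$ path of length at most $L$ in $G_{i^\star}$; hence the value $m_0$ is delivered uncorrupted along the $i^\star$-channel. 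A filtering step then selects $m_0$ by, for each candidate value, certifying it against a large enough collection of pairwise edge-disjoint $s$-$v$ witness paths within $\mathcal{G}$; the $(2t+1)$ edge-connectivity of $G$ together with the covering guarantee gives the $(t+1)$-vs-$t$ separation that isolates $m_0$.

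For the \local\ statement I would exploit unbounded message sizes via a doubling scheme on an estimate $r\in\{2^0,2^1,\ldots\}$ of the required radius to compensate for the unknown $D$. At scale $r$, each node learns the full topology of its $r$-ball in $r$ rounds and stops at the first $r = O(tD)$ for which it locates $2t+1$ edge-disjoint $s$-$v$ paths of length at most $r$; it then outputs the majority value along those paths. Summing the geometric series of doublings yields $O(tD\log n)$ rounds in total.

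The main technical obstacle I anticipate lies in the \congest\ scheduling: extending the bounded-width argument from $t=1$ to general $t$ while keeping the dependence on $\ell$ essentially linear -- rather than paying $\poly(\ell)$ per-round for contention -- demands a careful congestion/dilation accounting across the whole covering family, since many of the $\ell$ subgraphs may contend for the same edge simultaneously. The filtering layer is also more subtle than in the single-edge case, because $t$ adversarial edges can coherently corrupt unboundedly many subgraphs, so arguing that no spurious value is certified by more than $t$ pairwise edge-disjoint witnesses uniformly over all faulty sets $F$ is where Menger and the covering property must be combined most carefully.
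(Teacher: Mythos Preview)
Your proposal has two genuine gaps.

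\textbf{The covering parameter and the filter.} You set $k=t$, but the paper uses $k=2t$, and this is essential for the filtering layer. With an $(L,t)$ family you are guaranteed only a \emph{single} subgraph $G_{i^\star}$ that avoids the adversarial set $F$; hence each node $v$ is promised to receive $m_0$ along one clean channel, not along $t{+}1$ pairwise edge-disjoint witness paths as your filter demands. The paper's filter works differently: during the flooding phase every message carries the explicit path along which it travelled (a ``heard bundle''), and in the acceptance phase a node accepts $m'$ only if it receives $accept(m')$ from a neighbor $w$ \emph{and} the minimum $s$--$v$ cut of its stored $m'$-paths avoiding $(v,w)$ is at least $t$. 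Proving that the true message passes this test requires, for every $(t{-}1)$-set $E'$, a subgraph that avoids $F\cup E'\cup\{(v,w)\}$ simultaneously --- that is up to $2t$ edges, which is exactly why an $(L,2t)$ family is needed. Incidentally, the paper does not use bounded-width pipelining for general $t$ at all; it simply runs the $|\mathcal{G}|$ iterations sequentially in $O(L\cdot|\mathcal{G}|)$ rounds, which already sits inside $(tD\log n)^{O(t)}$.

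\textbf{The \local\ argument.} Your approach fails for two independent reasons. First, the premise that at radius $r=O(tD)$ a node can locate $2t{+}1$ edge-disjoint $s$--$v$ paths of length at most $r$ is false: the paper explicitly cites \cite{ByzCompilersHP21} for $(2t{+}1)$ edge-connected $D$-diameter graphs in which every such collection contains a path of length $(D/t)^{\Theta(t)}$, so your doubling would not halt at $O(tD)$. Second, ``learning the $r$-ball topology'' is unsound in this model --- the adversary controls edges and can inject arbitrary topology information, so the ball you reconstruct may be fictitious. The paper's \local\ bound is obtained far more directly: it runs the very same two-phase algorithm, but with unbounded bandwidth executes all $|\mathcal{G}|$ subgraph floods simultaneously in $O(L)=O(tD)$ rounds, with the extra logarithmic factor coming from the diameter-doubling wrapper.
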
 
\end{mdframed}
We note that we did not attempt to optimize for the constants in the exponent in our results for multiple adversarial edges. The round complexity of the algorithm is mainly dominated by the number of subgraphs in the covering family. 

\paragraph{Improved broadcast algorithms for expander graphs.}
We then turn to consider the family of expander graphs, which has been shown to have various applications in the context of resilient distributed computation \cite{DworkPPU88, Upfal94, KingSSV06, AugustineP015}. Since the diameter of expander graphs is logarithmic, the algorithm of Theorem \ref{thm:t-broadcast} yields a round complexity of $(t\log n)^{O(t)}$. We provide a considerably improved solution using a combination of tools. The improved broadcast algorithm is designed for $\phi$-expander graphs with minimum degree $\Theta(t^2 \log n/ \phi)$. One can show that these graphs also have a sufficiently large edge-connectivity.

\begin{theorem}\label{thm:expander-broadcast}
	Given an $n$-node $\phi$-expander graph with minimum degree $\Theta(t^2 \log n /\phi)$,
	there exists a randomized broadcast algorithm against $t$ adversarial edges with round complexity of $O(t \cdot \log ^2 n/\phi)$ rounds.	
\end{theorem}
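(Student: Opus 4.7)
The plan is to reduce Theorem~\ref{thm:expander-broadcast} to a small randomized covering family obtained by Karger edge sampling, combined with a per-subgraph flood and a plurality decoding at every node.

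I would first sample $\ell=\Theta(t\log n)$ subgraphs $G_1,\dots,G_\ell$, including each edge of $G$ into $G_i$ independently with probability $p=\Theta(1/t)$. Local knowledge of the family is obtained from a $k$-wise independent hash seed (with $k=\poly(t\log n)$) depending only on the a priori known polynomial estimate of $n$, so that for any pair $(e,i)$ both endpoints of $e$ deterministically agree on whether $e\in G_i$. Two properties of this family are needed, both established via the Wulff-Nilsen conductance-based refinement of Karger's sampling analysis. First, with high probability every $G_i$ is itself an $\Omega(\phi)$-expander: the hypothesis that the minimum degree of $G$ is $\Theta(t^2\log n/\phi)$ yields expected minimum degree $\Omega(t\log n/\phi)$ in each $G_i$, which lies in the regime where every cut of $G_i$ is a $(1\pm\Theta(1))$-multiplicative approximation of the corresponding cut of $G$ --- so conductance $\Omega(\phi)$ survives the sampling. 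Second, for \emph{every} fault set $F\subseteq E$ with $|F|\le t$, at least $(1/2+\Omega(1))\ell$ of the subgraphs satisfy $F\cap G_i=\emptyset$: the single-subgraph avoidance probability is $(1-p)^t\ge e^{-\Theta(1)}>1/2$, so a Chernoff bound together with a union bound over the $\binom{|E|}{t}\le n^{O(t)}$ possible fault sets yields the required concentration at $\ell=\Theta(t\log n)$. The universal quantifier over $F$ absorbs the adversary's adaptivity, since it may pick $F$ after seeing the family.

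Given the family I would then schedule the $\ell$ broadcasts sequentially. In phase $i$ every node communicates only along edges of $G_i$ (discarding any message received on an edge outside $G_i$), and performs a standard flood: $s$ sends $m_0$ to its $G_i$-neighbors and every node forwards $m_0$ once, upon first receipt. Since $G_i$ is an $\Omega(\phi)$-expander this terminates in $O(\log n/\phi)$ rounds. Whenever $F\cap G_i=\emptyset$ the broadcast inside $G_i$ is entirely untouched by the adversary, so every node $v$ records the estimate $m_v^i=m_0$; for the minority of indices $i$ with $F\cap G_i\neq\emptyset$ the estimate may be arbitrary. Each node $v$ then outputs the plurality of $\{m_v^i\}_{i=1}^{\ell}$, which equals $m_0$ by the majority property established above. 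The total round complexity is $\ell\cdot O(\log n/\phi)=O(t\log^2 n/\phi)$, matching the claim.

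I expect the main obstacle to be the simultaneous balancing of the two properties of the sampled family: choosing $p$ small enough that $(1-p)^t>1/2$ (so that a strict majority of subgraphs avoid every $F$) yet large enough that Wulff-Nilsen's conductance-preservation bound delivers $\Omega(\phi)$-expansion in every $G_i$. The precise minimum-degree assumption $\Theta(t^2\log n/\phi)$ is exactly what reconciles these two constraints. A secondary technical point is derandomizing the sampling so that the family is locally known without any preliminary communication; this is standard but has to be compatible with the adaptive adversary, who may depend on the public seed.
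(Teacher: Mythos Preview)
Your high-level plan---sample $\Theta(t\log n)$ sparse subgraphs, flood in each, and take a majority---is sound, and the plurality decoding is in fact a \emph{simpler} alternative to what the paper does (the paper floods ``heard bundles'' that carry full path information and then accepts via a min-cut test on the collected paths; your majority rule sidesteps all of that). However, there is a genuine gap in the step you flag as ``standard.''

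The gap is the consistent local sampling. You want both endpoints of every edge $e$ to agree on whether $e\in G_i$ by evaluating a shared hash on $(\mathrm{id}(e),i)$. But in this model the nodes have no common random string: each node has only its own private coins (which the adversary sees), its incident edge identifiers, and an estimate of $n$. A seed ``depending only on the a priori known polynomial estimate of $n$'' is deterministic, so your Chernoff step collapses; and if one endpoint samples and \emph{tells} the other, the adversary can lie on every edge of $F$ and force $F\subseteq G_i$ for all $i$, destroying the majority property. The paper makes exactly this point and resolves it by \emph{directed} sampling: each node $v$ independently samples each incoming arc $(u,v)$ with probability $p=\Theta(1/t)$ using its own coins, so no agreement is needed. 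The price is that an undirected edge is present in both directions only with probability $p^2=\Theta(1/t^2)$, and it is the \emph{bidirected} part of $G_i$ to which Wulff-Nilsen's conductance preservation is applied; this is precisely why the minimum-degree hypothesis is $\Theta(t^2\log n/\phi)$ rather than $\Theta(t\log n/\phi)$. Your explanation that the $t^2$ arises from ``balancing'' the majority constraint $(1-p)^t>1/2$ against the expansion constraint is therefore off: with genuinely shared undirected sampling at rate $\Theta(1/t)$, minimum degree $\Theta(t\log n/\phi)$ would already suffice for both.

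If you replace your shared-seed sampling with the paper's directed sampling, your argument goes through: the probability that a fixed $F$ (now $2t$ arcs) is avoided by $G_i$ is $(1-p)^{2t}>1/2$ for small enough constant in $p=\Theta(1/t)$, the union bound over all $\le n^{O(t)}$ choices of $F$ still closes at $\ell=\Theta(t\log n)$, and each $G_i$ has (round-trip) diameter $O(\log n/\phi)$ via the bidirected subgraph. Your plurality decoding then yields the stated $O(t\log^2 n/\phi)$ bound without the paper's path-carrying bundles and min-cut acceptance phase.
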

To obtain this result, we employ the edge sampling technique by Karger~\cite{karger1999random}. The correctness arguments are based on Wulff-Nilsen \cite{wulff2017fully}, who provided an analysis of this technique for expander graphs with large minimum degree. Due to the challenges arise by the adversarial setting, the implementation of the sampling technique requires a somewhat larger minimum degree than that required by the original analysis of \cite{wulff2017fully}.

\paragraph{Road Map.} 
The broadcast algorithm against a single adversarial edge and the proof of \Cref{lem:broadcast} are given in \Cref{sec:broadcast-algo}. In \Cref{sec:t-faults} we consider multiple adversarial edges and prove \Cref{thm:t-broadcast}.
In \Cref{sec:expanders}, we show nearly optimal algorithms for expander graphs, providing \Cref{thm:expander-broadcast}.

\paragraph{Preliminaries.} For a subgraph $G' \subseteq G$ and nodes $u,v \in V(G')$, let $\pi(u,v,G')$ be the unique $u$-$v$ shortest path in $G'$ where shortest-path ties are decided in a consistent manner. 
For a path $P=[u_1,\ldots, u_k]$ and an edge $e=(u_k,v)$, let $P \circ e$ 
denote the path obtained by concatenating $e$ to $P$. 
Given a path $P=[u_1,\ldots, u_k]$ denote the sub-path from $u_i$ to $u_\ell$ by $P[u_i,u_{\ell}]$.
The asymptotic term $\widetilde{O}(\cdot)$ hides poly-logarithmic factors in the number of nodes $n$. 
Throughout, we use the following observation. 
\begin{observation}\label{obs:bounded-diam-f}
	Consider an $n$-node $D$-diameter graph $G=(V,E)$ and let $u,v$ be a pair of nodes that are connected in $G \setminus F$ for some $F \subseteq E$. It then holds that $\dist_{G \setminus F}(u,v)\leq 2(|F|+1)\cdot D+|F|$.
\end{observation}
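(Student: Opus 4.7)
Plan: The approach is a routing argument based on a BFS tree from $u$. Let $T$ be a BFS tree of $G$ rooted at $u$, so every vertex of $T$ lies at depth at most $D$. Set $F_T = T \cap F$, so $|F_T| \leq |F|$, and delete the edges of $F_T$ from $T$. The resulting forest has exactly $|F_T| + 1 \leq |F|+1$ connected components $T_0, T_1, \ldots$, each of which is a subtree of $T$. A key observation is that every such $T_i$ has diameter at most $2D$: given any two nodes of $T_i$, the unique $T_i$-path between them goes up to their lowest common ancestor in $T$ and then down, and since both endpoints lie at $T$-depth at most $D$, each half contributes at most $D$.

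Next I would define an auxiliary multigraph $H$ whose vertices are the components $T_0, T_1, \ldots$ and whose edges are the $G \setminus F$-edges joining distinct components. Since $u$ and $v$ are connected in $G \setminus F$, any $u$-$v$ walk in $G \setminus F$ projects (by recording the component of each visited node) to a walk in $H$ from the component of $u$ to the component of $v$, so these two components are $H$-connected. Because $H$ has at most $|F|+1$ vertices, there is a shortest $H$-path between them of length $k \leq |F|$.

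To finish, I would lift this $H$-path back to an actual $u$-$v$ walk in $G \setminus F$: inside each visited component $T_i$, traverse the unique tree-path of length at most $2D$ from the entry node to the exit node, and between consecutive components use the single $G \setminus F$-edge provided by the corresponding $H$-edge. Each tree segment uses only edges of $T \setminus F_T \subseteq G \setminus F$, and every jump lies in $G \setminus F$ by construction, so the entire lifted walk stays inside $G \setminus F$. The walk consists of at most $k+1 \leq |F|+1$ tree segments of length at most $2D$ and $k \leq |F|$ jump edges, giving total length at most $2(|F|+1)D + |F|$, exactly the claimed bound. I do not anticipate a real obstacle; the argument is essentially clean bookkeeping on the BFS-tree decomposition together with the fact that deleting $k$ tree edges creates $k+1$ connected components.
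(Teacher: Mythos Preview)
Your argument is correct and follows essentially the same approach as the paper: take a BFS tree, remove the edges of $F$ to obtain at most $|F|+1$ subtrees each of diameter at most $2D$, and stitch together tree segments with at most $|F|$ non-tree ``jump'' edges of $G\setminus F$. Your use of the auxiliary multigraph $H$ makes the bound on the number of jumps fully explicit, whereas the paper states more tersely that the shortest $u$--$v$ path in $G\setminus F$ can be ``transformed'' into such a path; the underlying idea is the same.
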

\begin{proof}
	Let $T$ be a BFS tree in $G$ rooted at some source $s$. The forest $T \setminus F$ contains at most $|F|+1$ trees of diameter $2D$. Then, the $u$-$v$ shortest path $P$ in $G \setminus F$ can be transformed into a path $P'$ containing at most $|F|$ edges of $P$, as well as, $|F|+1$ tree subpaths of the forest $T \setminus F$. Therefore, $|P'|\leq 2(|F|+1)\cdot D+|F|$ as desired.
\end{proof}

\section{Broadcast Algorithms against an Adversarial Edge} \label{sec:broadcast-algo}
\vspace{-5pt}
In this section, we prove \Cref{lem:broadcast}. 
We first assume, in \Cref{sec:broadcast-known} that the nodes have a linear estimate $c\cdot D$ on the diameter of the graph $D$, for some constant $c \geq 1$. A-priori, obtaining the diameter estimation seems to be just as hard as the broadcast task itself. In \Cref{sec:notknowing}, we then show how this assumption can be removed. 
Throughout, we assume that the message $m_0$ consists of a single bit. In order to send a $O(\log n)$ bit message, the presented algorithm is repeated for each of these bits (increasing the round complexity by a $O(\log n)$ factor). 

\subsection{Broadcast with a Known Diameter}\label{sec:broadcast-known}
We first describe the adversarial broadcast algorithm assuming that the nodes have a linear estimate $D'$ on the diameter $D$, where $D' \in [D, cD]$ for some constant $c\geq 1$. In \Cref{sec:notknowing}, we omit this assumption. 
The underlying objective of our broadcast algorithm is to exchange messages over reliable communication channels that avoid the adversarial edge $e'$. There are two types of challenges: making sure that all the nodes first \emph{receive} the message $m_0$, and making sure that all nodes correctly \emph{distinguish} between the true bit and the false one. Alg. $\BBalgo(D')$ has two phases, a \emph{flooding} phase and an \emph{acceptance} phase, which at the high level, handles each of these challenges respectively. 

The first phase propagates the messages over an ordered collection of $G$-subgraphs $\mathcal{G}=\{G_1,\ldots, G_\ell\}$ where each $G_i \subseteq G$ has several desired properties. Specifically, $\mathcal{G}$ is an 
$(L,k)$ covering family for $L=O(D')=O(D)$ and $k=1$ (see Def. \ref{def:covering-Lk}). An important parameter of $\mathcal{G}$ which determines the complexity of the algorithm is denoted as the \emph{width}.
\begin{definition}[Width of Covering Family]
	The \emph{width} of a collection of subgraphs $\mathcal{G}=\{G_1,\ldots, G_k\}$, denoted by $\omega(\mathcal{G})$, is the maximal number of subgraphs avoiding a fixed edge in $G$. That is,
	$$\omega(\mathcal{G})=\max_{e \in G}|\{G_i \in \mathcal{G}~\mid~ e\notin G_i\}|~.$$
\end{definition}
The broadcast algorithm starts by applying a $0$-round procedure that provides each node in the graph with a local knowledge of an $(O(D'),1)$ covering family with bounded width. By \cite{RPC21}, we have the following (see \Cref{app:known} for the proof):
\begin{fact}[\cite{RPC21}]\label{fc:rpc}
	Given a $3$ edge-connected graph $G$, there exists a $0$-round algorithm that allows all nodes to locally know an $(L,1)$ covering family $\mathcal{G}=\{G_1,\ldots, G_\ell\}$ for $L=O(D')$, such that $\ell=\widetilde{O}((D')^2)$. The width of $\mathcal{G}$ is $\widetilde{O}(D')$.
\end{fact}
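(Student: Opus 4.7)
The plan is to invoke the derandomized fault-tolerant sampling construction from \cite{RPC21}, and verify that it delivers all three required properties (covering, small size, bounded width) with local knowledge. I would build the family from a pseudorandom hash family $\{h_i : E \to \{0,1\}\}_{i \in [\ell]}$ where each $h_i$ is evaluable from the edge identifier and the index $i$ alone, and where the collection simulates independent ``keep each edge with probability $p = 1 - 1/\Theta(L)$'' sampling. Setting $G_i := (V, \{e \in E : h_i(e) = 1\})$ makes the local knowledge property immediate: given $i$, the endpoints of any edge $e=(u,v)$ can evaluate $h_i(e)$ from purely local information (edge ID, node IDs, the public estimate of $n$), so the promised $0$-round algorithm is just the trivial ``do nothing'' protocol.

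For the covering property I would first analyze the randomized prototype. Fix any triple $\langle u,v,e'\rangle$ and any $L$-length path $P \subseteq G \setminus \{e'\}$; a random sample $G_i$ with keep-probability $p$ satisfies $P \subseteq G_i$ and $e' \notin G_i$ with probability $p^{|P|}(1-p) = \Omega(1/L)$. Since the number of distinct covering events is bounded by $n^{O(L)}$, taking $\ell = \widetilde{O}(L^2)$ independent samples suffices for a union bound, and the derandomization in \cite{RPC21} replaces the random seeds by a PRG fooling the underlying $(L+1)$-junta events while retaining $\ell = \widetilde{O}(L^2) = \widetilde{O}((D')^2)$. For the width, the very same distribution gives that any fixed edge $e$ is excluded from $G_i$ for an expected $\ell(1-p) = \widetilde{O}(L)$ indices $i$; standard concentration plus PRG fooling of this $1$-junta event turn the expectation into the worst-case bound $\omega(\mathcal{G}) \leq \widetilde{O}(D')$.

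The main obstacle is not structural but bookkeeping: one must identify the correct statement in \cite{RPC21}, specialize it to $k = 1$ and $L = O(D')$, and check that a single derandomization simultaneously fools both the $(L+1)$-junta events responsible for covering and the $1$-junta events responsible for the width bound. Since both event classes depend on only a bounded number of coordinates of the underlying random string, a combined bounded-independence plus small-bias construction with polylogarithmic seed length handles both requirements at once, and the quantitative parameters then match the claim. Finally, I would note that the nodes never need to ``know'' the family explicitly; the only operation ever performed on $\mathcal{G}$ in subsequent sections is to test, for an incident edge and a given index, whether the edge belongs to $G_i$, and this is precisely what the hash evaluation provides.
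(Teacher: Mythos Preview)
Your high-level plan is right: cite \cite{RPC21}, specialize to $L=O(D')$ and $k=1$, and verify covering, cardinality, width, and local knowledge. The paper's proof follows exactly this outline. However, your description of the \cite{RPC21} construction is not quite what that paper actually does, and this shows up most clearly in your width argument.

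The construction in \cite{RPC21} is not a PRG simulating i.i.d.\ edge sampling; it is a \emph{hit-miss hash family} $H=\{h:[m]\to[q]\}$ with $|H|,q=O(L\,\mathrm{polylog}\,m)$. For each pair $(h,i)\in H\times[q]$ one defines $G_{h,i}=\{e:h(e)\neq i\}$, so $\ell=|H|\cdot q=\widetilde{O}(L^2)$. The width bound is then \emph{structural}, not probabilistic: a fixed edge $e$ is absent from $G_{h,i}$ iff $i=h(e)$, which happens for exactly one value of $i$ per hash function, hence $\omega(\mathcal{G})=|H|=\widetilde{O}(L)$ on the nose. Your proposed route---expected number of exclusions $\ell(1-p)$ plus concentration plus ``PRG fooling of this $1$-junta event''---is muddled: the event ``$e$ is excluded from at most $\widetilde{O}(L)$ of the $\ell$ subgraphs'' depends on all $\ell$ sampling bits for $e$, so it is an $\ell$-junta, not a $1$-junta, and it is not covered by the same $(L{+}1)$-junta PRG you invoke for the covering property. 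This is not fatal (one can derandomize such symmetric threshold events too), but the hit-miss formulation avoids the issue entirely.

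One further point the paper makes explicit and you leave implicit: since nodes do not know $G$, they compute the hit-miss family on the complete graph over the ID space and then intersect with $G$; the covering property survives because any $L$-length path in $G\setminus\{e'\}$ is also a path in the complete graph. Your ``evaluate $h_i$ on the edge ID'' handles this correctly in effect, but it is worth stating that the construction is oblivious to $G$.
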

\def\APPENDPRFACT{
	\begin{proof}[Proof of \Cref{fc:rpc}]
		The proof follows by using the construction of $(L,f)$ Replacement Path Cover (RPC) $\mathcal{G}$ by \cite{RPC21} taking $L=O(D')$ and $f=1$. Assuming that node IDs are in $[1,\poly(n)]$, each node can employ this construction locally and compute a collection of subgraphs $\mathcal{G}'=\{G'_1,\ldots, G'_k\}$ where each $G'_i$ is a subgraph of the complete graph $G^*$ on all nodes with IDs in $[1,\poly(n)]$. The set $\mathcal{G}=\{G_1,\ldots, G_k\}$ is defined by $G_i=E(G'_i) \cap E(G)$. We next claim that since  $\mathcal{G}'$ is an $(L,f)$ RPC for $G^*$ it also holds that $\mathcal{G}$ is an $(L,f)$ RPC for $G$. To see this, consider any $L$-length $u$-$v$ path $P$ in $G \setminus \{e\}$. Clearly both $e$ and $P$ are in $G^*$. Therefore there exists a subgraph $G'_i$ containing $P$ and avoiding $e$. It then holds that $G_i =G \cap G'_i$ contains $P$ and avoids $e$ as well. 
		
		As for the width property, the construction of \cite{RPC21} builds the covering family using a hit-miss hash family $H=\{h:[m] \to [q] \}$ where $m$ is the number of edges, and $|H|,q=O(L \poly \log m )$.  Then, for each $h\in H$ and $i\in q$ the subgraph $G_{h,i}$  consists of all edges $e'$ such that $h(e') \neq i$. Therefore, an edge $e$ does not appear only in the subgraphs of the form $G_{h,h(e)}$ and there are $|H|=O(L \poly \log m)$ such subgraphs.

		Finally, the family of graphs $\mathcal{G}$ is locally known since in the construction of \cite{RPC21}, each subgraph is identified with a hash function $h:[m] \to [q]$, in some family of hash functions $\mathcal{H}$, and an index $i \in [q]$. The subgraphs to which an edge $e$ belongs depend only on the value of $h(ID(e))$. Since the family of hash functions $\mathcal{H}$  can be locally computed by each node, we have that given an edge identifier, every node can locally compute the indices of the subgraphs in $\mathcal{G}$ to which $e$ belongs. Note that since the nodes do not know $G$, they also do not know the graph $G_i$ but rather the corresponding graph $G'_i$, and thus we only require them to know the index $i$ of the subgraph. 
	\end{proof}
}

In the following, we present a broadcast algorithm whose time complexity depends on several parameters of the covering family. This will establish the argument assuming that all nodes know a linear bound on the diameter $D' \geq D$. 
\begin{theorem} \label{lem:main}
Given is a $3$ edge-connected graph $G$ of diameter $D$, where all nodes know a constant factor upper bound on $D$, denoted as $D'$. Assuming that the nodes locally know an $(L,1)$ covering family $\mathcal{G}$ for $L=7D'$, there exists a deterministic broadcast algorithm against an adversarial edge with $O(\omega(\mathcal{G})\cdot L+|\mathcal{G}|)$ rounds.
\end{theorem}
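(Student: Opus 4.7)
I would realize $\BBalgo(D')$ as a pipelined \emph{flooding phase} followed by an \emph{acceptance phase}. In the flooding phase, for every $G_i \in \mathcal{G}$ we propagate from $s$ a tagged token $(m_0, i)$ along the edges of $G_i$ in a BFS fashion, pipelined across all $\ell = |\mathcal{G}|$ subgraphs; since $\mathcal{G}$ is locally known, a node receiving $(b, i)$ over an edge $e$ only accepts and forwards it after verifying $e \in G_i$. The acceptance phase is a short additional communication phase in which the nodes combine their collections of labeled bits to decide $m_0$, filtering out the adversary's corrupted copies.

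The main technical point is to pipeline the $\ell$ concurrent floods within $O(\omega(\mathcal{G}) \cdot L + |\mathcal{G}|)$ rounds, beating the naive sequential $\Theta(\ell \cdot L)$ bound. My plan is a staggered schedule: $s$ injects flood $i$ at round $\sigma(i) = i - 1$, and every intermediate node maintains per-outgoing-edge queues of pending tokens, serialized in index order whenever several subgraphs compete for the same directed edge in the same round. The hard part will be bounding the worst-case cumulative queueing delay incurred by any single flood by $O(\omega(\mathcal{G}) \cdot L)$. I would trace flood $i$ along an $s$-to-$v$ path $P$ of length at most $L$ in the BFS tree of $G_i$ and charge every delay suffered by token $i$ at some edge $e \in P$ to the subgraph $G_j$ that blocks it; because this forces the pair $(e,G_j)$ to differentiate $G_j$'s BFS from $G_i$'s, each such pair is charged at most once, and by the width bound each edge of $P$ is missing from at most $\omega(\mathcal{G})$ subgraphs. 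Summing over the $L$ edges of $P$ yields total delay $O(\omega(\mathcal{G}) \cdot L)$, so flood $i$ delivers $(m_0, i)$ to every $G_i$-reachable node by round $\sigma(i) + L + O(\omega(\mathcal{G}) \cdot L)$, and the whole phase finishes in $O(\omega(\mathcal{G}) \cdot L + |\mathcal{G}|)$ rounds.

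For correctness I would combine the covering property with \Cref{obs:bounded-diam-f}. The observation, applied with $|F| = 1$ and $D' \geq D$, yields an $s$-$v$ path of length at most $4D + 1 \leq 7D' = L$ in $G \setminus \{e'\}$; the $(L,1)$-covering property then places this path inside a \emph{clean} subgraph $G_{i^*}$ with $e' \notin G_{i^*}$, so $v$ is guaranteed to receive at least one uncorrupted tagged bit $(m_0, i^*)$. In the acceptance phase each node scans the edges $e$ visible in its local view of $\mathcal{G}$ and computes, for each such $e$, the set of bits it received from subgraphs with $e \notin G_i$; the choice $e = e'$ always witnesses a unanimous value $m_0$. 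The remaining step is to extract $m_0$ unambiguously, which I would achieve via a short additional round where every node rebroadcasts its candidate bit over the same pipelined schedule and then picks the value supported by at least one edge-consistency test whose "avoiding" set contains a clean subgraph; existence of such a subgraph at every node is guaranteed by the $(L,1)$-covering property. Both subphases share the same schedule, keeping the overall round complexity within $O(\omega(\mathcal{G}) \cdot L + |\mathcal{G}|)$.
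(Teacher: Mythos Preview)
Your flooding phase and the high-level pipeline schedule match the paper's, and your intuition that the total queueing delay can be charged to pairs $(e,G_j)$ with $e\in P$ and $e\notin G_j$ is the right one (the paper makes this precise via an injection into the set $\mathcal{N}=\{(k,e):e\in P,\ e\notin G_k\}$; your sentence ``this forces the pair $(e,G_j)$ to differentiate $G_j$'s BFS from $G_i$'s'' is too vague to count as the argument, but the target bound is correct).

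The real gap is your acceptance phase. Your proposal has each node scan edges $e$ and look at the bits it received from subgraphs with $e\notin G_i$, noting that the choice $e=e'$ yields a unanimous $m_0$. But a node $v$ that is not incident to $e'$ cannot perform this test: it does not know $e'$, and it does not know the edge set of $G$ beyond its own incident edges. Your fallback (``rebroadcast a candidate bit \ldots pick the value supported by at least one edge-consistency test whose avoiding set contains a clean subgraph'') begs the question, since $v$ has no way to recognize which subgraph is clean; the adversary can make a non-adversarial edge $e''$ look just as ``consistent'' by injecting the false bit into every $G_i$ that contains $e'$ and avoids $e''$.

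The paper's acceptance phase avoids this entirely by a second, short flood of $accept(m_0)$ from $s$, with the rule: $u$ accepts $m'$ upon receiving $accept(m')$ from a neighbor $w$ \emph{only if} it stored some $(m',i)$ in Phase~1 with $(u,w)\notin G_i$. The cross-check is local and uses only the edge over which the accept arrived. Soundness is a one-line first-accepter argument: if $u$ is the first to accept a false $m'$, the edge $(w,u)$ must be adversarial, hence $(w,u)\notin G_i$ means $G_i$ is clean, contradicting that $u$ stored a false $(m',i)$. Completeness needs the $(L,1)$-covering property applied not to $e'$ but to each tree edge $(u,p(u))$ of a BFS tree in $G\setminus\{e'\}$: for every $u$ there is a $G_j$ containing a short $s$--$u$ path that avoids \emph{both} $e'$ and $(u,p(u))$, so $u$ stored $(m_0,j)$ with $(u,p(u))\notin G_j$ and accepts when $p(u)$'s $accept(m_0)$ arrives. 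Your plan only invokes the covering property for the single edge $e'$, which is not enough to drive this acceptance mechanism.
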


\paragraph{Broadcast with a known diameter (Proof of \Cref{lem:main}).}
Given a locally known $(L,1)$ covering family $\mathcal{G}=\{G_1,\ldots, G_\ell\}$ for $L=7D'$, the broadcast algorithm has two phases. The first phase consists of $O(L \cdot \omega(\mathcal{G})+|\mathcal{G}|)$ rounds, and the second phase has $O(L)$ rounds. 

\paragraph{Phase $1$: Flooding phase.} The flooding phase consists of $\ell=|\mathcal{G}|$ sub-algorithms $A_1,\ldots, A_\ell$, where in each algorithm $A_i$, the nodes propagate messages on the underlying subgraph $G_i \in \mathcal{G}$ that is defined locally by the nodes. The algorithm runs the sub-algorithms $A_1,\ldots, A_\ell$ in a pipeline manner, where in the $i$'th round of sub-algorithm $A_i$, the source node $s$ sends the message $(m_0,i)$ to all its neighbors. For every $i\in \{1,\ldots, \ell\}$, a node $u \in V$ that received a message $(m',i)$ from a neighbor $w$, stores the message $(m',i)$ and sends it to all its neighbors if the following conditions hold: (i) $(w,u) \in G_i$, and (ii) $u$ did not receive a message $(m',i)$ in a prior round\footnote{If it receives several $(m',i)$ messages in the same round, it will be considered as only one.}. For a node $u$ and messages $(m_1,i_1), \ldots , (m_k,i_k)$ waiting to be sent in some round $\tau$, $u$ sends the messages according to the order of the iterations $i_1, \ldots , i_k$ (note that potentially $i_j=i_{j+1}$, and there might be at most two messages with index $i_j$, namely, $(0,i_j)$ and $(1,i_j)$).

\paragraph{Phase $2$: Acceptance phase.} The second phase consists of $O(L)$ rounds, in which $accept$ messages are sent from the source $s$ to all the nodes in the graph, as follows. In the first round, the source node $s$ sends an $accept(m_0)$ message to all its neighbors. Then every other node $u \in V$ accepts a message $m'$ as its final output and sends an $accept(m')$ message to all neighbors, provided that the following conditions hold: 
\begin{itemize}
\item[(i)] there exists $i \in \{1,\ldots, \ell\}$, such that $u$ stored a message $(m',i)$ in Phase 1; 
\item[(ii)] $u$ received an $accept(m')$ message in Phase 2 from a neighbor $w_2$, such that $(u,w_2) \notin G_i$. 
\end{itemize}
Since $\mathcal{G}$ is locally known, $u$ can locally verify that $(u,w_2) \notin G_i$.  
This completes the description of the algorithm. 

\paragraph{Correctness.}
We next prove the correctness of the algorithm. 
Denote the adversarial edge by $e'=(v_1,v_2)$.
We begin with showing that no node accepts a wrong message.
\begin{claim} \label{clm:no-false}
	No node $u \in V$ accepts a false message $m' \neq m_0$.
\end{claim}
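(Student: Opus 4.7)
The plan is to argue by contradiction via a time-minimal offender. Suppose some node accepts a false $m' \neq m_0$, and let $u^*$ be one of the first nodes (in order of Phase 2 rounds) to do so. By the Phase 2 acceptance rule there is an index $i$ such that (i) $u^*$ stored $(m', i)$ in Phase 1, and (ii) in Phase 2, $u^*$ received $accept(m')$ from a neighbor $w_2$ with $(u^*, w_2) \notin G_i$.

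First I would show that $(u^*, w_2)$ must be the adversarial edge $e'$. The source $s$ only ever sends $accept(m_0)$, so if $w_2 = s$ then the offending $accept(m')$ message was injected on the edge $(s, u^*)$, which must therefore be $e'$. Otherwise, by the minimality of $u^*$, node $w_2$ had not yet accepted $m'$ at the round it transmitted $accept(m')$ to $u^*$, so $w_2$ did not generate this transmission honestly; again, the edge $(u^*, w_2)$ must be $e'$. Combining with condition (ii) gives $e' \notin G_i$.

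Next I would analyze Phase 1 restricted to $G_i$. The flooding rule only stores and forwards an index-$i$ message received along an edge of $G_i$; since $e' \notin G_i$ and the adversary controls only $e'$, every index-$i$ message stored at any node was generated honestly (by $s$, which broadcasts only $(m_0, i)$) and relayed only through honest edges. A straightforward induction on the round in which a node first stores an index-$i$ message then shows that the only index-$i$ message ever stored anywhere is $(m_0, i)$. In particular, $u^*$ could not have stored $(m', i)$, contradicting condition (i).

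The main subtlety is the ``first-to-accept'' bookkeeping: I must use the minimality of $u^*$ to rule out $w_2$ having honestly sent $accept(m')$, and separately handle the corner case $w_2 = s$. Once these are in place, the remainder is a clean interplay between the two conditions of Phase 2 and the $G_i$-only rule of Phase 1---the algorithm forces any corruption of an $accept$ message onto an edge absent from $G_i$, and flooding on $G_i$ is then immune to the adversary.
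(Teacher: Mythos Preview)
Your proposal is correct and follows essentially the same approach as the paper's proof: both take a time-minimal false acceptor, use minimality to force the offending $accept(m')$ edge to be the adversarial edge $e'$, deduce $e'\notin G_i$ from condition~(ii), and then argue that Phase~1 flooding restricted to $G_i$ is adversary-free so no node could have stored $(m',i)$. Your explicit handling of the corner case $w_2=s$ and the inductive phrasing of the Phase~1 argument are slightly more detailed than the paper's presentation, but the underlying logic is identical.
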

\begin{proof}
	Assume towards contradiction there exists at least one node which accepts a message $m' \neq m_0$ during Phase 2. Let $u$ be the \emph{first} node that accepts $m'$. By first we mean that any other node that accepted $m'$, accepted the message in a later round than $u$, breaking ties arbitrarily.
	Hence, by Phase 2, $u$ received an $accept(m')$ message from a neighbor $w$, and stored a message $(m',i)$ in Phase 1, where $(u,w) \notin G_i$. Since $u$ is the first node that accepts $m'$, the node $w$ did not accept $m'$ in the previous round. We conclude that the edge $(w,u)$ is the adversarial edge, and all other edges are reliable. Because the adversarial edge $(w,u)$ was not included in the $i$'th graph $G_i$, all messages of the form $(m',i)$ sent by the adversarial edge in Phase 1 are ignored. Since all other edges are reliable, all nodes ignored the false message $(m',i)$ during the first phase (in case they recieved it), in contradiction to the assumption that $u$ stored $(m',i)$ in Phase 1.
\end{proof}
From \Cref{clm:no-false} we can also deduce that in the case where the adversarial edge initiates a false broadcast, it will not be accepted by any of the nodes.
\begin{corollary} \label{cor:no-accept}
In case $e'=(v_1,v_2)$ initiates the broadcast, no node in $G$ will accept any message.
\end{corollary}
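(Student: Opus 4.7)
The plan is to reuse the argument of \Cref{clm:no-false} almost verbatim, noting that the proof of that claim never actually relied on $s$ successfully broadcasting $m_0$; it only used the fact that the offending message $m'$ was never legitimately injected into the network by $s$. In the scenario of \Cref{cor:no-accept}, $s$ initiates nothing at all, so every possible message plays the role of a ``false'' one, and the same reasoning should forbid any node from accepting.

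First, I would suppose for contradiction that at least one node accepts something during Phase 2, and pick $u$ to be the first such node across all candidate messages, breaking ties arbitrarily. By the Phase 2 acceptance rule, there exist a message $m'$, an index $i\in\{1,\dots,\ell\}$, and a neighbor $w$ of $u$ such that $u$ stored $(m',i)$ in Phase 1 and received $accept(m')$ from $w$ in Phase 2, with $(u,w)\notin G_i$.

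Next I would identify the adversarial edge. The only legitimate ways $w$ could have transmitted $accept(m')$ are (a) $w=s$ sending an $accept$ message in the very first round of Phase 2, or (b) $w$ itself accepting $m'$ in an earlier round of Phase 2. Case (a) is ruled out because $s$ does not initiate and therefore sends nothing in Phase 2, and case (b) contradicts the minimality of $u$. Hence the $accept(m')$ message must have been injected by the adversary on the edge $(u,w)$, which forces $(u,w)=e'$ and, combined with $(u,w)\notin G_i$, yields $e'\notin G_i$. Now exactly as in \Cref{clm:no-false}, since $e'\notin G_i$ the storage rule of Phase 1 discards every copy of $(m',i)$ the adversary pushes through $e'$, while on the reliable edges no $(m',i)$ can be created from scratch since $s$ sends no Phase 1 message either. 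Thus no node ever stores $(m',i)$, contradicting condition (i) for $u$.

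The main obstacle is essentially just careful bookkeeping: unlike in \Cref{clm:no-false}, where ``first to accept $m'$'' sufficed for a fixed offending value, here I must define ``first accepting node'' across all possible messages simultaneously, so that the minimality argument genuinely rules out option (b). Once this minor adjustment is made, the rest of the proof is a direct translation of the reasoning already developed for \Cref{clm:no-false}.
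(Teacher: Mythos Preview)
Your proposal is correct and essentially mirrors the paper's reasoning: both arguments hinge on the facts that any $accept$ message must arrive over the adversarial edge $e'$ (since $s$ sends nothing) and that this is incompatible with the Phase~1 storage of $(m',i)$ under the constraint $(u,w)\notin G_i$. The paper states the latter slightly more directly---every stored $(m,i)$ must have $e'\in G_i$, so neither $v_1$ nor $v_2$ can satisfy the acceptance rule---while you derive the contrapositive via the first-to-accept framing of \Cref{clm:no-false}; the content is the same.
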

\begin{proof}
Since no node initiated the broadcast, in the second phase the only nodes that can receive $accept(m)$ messages are $v_1$ and $v_2$ over the adversarial edge $e'$. In addition, since $e'$ also initiates the first phase, for every node storing a message $(m,i)$ in Phase 1 it must hold that $e'\in G_i$. Hence, we can conclude that neither $v_1$ nor $v_2$ accepts any of the false messages. Consequently, no node in $V \setminus \{v_1,v_2\}$ receives an $accepts(m)$ message for any $m$, as required.
\end{proof}
So far, we have shown that if a node $v$ accepts a message, it must be the correct one. It remains to show that each node indeed accepts a message during the second phase. 
Towards that goal, we will show that the collection of $\ell$ sub-algorithms executed in Phase 1 can be simulated in $O(\omega(\mathcal{G}) \cdot L+|\mathcal{G}|)$ rounds. This argument holds regardless of the power of the adversary.
\begin{lemma} \label{lem:pipeline}
Consider an $(L,1)$ covering family $\mathcal{G}=\{G_1,\ldots, G_\ell\}$ for $G$ that is locally known by all the nodes. For a fixed node $v$, an edge $e$, and an $L$-length $s$-$v$ path $P \subseteq G \setminus \{e,e'\}$, let $G_{i} \in \mathcal{G}$ be the subgraph containing $P$ where $e \notin G_i$. Then, $v$ receives the message $(m_0,i)$ in Phase 1 within $O(L \cdot \omega(\mathcal{G})+|\mathcal{G}|)$ rounds. 
	
\end{lemma}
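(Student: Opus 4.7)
}
Let $P=(s=u_0,u_1,\ldots,u_L=v)$ denote the given $s$-$v$ path in $G_i\setminus\{e,e'\}$, and let $\tau_j$ be the round in which $u_j$ first stores $(m_0,i)$, equivalently the round in which $u_{j-1}$ transmits $(m_0,i)$ on the (non-adversarial, $G_i$-internal) edge $(u_{j-1},u_j)$. Since the source $s$ broadcasts $(m_0,i)$ in round $i\leq |\mathcal{G}|$, we have $\tau_1\leq i+O(1)$; thus it suffices to show $\tau_L-\tau_1=O(L\cdot \omega(\mathcal{G}))$ to obtain the claimed $O(L\cdot\omega(\mathcal{G})+|\mathcal{G}|)$ bound.

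Define the delay at $u_j$ by $d_j=\tau_{j+1}-\tau_j-1\geq 0$. Because the algorithm services each queue in order of increasing index (with ties broken by message value), $d_j$ equals the number of strictly higher-priority messages $(m'',i'')$ — those with $i''<i$, or $i''=i$ and $m''\neq m_0$ — that $u_j$ transmits on the edge $(u_j,u_{j+1})$ during the window $[\tau_j,\tau_{j+1})$. Summing telescopes to $\tau_L\leq i+L+\sum_{j=1}^{L-1}d_j$, so the problem reduces to establishing $\sum_j d_j=O(L\cdot\omega(\mathcal{G}))$.

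My plan is a charging argument driven by the width. Partition the strictly higher-priority indices $i''<i$ into \emph{good} indices, for which $P\subseteq G_{i''}$, and \emph{bad} indices, for which at least one edge of $P$ is absent from $G_{i''}$. By the width definition and a union bound over the $L$ edges of $P$, the number of bad indices is at most $L\cdot\omega(\mathcal{G})$. For every good $i''$, I will show by induction on $j$ that $(m_0,i'')$ is already transmitted by $u_j$ on $(u_j,u_{j+1})$ strictly before round $\tau_j$, so good indices contribute $0$ to $\sum_j d_j$. The key fact driving the induction is that $(m_0,i'')$ is released at the source $i-i''\geq 1$ rounds earlier than $(m_0,i)$ and has strictly higher queue priority than $(m_0,i)$ at every node they share; therefore any interfering message that delays $(m_0,i'')$ at a queue on $P$ also delays $(m_0,i)$ at that same node by at least as much, so the head-start at $s$ cannot be eroded along $P$. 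Bad indices then account for the remaining delay, and their cumulative contribution to $\sum_j d_j$ is bounded by $O(L\cdot\omega(\mathcal{G}))$ via a per-edge amortized accounting that exploits the disjointness and monotonicity of the windows $\{[\tau_j,\tau_{j+1})\}_j$.

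The main obstacle I anticipate is making the head-start preservation rigorous: it seems to require a simultaneous induction on the pair $(i'',j)$ so as not to appeal circularly to the very bound we are proving, and it must handle the case where a good index is itself delayed by a cascading chain of bad messages. A secondary subtlety is the bad-index accounting, since the naive bound ``each of $O(L\omega)$ bad messages can be transmitted at each of $L$ nodes on $P$'' would yield $O(L^2\omega)$; tightening this to $O(L\omega)$ requires the observation that once the wave of a bad message reaches a missing $P$-edge it ceases to propagate forward along $P$, so each bad message either contributes nothing (its first $P$-interaction occurs before the corresponding $\tau_j$) or contributes only at the restricted set of $P$-positions it actually reaches via detours, and each such transmission lies in at most one window $[\tau_j,\tau_{j+1})$.
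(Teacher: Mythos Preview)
Your overall architecture matches the paper's: bound the total queuing delay $\sum_j d_j$ along $P$ and charge it to the at most $L\cdot\omega(\mathcal{G})$ pairs $(k,e)$ with $e\in P\setminus G_k$. But the central step of your plan is wrong, not merely incomplete.

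You claim that for every good index $i''$ (meaning $P\subseteq G_{i''}$), the message $(m_0,i'')$ is transmitted by $u_j$ \emph{strictly before round $\tau_j$}, and you justify this by saying ``any interfering message that delays $(m_0,i'')$ at a queue on $P$ also delays $(m_0,i)$ at that same node by at least as much''. Both the claim and the justification are false. Consider a node $u_j$ at which a large batch of low-index (bad) messages arrives via detours just before $(m_0,i'')$ does. Those messages sit ahead of $(m_0,i'')$ and delay it heavily. By the time $(m_0,i)$ arrives---possibly many rounds later---a large fraction of that batch has already been flushed, so $(m_0,i)$ experiences a \emph{smaller} queue delay at $u_j$ than $(m_0,i'')$ does. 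In particular $(m_0,i'')$ can be transmitted well after $\tau_j$ (though of course still before $(m_0,i)$ is transmitted). So good-index messages \emph{do} land in the windows $[\tau_j,\tau_{j+1})$ and contribute to $d_j$; you cannot zero them out.

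The paper fixes exactly this by an exchange argument. It does not try to show that good messages avoid the window. Instead it defines, for each $j$, the set $I_j$ of messages $(m_0,k)$ with $k<i$ that $u_j$ sent before $(m_0,i)$ but did \emph{not} receive from $u_{j-1}$ prior to $r_j$; it then proves the non-obvious inequality $|Q_j|\le |I_j|$ (your $d_j$ is $|Q_j|$), by observing that every message in $Q_j$ that \emph{was} received early from $u_{j-1}$ is ``paid for'' by some message of smaller index that was sent before $r_j$ but came from off-path. Only after this reduction does the good/bad dichotomy work: one shows $(v_j,k)\in I_j$ forces a missing edge of $G_k$ on the prefix $P[u_0,u_j]$, and then builds an injection from $\bigcup_j I_j$ into the set of pairs $(k,e)$ with $e\in P\setminus G_k$. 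This injection is also what makes your bad-index accounting tight---your current sketch (``ceases to propagate past a missing edge'', ``each transmission lies in one window'') does not yield $O(L\omega)$ without it, since the same bad index can legitimately delay at several positions along $P$, one per missing edge on the relevant prefix.
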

We note that for $D' \geq D$, by \Cref{obs:bounded-diam-f} taking $L=7D'$ yields that for every node $v$ and edge $e$, it holds that $\dist_{G \setminus \{e,e'\}}(s,v) \leq L$. Hence, by the properties of the covering family $\mathcal{G}$ (Def. \ref{def:covering-Lk}), for every node $v$ and an edge $e$ there exists an $L$-length $s$-$v$ path $P\subseteq G \setminus \{e,e'\}$ and a subgraph $G_i$ that contains $P$ and avoids $e$. The proof of \Cref{lem:pipeline} is one of the most technical parts in this paper. Whereas pipeline is a very common technique, especially in the context of broadcast algorithms, our implementation of it is quite nontrivial. Unfortunately, since our adversary has full knowledge of the randomness of the nodes, it is unclear how to apply the random delay approach of ~\cite{Ghaffari15,leighton1994packet} in our setting. We next show that our pipeline approach works well due to the bounded width of the covering family.

\paragraph{Proof of \Cref{lem:pipeline}.}
Let $P=(s=v_0, \ldots ,v_\eta=v)$ be an $s$-$v$ path in $G_i \setminus \{e'\}$, where $\eta \leq L$ and $e \notin G_i$. 
For simplicity, we consider the case where the only message propagated during the phase is $m_0$. The general case introduces a factor of 2 in the round complexity. This holds since there could be at most two messages of the form $(0, i)$ and $(1, i)$. We also assume, without loss of generality, that each node $v_j$ receives the message $(m_0, i)$ for the \emph{first} time from $v_{j-1}$. If $v_j$ received $(m_0, i)$ for the first time from a different neighbor in an earlier round, the time it sends the message can only decrease.

In order to show that $v_{\eta}=v$ receives the message $(m_0,i)$ within $O(L \cdot \omega(\mathcal{G})+|\mathcal{G}|)$ rounds, it is enough to bound the total number of rounds the message $(m_0,i)$ spent in the \emph{queues} of the nodes of $P$, waiting to be sent. That is, for every node $v_j \in P$, let $r_j$ be the round in which $v_j$ received the message $(m_0,i)$ for the first time, and let $s_j$ be the round in which $v_j$ sent the message $(m_0,i)$. 
In order to prove \Cref{lem:pipeline}, our goal is to bound the quantity 
\begin{align}
	T = \sum_{j=1}^{\eta-1}(s_j-r_j). \label{eq:def-bound}
\end{align}
For every $k<i$ we denote the set of edges from $P$ that are not included in the subgraph $G_k$ by $$N_k=\{(v_{j-1},v_j) \in P~\mid~ (v_{j-1},v_j) \notin G_k\},$$ and define $\mathcal{N} = \{(k,e) ~\mid~ e \in N_k, k \in \{1,\ldots, i-1\}\}.$ By the definition of the width property, it holds that: 
\begin{align}
	|\mathcal{N}|=\sum_{k=1}^{i-1} |N_k| \leq \eta \cdot \omega(\mathcal{G}) =O(\omega(\mathcal{G}) \cdot L). \label{eq:boundN}
\end{align}
By \Cref{eq:def-bound,eq:boundN} it follows that in order to prove \Cref{lem:pipeline} its enough to show $T \leq |\mathcal{N}|$.
For every node $v_j \in P$, let $Q_j$ be the set of messages $(m_0,k)$ that $v_j$ sent between rounds $r_j$ and round $s_j$. By definition, $|Q_j|=(s_j-r_j)$, and  $T= \sum_{j=1}^{\eta-1}|Q_j|$. 
We next show that
$\sum_{j=1}^{\eta-1}|Q_j| \leq |\mathcal{N}|.$ This is shown in two steps. First we define a set $I_j$ consisting of certain $(m_0,k)$ messages such that $|Q_j| \leq |I_j|$, for every $j \in \{1,\ldots, \eta-1\}$. Then, we show that $\sum_{j=1}^{\eta-1}|I_j| \leq |\mathcal{N}|$. 

\paragraph{Step one.}
For every node $v_j\in P$, let $I_j$ be the set of messages $(m_0,k)$ satisfying the following three properties : (1) $k < i$, (2) $v_j$ sent the message $(m_0,k)$ before sending the message $(m_0,i)$ in round $s_j$, and (3) $v_j$ did not receive the message $(m_0,k)$ from $v_{j-1}$ before receiving the message $(m_0,i)$ in round $r_j$. In other words, the set $I_j$ includes messages received by $v_j$  with a graph index at most $(i-1)$, that are either received from $v_{j-1}$ between round $r_j$ and round $s_j$, or received by $v_j$ from another neighbor $w \neq v_{j-1}$ by round $s_j$ (provided that those messages were not received additionally from $v_{j-1}$).  Note that it is not necessarily the case that $Q_j \subseteq I_j$, but for our purposes, it is sufficient to show the following.
\begin{claim} \label{clm:reduction}
	For every $1 \leq j \leq \eta-1$ it holds that $|Q_j| \leq |I_j|$.
\end{claim}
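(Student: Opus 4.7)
The plan is to exhibit an injection $\phi\colon Q_j \to I_j$, from which $|Q_j|\leq|I_j|$ is immediate. Set $\phi((m_0,k))=(m_0,k)$ whenever $(m_0,k)\in Q_j \cap I_j$, so the substantive task is to embed $Q_j\setminus I_j$ into $I_j\setminus Q_j$. Every element of $Q_j\setminus I_j$ is a message that $v_j$ received from $v_{j-1}$ at some round $\tau_k<r_j$ and sent at some round $\sigma_k\in[r_j,s_j-1]$, so each such message was held in $v_j$'s queue across the boundary $r_j$.

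The first step is a structural reduction that leverages the queue discipline at $v_{j-1}$. Since $v_{j-1}$ always transmits the smallest-index message available, the sequence of iterations that $v_{j-1}$ forwards to $v_j$ is monotone increasing in the index; consequently, the set of indices for which $v_j$ receives a message from $v_{j-1}$ strictly before round $r_j$ coincides with the set $M_{j-1}$ of indices that $v_{j-1}$ sent before forwarding $(m_0,i)$. Letting $M_j$ denote the set of messages $(m_0,k)$ with $k<i$ that $v_j$ sent before round $s_j$, and writing $A_j:=M_j\cap M_{j-1}$ and $B_j:=M_j\setminus M_{j-1}$, condition (3) in the definition of $I_j$ becomes exactly $I_j=B_j$. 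Since $Q_j\subseteq M_j$, the inequality $|Q_j|\leq|I_j|$ is therefore equivalent to $|A_j\cap Q_j|\leq|B_j\cap(M_j\setminus Q_j)|$: every $v_{j-1}$-supplied message whose send by $v_j$ slipped past round $r_j$ must be charged to a distinct non-$v_{j-1}$-supplied message that $v_j$ already processed by round $r_j-1$.

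To produce the charging, I exploit that for each $(m_0,k)\in A_j\cap Q_j$ the message was present in $v_j$'s queue from round $\tau_k+1\leq r_j$ until $\sigma_k\geq r_j$, and in each intervening round $v_j$ must have drained a strictly smaller-index message (otherwise $(m_0,k)$ would have been sent earlier by the smallest-index-first rule). I would process the late $A_j$-messages in increasing order of their arrival from $v_{j-1}$ and, for each one, greedily pair it with the earliest still-unmatched $B_j$-message that $v_j$ sent after the arrival but strictly before $r_j$. The main obstacle is the interleaving of arrivals from several neighbors: a message can have its first reception from some $w\neq v_{j-1}$ and nevertheless lie in $A_j$ because of a later redundant delivery from $v_{j-1}$ before $r_j$. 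The key to overcoming this is to combine two monotonicity properties—$v_{j-1}$ delivers its messages to $v_j$ in index order, and $v_j$ always drains the minimum of its queue—which together ensure that the greedy pairing never runs short of $B_j$-blockers, yielding the desired injection and hence $|Q_j|\leq|I_j|$.
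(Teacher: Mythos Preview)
Your decomposition mirrors the paper's exactly: both split $Q_j$ into the part already delivered by $v_{j-1}$ before round $r_j$ (the paper's $Q_{j,1}$, your $A_j\cap Q_j$) and its complement $Q_{j,2}=Q_j\cap I_j$, and both reduce the claim to the inequality $|Q_{j,1}|\le |I_j\setminus Q_j|$, i.e., $|A_j\cap Q_j|\le |B_j\cap(M_j\setminus Q_j)|$.

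The gap is in your justification of this last inequality. The assertion that ``$v_{j-1}$ delivers its messages to $v_j$ in index order'' is \emph{false}. The smallest-index-first discipline does not force a node's output stream to be monotone in the index: a message $(m_0,k')$ with $k'<k$ may reach $v_{j-1}$ only after $v_{j-1}$ has already dispatched $(m_0,k)$ (for instance when the incoming path edge is absent from $G_{k'}$, so $(m_0,k')$ must arrive via a longer detour), and $v_{j-1}$ will then emit $k'$ after $k$. You invoke this monotonicity twice. The first use, to identify $M_{j-1}$ with the set of index-$<i$ messages $v_j$ received from $v_{j-1}$ before $r_j$, is harmless---that identification follows directly from the standing assumption that $v_j$ first receives $(m_0,i)$ from $v_{j-1}$, with no monotonicity needed. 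The second use is the one you flag as ``the key to overcoming'' the interleaving obstacle: you rely on index-monotone delivery from $v_{j-1}$ to guarantee that the greedy pairing never runs out of $B_j$-blockers. Since the premise is false, this step is unsupported, and the proof is incomplete as written.

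The paper's argument for the same inequality does not appeal to any ordering of $v_{j-1}$'s stream. It uses only the priority rule at $v_j$: each message in $Q_{j,1}$ sat in $v_j$'s queue prior to $r_j$ yet was not sent, so in each such round $v_j$ sent some other index-$<i$ message instead; counting those forced sends against the at-most-one-per-round arrivals from $v_{j-1}$ produces $|Q_{j,1}|$ messages that $v_j$ sent before $r_j$ and did not receive from $v_{j-1}$, hence a subset $I'\subseteq I_j\setminus Q_j$ of the required size.
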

\begin{proof}
	We divide the set $Q_j$ into two disjoint sets, $Q_{j,1}$ and $Q_{j,2}$. Let $Q_{j,1}$ be the set of messages $(m_0,k)$ in $Q_j$ that $v_j$ received from $v_{j-1}$ \emph{before} round $r_j$, and let $Q_{j,2}=Q_j \setminus Q_{j,1}$, i.e., the set of messages in $Q_j$ that $v_j$ did not receive from $v_{j-1}$ by round $r_j$. 
	We first note that by the definition of the set $I_j$, it holds that $Q_{j,2} \subseteq I_j$.
	Second, since $v_j$ did not send the messages in $Q_{j,1}$ before round $r_j$, and the messages are sent in a pipeline manner according to the subgraph index $k$, there exists $|Q_{j,1}|$ messages denoted as $I'$ with index at most $(i-1)$ (i.e., message of the form $(m_0,k)$ for $k\leq (i-1)$), that $v_j$ did not receive from $v_{j-1}$, and were sent before round $r_j$. Hence, it holds that $I' \subseteq I_j \setminus Q_j$, and therefore $|I_j \setminus Q_{j,2}| \geq |Q_{j,1}|$. The claim follows.
\end{proof}
\paragraph{Step two.}
We next show that $\sum_{j=1}^{\eta-1}|I_j| \leq |\mathcal{N}|$ by introducing an injection function $f$ from $\mathcal{I}= \{(v_j,k) ~\mid~ (m_0,k) \in I_j\}$
to $\mathcal{N}$, defined as follows.
For $(v_j,k) \in \mathcal{I}$, we set $f((v_j,k))=(k, (v_{h-1},v_h))$ such that $(v_{h-1},v_h)$ is the closest edge to $v_j$ on $P[v_0,v_j]$, where $(v_{h-1},v_h)\in N_k$ (i.e., $(v_{h-1},v_h) \notin G_k$). 
Specifically,
\begin{equation}
	f((v_j,k))=(k, (v_{h-1},v_h)) ~\mid~ h=\max_{\tau\leq j} \{\tau ~\mid~ (v_{\tau-1},v_{\tau})\in N_k \}~. \label{eq:function}
\end{equation}
We begin by showing that the function is well defined.
\begin{claim} \label{clm:well-defined}
	The function $f:\mathcal{I} \rightarrow \mathcal{N}$ is well defined. 
\end{claim}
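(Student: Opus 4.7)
The plan is to argue by contradiction: suppose $(v_j,k)\in\mathcal{I}$ but the set $\{\tau\le j : (v_{\tau-1},v_\tau)\notin G_k\}$ is empty, i.e., every edge of the prefix $P[v_0,v_j]$ lies in $G_k$. Under this assumption I will show that $v_j$ must have received $(m_0,k)$ from $v_{j-1}$ strictly before round $r_j$, contradicting property~(3) in the definition of $I_j$. Once the maximum defining $h$ is over a nonempty set, the construction automatically places $f((v_j,k))=(k,(v_{h-1},v_h))$ in $\mathcal{N}$, since $(v_{h-1},v_h)\in N_k$ by the very choice of $h$.

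The engine of the argument is an induction along $P$ that, for every $\tau\in\{1,\ldots,j\}$, produces a round $r'_\tau<r_\tau$ in which $v_\tau$ receives $(m_0,k)$ from $v_{\tau-1}$. The base case $\tau=1$ is immediate: the source $v_0=s$ transmits $(m_0,k)$ in round $k$ and $(m_0,i)$ in round $i$, and because $(v_0,v_1)\in G_k$, node $v_1$ stores $(m_0,k)$ at round $r'_1=k<i=r_1$. For the inductive step, assume $r'_\tau<r_\tau$ and let $t_k$ denote the round in which $v_\tau$ broadcasts $(m_0,k)$ to all its neighbors. The key observation is the pipeline rule: at each round, $v_\tau$ sends the queued message with \emph{smallest} index. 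Since $(m_0,k)$ entered $v_\tau$'s queue at round $r'_\tau<r_\tau\le s_\tau$, and $k<i$, if $(m_0,k)$ were still pending at round $s_\tau$ then the pipeline would dispatch it instead of $(m_0,i)$, contradicting the definition of $s_\tau$. Hence $t_k<s_\tau$, and because $(v_\tau,v_{\tau+1})\in G_k$, node $v_{\tau+1}$ stores the message in round $t_k$, yielding $r'_{\tau+1}\le t_k<s_\tau=r_{\tau+1}$, as required.

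Applying the induction at $\tau=j$ gives $r'_j<r_j$, contradicting property~(3) of $I_j$, and completes the proof. The main delicate point is the pipeline bookkeeping inside the inductive step: one has to use that each node, at each round, broadcasts a single queued message to \emph{all} of its neighbors (so there is no per-edge scheduling issue) and that the index-ordered queue forces any index-$k$ message that has arrived by round $s_\tau$ to have already been dispatched. All other ingredients—the base case, the choice of $h$, and the inclusion $(k,(v_{h-1},v_h))\in\mathcal{N}$—are bookkeeping directly from the definitions of $\mathcal{N}$, $I_j$, and the flooding rule.
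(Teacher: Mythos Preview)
Your proof is correct and follows essentially the same route as the paper: assume all edges of $P[v_0,v_j]$ lie in $G_k$, then prove by induction along $P$ that each $v_\tau$ receives $(m_0,k)$ from $v_{\tau-1}$ before it receives $(m_0,i)$, contradicting property~(3) of $I_j$. The paper's inductive step simply asserts ``hence $v_h$ also sends $(m_0,k)$ before sending $(m_0,i)$,'' whereas you spell out the pipeline-priority reasoning behind that line; both rely implicitly on the standing assumption that each $v_\tau$ first receives $(m_0,i)$ from $v_{\tau-1}$ (so that $r_{\tau+1}=s_\tau$) and that the edges of $P$ are reliable.
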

\begin{proof}
	For a pair $(v_j,k) \in \mathcal{I}$, we will show that there exists an edge in the path $P[v_0,v_j]$ that is not included in $G_k$, and therefore $f((v_j,k))$ is defined.
Assume by contradiction that all path edges in $P[v_0,v_j]$ are included in $G_k$. We will show by induction on $h$ that for every $1 \leq h \leq j$ the node $v_h$ receives the message $(m_0,k)$ from $v_{h-1}$ before receiving the message $(m_0,i)$ from $v_{h-1}$, leading to a contradiction with property (3) in the definition of $I_j$, as $(v_j,k)\in I_j$.
	
	For the base case of $h=1$, at the beginning of the flooding procedure the source $s=v_0$ sends the messages $(m_0,1), \ldots, (m_0,i)$ one after the other in rounds $1,\ldots, i$ respectively.  Since the edge $(v_0,v_1) \in G_k$, the node $v_1$ receives the message $(m_0,k)$ from $v_{0}$ in round $k$, and in particular before receiving the message $(m_0,i)$ (in round $i$).  Assume the claim holds up to node $v_{h}$, and we will next show correctness for $v_{h+1}$. By the induction assumption, $v_h$ receives the message $(m_0,k)$ from $v_{h-1}$ before it receives the message $(m_0,i)$. 
	Hence, $v_h$ also sends $(m_0,k)$ before sending $(m_0,i)$.  Because $(v_{h},v_{h+1}) \in G_k$, the message $(m_0,k)$ is not ignored, and $v_{h+1}$ receives the message $(m_0,k)$ from $v_{h}$ before receiving the message $(m_0,i)$. Note that this proof heavily exploits the fact that all the edges on $P$ are reliable.
\end{proof}

Next, we show that the function $f$ is an injection.
\begin{claim}
	The function $f$ is an injection. 
\end{claim}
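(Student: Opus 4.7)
The plan is to argue by contradiction: suppose $f((v_j,k)) = f((v_{j'},k)) = (k, (v_{h-1}, v_h))$ for some $j < j'$ with both pairs in $\mathcal{I}$. Since $f$ preserves the first coordinate, it suffices to fix $k$ and show that the second coordinate determines $j$. The key structural consequence of the assumption is that, by the definition $h = \max_{\tau \leq j'}\{\tau \mid (v_{\tau-1}, v_{\tau}) \in N_k\}$, every edge $(v_{\tau-1}, v_\tau)$ with $h < \tau \leq j'$ lies in $G_k$. In particular, since $h \leq j$, all edges on the subpath $P[v_j, v_{j'}]$ belong to $G_k$, so messages with index $k$ are forwarded (not ignored) along these edges during Phase~1.

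The main step is an induction along $P[v_j, v_{j'}]$ showing that each intermediate node $v_\tau$, for $\tau \in \{j, j+1, \ldots, j'-1\}$, sends $(m_0, k)$ strictly before it sends $(m_0, i)$. Denote by $\sigma_\tau^{(k)}$ (resp.\ $\sigma_\tau^{(i)}$) the round in which $v_\tau$ sends $(m_0, k)$ (resp.\ $(m_0, i)$). The base case $\tau = j$ is precisely property~(2) in the definition of $I_j$, applied to $(v_j, k) \in \mathcal{I}$. For the inductive step, note that since $(v_\tau, v_{\tau+1}) \in G_k$, the node $v_{\tau+1}$ first receives $(m_0, k)$ no later than round $\sigma_\tau^{(k)} + 1$, whereas by the w.l.o.g.\ assumption stated at the start of the proof of \Cref{lem:pipeline}, $v_{\tau+1}$ first receives $(m_0, i)$ at round $\sigma_\tau^{(i)} + 1$. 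Using the inductive hypothesis $\sigma_\tau^{(k)} < \sigma_\tau^{(i)}$ and the pipeline priority rule, which processes queued messages in order of increasing iteration index and hence forwards $(m_0, k)$ before $(m_0, i)$ whenever both are simultaneously in the queue, one concludes $\sigma_{\tau+1}^{(k)} < \sigma_{\tau+1}^{(i)}$.

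Applying the induction at $\tau = j'-1$ yields $\sigma_{j'-1}^{(k)} < \sigma_{j'-1}^{(i)}$. Since $(v_{j'-1}, v_{j'}) \in G_k$ and lies on $P$ (so is reliable, as $e' \notin P$), $v_{j'}$ receives $(m_0, k)$ from $v_{j'-1}$ at round at most $\sigma_{j'-1}^{(k)} + 1 < \sigma_{j'-1}^{(i)} + 1 = r_{j'}$. This directly violates property~(3) in the definition of $I_{j'}$ for the pair $(v_{j'}, k)$, contradicting $(v_{j'}, k) \in \mathcal{I}$. Combined with \Cref{clm:reduction}, this injection yields $\sum_{j} |Q_j| \leq \sum_j |I_j| = |\mathcal{I}| \leq |\mathcal{N}|$, completing the proof of \Cref{lem:pipeline}.

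The main obstacle I anticipate is the careful interplay between the queue-ordering discipline and the inductive claim: a node $v_{\tau+1}$ might receive $(m_0, k)$ from several neighbors and at various times relative to $(m_0, i)$, so one must exploit the priority rule (smaller index first) together with the w.l.o.g.\ assumption (that the first arrival of $(m_0, i)$ is along $P$) to guarantee that the ordering $\sigma^{(k)} < \sigma^{(i)}$ is preserved along the entire subpath $P[v_j, v_{j'}]$ and not just locally at $v_j$.
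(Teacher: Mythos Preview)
Your proof is correct and follows essentially the same approach as the paper: both fix $k$, deduce from the assumption that $P[v_{j},v_{j'}]\subseteq G_k$, use property~(2) of $I_j$ at $v_j$ as the base case, and then propagate the ordering ``$(m_0,k)$ before $(m_0,i)$'' along the reliable $G_k$-edges of $P[v_j,v_{j'}]$ to contradict property~(3) of $I_{j'}$. Your version makes the induction on the send times $\sigma_\tau^{(k)}<\sigma_\tau^{(i)}$ explicit, whereas the paper compresses this into a single sentence; the content is the same.
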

\begin{proof}
	First note that by the definition of the function $f$ (see Eq. \ref{eq:function}), for every $k_1 \neq k_2$, and $1 \leq j_1,j_2 \leq \eta-1$ such that $(v_{j_1},k_1),(v_{j_2},k_2) \in \mathcal{I}$, it holds that $f((v_{j_1},k_1)) \neq f((v_{j_2},k_2))$. 
	Next, we show that for every $k < i$ and $1 \leq j_1 < j_2 \leq \eta-1$ such that $(v_{j_1},k),(v_{j_2},k) \in \mathcal{I}$, it holds that $f((v_{j_1},k)) \neq f((v_{j_2},k))$. Denote by $f((v_{j_1},k)) = (k,(v_{h_1-1},v_{h_1}))$ and $f((v_{j_2},k)) = (k,(v_{h_2-1},v_{h_2}))$. We will now show that $(v_{h_2-1},v_{h_2}) \in P[v_{j_1}, v_{j_2}]$. Since $(v_{h_1-1},v_{h_1}) \in P[v_0,v_{j_1}]$, it will then follow that $(v_{h_1-1},v_{h_1}) \neq (v_{h_2-1},v_{h_2})$.  
	
	Assume towards contradiction that $(v_{h_2-1},v_{h_2}) \in P[v_0,v_{j_1}]$. By the definition of $f$ and the maximality of $h_2$, it holds that $P[v_{j_1},v_{j_2}] \subseteq G_k$. Since $(v_{j_1},k) \in \mathcal{I}$, the node $v_{j_1}$ sent the message $(m_0,k)$ before sending $(m_0,i)$. Since we assumed every node $v_t \in P$ receives the message $(m_0,i)$ for the first time from $v_{t-1}$ (its incoming neighbor on the path $P$), and all the edges on $P$ are reliable, it follows that $v_{j_2}$ received the message $(m_0,k)$ from $v_{j_2-1}$ \emph{before} receiving the message $(m_0,i)$ in round $r_{j_2}$. This contradicts the assumption that $(v_{j_2},k) \in \mathcal{I}$, as by property (3) in the definition of $I_{j_2}$, the node $v_{j_2}$ did not receive the message $(m_0,k)$ from $v_{j_2-1}$ before round $r_{j_2}$.
\end{proof}
This completes the proof of \Cref{lem:pipeline}.
Finally, we show that when $D' \geq D$, all nodes accept the message $m_0$ during the second phase using \Cref{lem:pipeline}. This will conclude the proof of \Cref{lem:main}. 
\begin{claim} \label{clm:all-accepts}
 All nodes accept $m_0$ within $O(L)$ rounds from the beginning of the second phase, provided that $D' \geq D$.
\end{claim}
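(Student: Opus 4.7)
The plan is to fix an arbitrary node $v \in V$ and induct along a short $s$-to-$v$ path in $G \setminus \{e'\}$. Applying \Cref{obs:bounded-diam-f} with $F = \{e'\}$ produces such a path $P = (s = v_0, v_1, \ldots, v_k = v)$ of length $k \leq 4D + 1 \leq L$. The inductive statement I will prove is that $v_j$ accepts $m_0$ by round $j+1$ of phase~2; the base case $j = 0$ is immediate since $s$ sends $accept(m_0)$ in the first round, and setting $j = k$ then yields the claim with $O(L)$ phase-2 rounds.

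For the inductive step, assume $v_{j-1}$ has accepted $m_0$ and dispatched $accept(m_0)$ by round $j$. Because $(v_{j-1}, v_j) \in P$ is distinct from the adversarial edge $e'$, it is reliable, so $v_j$ receives $accept(m_0)$ from $v_{j-1}$ by round $j+1$. It remains to exhibit an index $i_j$ witnessing the phase-2 acceptance rule at $v_j$: namely, (i) $v_j$ stored $(m_0, i_j)$ in phase~1, and (ii) $(v_{j-1}, v_j) \notin G_{i_j}$. To produce $i_j$ I apply \Cref{obs:bounded-diam-f} a second time, now with $F = \{(v_{j-1}, v_j), e'\}$; the $3$ edge-connectivity of $G$ keeps $G \setminus F$ connected between $s$ and $v_j$, so there is an $s$-$v_j$ path $P_j \subseteq G \setminus \{(v_{j-1}, v_j), e'\}$ of length at most $6D + 2 \leq 7D' = L$. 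The defining property of the $(L,1)$ covering family, applied to the excluded edge $(v_{j-1}, v_j)$ and the path $P_j$, yields an index $i_j$ with $P_j \subseteq G_{i_j}$ and $(v_{j-1}, v_j) \notin G_{i_j}$. Invoking \Cref{lem:pipeline} on the triple $(v_j,\, (v_{j-1}, v_j),\, P_j)$ then guarantees that $v_j$ has stored $(m_0, i_j)$ by the end of phase~1, whose $O(L \cdot \omega(\mathcal{G}) + |\mathcal{G}|)$-round length matches the pipeline bound exactly. Both acceptance conditions hold, closing the induction.

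The main subtlety I expect is the per-step juggling of $i_j$: the acceptance rule demands a subgraph $G_{i_j}$ \emph{missing} the incoming phase-2 edge $(v_{j-1}, v_j)$, yet $v_j$ must nevertheless have received $(m_0, i_j)$ during phase~1 through precisely that same subgraph. The double use of \Cref{obs:bounded-diam-f}---once globally with fault set $\{e'\}$ to supply the short accept-carrying path $P$, and once locally per step with fault set $\{(v_{j-1}, v_j), e'\}$ to certify a short alternate route---feeds the covering property and \Cref{lem:pipeline} exactly the input they need. The remaining bookkeeping along $P$ is routine and delivers the promised $O(L)$ bound on the number of phase-2 rounds.
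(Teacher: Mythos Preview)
Your proposal is correct and follows essentially the same approach as the paper's proof. The only cosmetic difference is that the paper fixes a single BFS tree $T$ rooted at $s$ in $G\setminus\{e'\}$ and inducts on the layer index (using the parent edge $(u,p(u))$ in place of your $(v_{j-1},v_j)$), whereas you fix a target node $v$ and induct along one $s$-$v$ path in $G\setminus\{e'\}$; in both cases the per-step argument invokes \Cref{obs:bounded-diam-f} with two faults, then the covering-family property, then \Cref{lem:pipeline}.
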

\begin{proof}
	Let $T$ be a BFS tree rooted at $s$ in $G \setminus \{e'\}$. For a node $u$, let $p(u)$ be the parent of $u$ in the tree $T$.
	We begin with showing that each node $u$ receives and stores a message $(m_0,j)$ such that $(u,p(u))\notin G_j$ during the first phase. Because the graph is $3$ edge-connected, by \Cref{obs:bounded-diam-f} for every node $u$ there exists an $s$-$v$ path $P$ in $G$ that does not contain both the edge $(u, p(u))$ and $e'$, of length $|P| \leq 7D \leq 7D'=L$.
	By the definition of the covering family $\mathcal{G}$, there exists a subgraph $G_j$ containing all the edges in $P$, and in addition, $(u,p(u)) \notin G_j$. Hence, by \Cref{lem:pipeline}, $u$ stores a message of the form $(m_0,j)$ during the first phase.
	
	We next show by induction on $i$ that all nodes in layer $i$ of the tree $T$ accepts $m_0$ by round $i$ of Phase 2. For the base case of $i=1$, given a node $u$ in the first layer of $T$, $u$ receives the message $accept(m_0)$ from $s$ after one round of Phase 2. Since $u$ stored a message $(m_0,j)$ such that $(s,u)\notin G_j$ in Phase 1, it accepts the message $m_0$ within one round. Assume the claim holds for all nodes up to layer $(i-1)$ and let $u$ be a node in the $i$'th layer. By the induction assumption, $p(u)$ sent $accept(m_0)$ to $u$ by round $(i-1)$. Because $u$ stored a message $(m_0,j)$ such that $(p(u),u) \notin G_j$ during the first phase, $u$ accepts $m_0$ by round $i$.
\end{proof}

\noindent\textbf{Remark.}
Our broadcast algorithm does not need to assume that the nodes know the identity of the source node $s$. By~\Cref{cor:no-accept}, in case where the adversarial edge $e'$ initiates a false broadcast execution, no node will accept any of the messages sent.

\subsection{Broadcast without Knowing the Diameter}\label{sec:notknowing}
We next show how to remove the assumption that the nodes know an estimate of the diameter; consequently, our broadcast algorithm also computes a linear estimate of the diameter of the graph. This increases the round complexity by a logarithmic factor and establishes \Cref{lem:broadcast}. \\
We first describe the algorithm under the simultaneous wake-up assumption and then explain how to remove it. 
\vspace{-5pt}
\paragraph{Algorithm $\UnBBalgo$.}
The algorithm applies Alg. $\BBalgo$ of \Cref{sec:broadcast-algo} for $k=O(\log D)$ iterations in the following manner. Every iteration $i \in \{1,\ldots, k\}$ consists of two steps. In the first step, the source node $s$ initiates Alg. $\BBalgo(D_i)$ with diameter estimate $D_i=2^{i}$, and the desired message $m_0$. 
Denote all nodes that accepted the message $m_0$ by $A_i$ and let $N_i=V \setminus A_i$ be the nodes that did not accept the message. 

In the second step, the nodes in $N_i$ inform $s$ that the computation is not yet completed in the following manner. All nodes in $N_i$ broadcast \emph{the same} designated message $M$ by applying Alg. $\BBalgo(9D_i)$ with diameter estimate $9D_i$ and the message $M$. The second step can be viewed as performing a single broadcast from $|N_i|$ multiple sources. If the source node $s$ receives and accepts the message $M$ during the second step, it continues to the next iteration $(i+1)$. If after $\widetilde{O}(D_i^2)$ rounds $s$ did not receive and accept the message $M$, it broadcasts a termination message $M_T$ to all nodes in $V$ using Alg. $\BBalgo(28D_i)$ (with diameter estimate $28D_i$). Once a node $v \in V$ accepts the termination message $M_T$, it completes the execution with the output message it has accepted so far. Additionally, for an iteration $i$ in which $v$ accepted the termination message, $D_i$ can be considered as an estimation of the graph diameter. 

 \paragraph{Analysis.}
We begin with noting that no node $v \in V$ accepts a wrong message $m' \neq m_0$ as its output. This follows by \Cref{clm:no-false} and the correctness of Alg. $\BBalgo$. 
\begin{observation} \label{clm:corrent}
	No node $v \in V$ accepts a wrong message $m' \neq m_0$.
\end{observation}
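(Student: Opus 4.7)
The plan is to reduce \Cref{clm:corrent} to the single-broadcast guarantee of \Cref{clm:no-false}, by arguing that every acceptance of an output message across the entire execution of Algorithm $\UnBBalgo$ takes place \emph{inside} one of the $\BBalgo$ invocations that the true source $s$ initiates with the true payload $m_0$. First I would enumerate the three kinds of $\BBalgo$ sub-procedures that $\UnBBalgo$ invokes in iteration $i$: (a) $\BBalgo(D_i)$ initiated by $s$ with payload $m_0$, (b) $\BBalgo(9D_i)$ initiated by nodes in $N_i$ with the designated progress message $M$, and (c) $\BBalgo(28D_i)$ initiated by $s$ with the termination message $M_T$. Only sub-procedure (a) is the one whose accepted value is used as the output of the broadcast task, since $M$ and $M_T$ are control signals (the former is an upward notification that some node has not yet accepted, the latter triggers termination with the value already accepted in step (a)). Hence ``accepting $m'\neq m_0$ as output'' can only happen inside some sub-procedure of type (a).

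Next I would apply \Cref{clm:no-false} directly to each instance of (a). The key point to emphasize is that the statement of \Cref{clm:no-false} does not require the diameter estimate $D'$ fed to $\BBalgo$ to satisfy $D' \geq D$: its proof only uses that $s$ initiates with the true bit $m_0$, that the covering family is locally known, and that exactly one edge is adversarial. In particular, in early iterations with $D_i \ll D$ it is perfectly possible that many nodes do not accept at all, but \Cref{clm:no-false} still certifies that none of them accept a \emph{wrong} value. Taking the union over all $O(\log D)$ iterations gives the conclusion.

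The only subtlety I would explicitly address is whether the adversarial edge can ``inject'' an alternative broadcast between the first-step invocations of consecutive iterations, thereby tricking some node into accepting $m' \neq m_0$ in a sub-procedure that $s$ did not start. Here \Cref{cor:no-accept} is exactly the needed tool: in any $\BBalgo$ execution in which $s$ does not initiate, no node accepts any message at all. Thus each invocation of $\BBalgo$ in $\UnBBalgo$ either is of type (a) and is covered by \Cref{clm:no-false}, or is initiated by the adversary alone and is covered by \Cref{cor:no-accept}; in either case no node adopts a wrong $m_0$ output. I do not anticipate a real obstacle — the proof is essentially a bookkeeping argument over iterations, with the content already carried by \Cref{clm:no-false} and \Cref{cor:no-accept} from the single-estimate setting.
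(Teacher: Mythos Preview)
Your proposal is correct and follows the same approach as the paper, which dispatches the observation in a single sentence: ``This follows by \Cref{clm:no-false} and the correctness of Alg.\ $\BBalgo$.'' Your write-up is simply a more careful unpacking of that sentence---explicitly enumerating the three sub-procedure types, noting that \Cref{clm:no-false} does not rely on $D'\ge D$, and invoking \Cref{cor:no-accept} for adversary-initiated executions---but the underlying argument is identical.
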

Fix an iteration $i$. Our next goal is to show that if $N_i \neq \emptyset$, then $s$ will accept the message $M$ by the end of the iteration. 
Consider the second step of the algorithm, where the nodes in $N_i$ broadcast the message $M$ towards $s$ using Alg. $\BBalgo(9D_i)$. 
Since all nodes in $N_i$ broadcast \emph{the same} message $M$, we refer to the second step as a \emph{single} execution of Alg. $\BBalgo(9D_i)$ with multiple sources.
We begin with showing that the distance between the nodes in $A_i$ and $s$ is at most $14 D_i$. 
\begin{claim}~\label{clm:small-tree}
	For every node $v \in A_i$ it holds that $\dist(s,v,G \setminus \{e'\}) \leq 14D_i$.
\end{claim}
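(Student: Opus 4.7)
The plan is to trace the chain of $accept$-messages that caused $v \in A_i$ to accept $m_0$ in Phase 2 of $\BBalgo(D_i)$, and then use its structure to bound the distance from $s$ to $v$ in $G \setminus \{e'\}$. Since Phase 2 runs for $O(L)$ rounds with $L = 7D_i$, one can identify a simple walk $s = x_0, x_1, \ldots, x_{r_v} = v$ in $G$ with $r_v \leq L$, where for each $j$, the node $x_{j+1}$ received in round $j+1$ the $accept(m_0)$ message that triggered its acceptance, sent across the edge $(x_j, x_{j+1})$. Simplicity of the walk follows since each node accepts at most once.

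If this walk avoids $e'$, it is an $s$-$v$ path in $G \setminus \{e'\}$ of length at most $L$, directly giving $\dist(s,v,G \setminus \{e'\}) \leq L \leq 14D_i$. Otherwise it crosses $e'=(y,z)$ exactly once and decomposes as $P_1 \circ e' \circ P_2$, with $P_1 \subseteq G \setminus \{e'\}$ an $s$-to-$y$ prefix and $P_2 \subseteq G \setminus \{e'\}$ a $z$-to-$v$ suffix, both of length at most $L$. The suffix $P_2$ immediately gives $\dist(z,v,G \setminus \{e'\}) \leq L$.

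To finish via the triangle inequality, I must establish $\dist(s, z, G \setminus \{e'\}) \leq L$. The key observation is that $z$'s acceptance of $m_0$ forces $z$ to have stored $(m_0, j_z)$ during Phase 1 for some index $j_z$ satisfying $(y,z) = e' \notin G_{j_z}$. Because $e' \notin G_{j_z}$, any $(m_0, j_z)$ transmission across $e'$ is discarded by the receiver's filtering rule in Phase 1, so the adversary has no means of injecting fake $(m_0, j_z)$; consequently, $z$ must have received $(m_0, j_z)$ via an honest propagation chain entirely inside $G_{j_z} \subseteq G \setminus \{e'\}$. Combining \Cref{lem:pipeline} with the $(L,1)$-covering property of $\mathcal{G}$ (\Cref{def:covering-Lk}) and \Cref{obs:bounded-diam-f}, $z$'s successful Phase 1 storage yields the existence of a length-$L$ $s$-$z$ path in $G \setminus \{e'\}$, whence $\dist(s,z,G \setminus \{e'\}) \leq L$. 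Putting everything together yields $\dist(s,v,G \setminus \{e'\}) \leq 2L = 14D_i$, as required.

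The trickiest step is the final one, extracting a short $s$-$z$ path in $G \setminus \{e'\}$ from $z$'s successful Phase 1 storage. Lemma~\ref{lem:pipeline} is phrased only as a sufficient condition (short path implies delivery); a careful converse argument tailored to the pipelined Phase 1 schedule, in which every path witnessing the delivery of $(m_0, j_z)$ to $z$ must fit within the available time budget, is where I expect the main technical difficulty to lie.
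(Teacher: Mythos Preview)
Your outline matches the paper's proof almost exactly: trace the Phase-2 accept chain $P_1$ back from $v$; if $e'\notin P_1$ it is already an $s$--$v$ path in $G\setminus\{e'\}$ of length at most $7D_i$; otherwise the endpoint $z$ (the paper's $v_1$) of $e'$ on the $v$-side accepted $m_0$ and therefore stored some $(m_0,j)$ in Phase~1 with $e'\notin G_j$, so the Phase-1 delivery of $(m_0,j)$ inside the fully reliable subgraph $G_j\subseteq G\setminus\{e'\}$ yields an $s$--$z$ path $P_2$, and $P_2\circ P_1[z,v]$ gives the bound $14D_i$.

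The only divergence is at the step you flag as ``trickiest.'' The paper does \emph{not} develop any converse of Lemma~\ref{lem:pipeline}, and it does not appeal to Definition~\ref{def:covering-Lk} or Observation~\ref{obs:bounded-diam-f} here --- as you already note, those statements run in the wrong direction for this purpose, so your invocation of them is not what is needed. Instead the paper simply asserts in one line that, since all edges of $G_j$ are reliable and $v_1$ stored $(m_0,j)$, ``$G_j$ contains an $s$--$v_1$ path $P_2$ of length $\eta\le 7D_i$'': the actual Phase-1 forwarding chain of $(m_0,j)$ inside $G_j$ is taken as the witness path, with no further pipeline-schedule analysis. So the ``main technical difficulty'' you anticipate is treated in the paper as immediate rather than as something requiring a separate argument.
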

\begin{proof}
	Recall that Alg. $\BBalgo$ proceeds in two phases. In the first phase, the source node propagates messages of the form $(M,k)$, and in the second phase, the source node propagates \emph{accept} messages.  
	For a node $v\in A_i$ that accepts the message $m_0$ in the $i$'th iteration, by the second phase of Alg. $\BBalgo$, it received an $accept(m_0)$ message from a neighbor $w$ in Phase 2, and stored a message $(m_0,k)$  in Phase 1, such that $(v,w) \notin G_k$. 
	Let $P_1$ be the path on which the message $accept(m_0)$ propagated towards $v$ in Phase 2 of Alg. $\BBalgo(D_i)$. Since the second phase is executed for $7D_i$ rounds, it holds that $|P_1| \leq 7 D_i$. In the case where $e' \notin P_1$, since $s$ is the only node initiating $accept(m_0)$ messages (except maybe $e'$), $P_1$ is a path from $s$ to $v$ in $G \setminus \{e'\}$ as required. 
	
	Assume that $e' \in P_1$, and denote it by $e'=(v_1,v_2)$. Without loss of generality, assume that on the path $P_1$,  the node $v_1$ is closer to $v$ than $v_2$. Hence, $v_1$ received an $accept(m_0)$ message from $v_2$ during Phase 2, and because $v_1$ also sent the message over $P_1$, it accepted $m_0$ as its output. Therefore, during the execution of Alg. $\BBalgo(D_i)$, the node $v_1$ stored a message $(m_0, j)$ during the first phase, where $e' \notin G_j$.  As all edges in $G_j$ are reliable, we conclude that $G_j$ contains an $s$-$v_1$ path $P_2$ of length $\eta \leq 7D_i$ such that $e' \notin P_2$. Thus, the concatenated path $P_2 \circ P_1[v_1,v]$ is a path of length at most $14D_i$ from $s$ to $v$ in $G \setminus \{e'\}$ as required. 
\end{proof}
We now show that if $N_i \neq \emptyset$ then $s$ accepts the message $M$ during the second step and continues to the next iteration.
The proof is very similar to the proof of \Cref{clm:all-accepts} and follows from the following observation.
\begin{observation}\label{obs:small-path}
For every $u\in A_i$ and an edge $e=(v,u)$, it holds that $\dist_{G \setminus \{e',e\}}(N_i,u) \leq 7 \cdot 9 D_i$.
\end{observation}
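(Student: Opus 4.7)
The plan is to combine the radius bound of Claim~\ref{clm:small-tree}, which places every node of $A_i$ within distance $14 D_i$ from $s$ in $G \setminus \{e'\}$, with the $3$-edge-connectivity of $G$, in order to construct an explicit short path from some node of $N_i$ to $u$ in $G \setminus \{e, e'\}$.

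First, I would fix a convenient boundary edge between the two sides. Since $s \in A_i$ and $N_i$ is non-empty, the cut $(A_i, N_i)$ carries at least $3$ edges by $3$-edge-connectivity; after excluding the two forbidden edges $\{e, e'\}$, at least one crossing edge $(w^*, v^*)$ remains, with $w^* \in A_i$, $v^* \in N_i$ and $(w^*, v^*) \notin \{e, e'\}$. It then suffices to bound $\dist_{G \setminus \{e, e'\}}(u, v^*)$ by $7 \cdot 9 D_i$, since reaching $v^*$ gives a walk into $N_i$.

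Second, I would anchor the argument on a BFS tree $T$ of $G \setminus \{e'\}$ rooted at $s$, truncated at depth $14 D_i$. Claim~\ref{clm:small-tree} gives $A_i \subseteq V(T)$; in particular both $u$ and $w^*$ lie in $T$, and the tree diameter is at most $28 D_i$. The analysis then splits on whether $e$ is a tree edge of $T$ and, if so, on whether $u$ and $w^*$ fall into the same component of $T \setminus \{e\}$. In the easy branches the natural tree path between $u$ and $w^*$ (possibly via $s$) already lies in $G \setminus \{e, e'\}$ and has length at most $28 D_i$, and one extra hop along $(w^*, v^*)$ finishes the job.

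The delicate branch is when $e \in T$ separates $u$ from $w^*$ into components $T_u$ and $T_{w^*}$. Here I would apply $3$-edge-connectivity a second time, now to the cut $(V(T_u), V \setminus V(T_u))$: it contains at least $3$ edges of $G$ with $e$ being one of them, so a bridging edge $(a, b) \notin \{e, e'\}$ must exist. If $b \notin V(T)$, then necessarily $b \in N_i$ since $A_i \subseteq V(T)$, and the subpath inside $T_u$ from $u$ to $a$ together with $(a, b)$ already reaches $N_i$ within at most $28 D_i + 1$ steps. Otherwise $b \in V(T_{w^*})$, and concatenating the $T_u$-subpath from $u$ to $a$, the bridge $(a, b)$, the $T_{w^*}$-subpath from $b$ to $w^*$, and the boundary edge $(w^*, v^*)$ yields a path from $u$ into $N_i$ of length at most $2 \cdot 28 D_i + 2$, comfortably inside the $7 \cdot 9 D_i$ budget. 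The main obstacle I anticipate is the clean case analysis around which tree edge is removed, and verifying that each side of $T \setminus \{e\}$ retains diameter at most $28 D_i$ regardless of where $e$ sits inside the truncated tree.
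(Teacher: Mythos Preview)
Your argument is correct, but it proceeds along a different route from the paper. The paper argues in the style of Observation~\ref{obs:bounded-diam-f}: it first notes (by $3$-edge-connectivity) that some path from $N_i$ to $u$ exists in $G\setminus\{e,e'\}$, takes the shortest such path $P$, and then reroutes $P$ through the forest $T\setminus\{e\}$. Since all vertices of $P$ except its first lie in $A_i\subseteq V(T)$, the rerouted path $P'$ consists of at most two tree subpaths (one per component of $T\setminus\{e\}$, each of diameter $\le 28D_i$) joined by a single edge of $P$, yielding $|P'|\le 4\cdot 14D_i+1$. You instead build the witnessing path forward from $u$: first you secure a surviving boundary edge $(w^*,v^*)$ across the cut $(A_i,N_i)$, then you case-split on whether $e$ separates $u$ from $w^*$ in the truncated tree, and in the hard case you invoke $3$-edge-connectivity a second time on the cut $(V(T_u),V\setminus V(T_u))$ to obtain a bridging edge. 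Your construction is more explicit and entirely self-contained, but the paper's path-transformation argument is shorter and, more to the point, generalizes uniformly to $t$ faults (this is exactly how Observation~\ref{obs:distance-2t+1} is proved); your case analysis would not scale as cleanly.
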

\begin{proof}
Let $T$ be a truncated BFS tree rooted at $s$ in $G \setminus \{e'\}$, such that (1) $A_i \subseteq V(T)$, and (2) the leaf nodes of $T$ are in the $A_i$ set. Informally, $T$ is a minimum depth tree rooted at $s$ in $G \setminus \{e'\}$, that spans the vertices in $A_i$.
By \Cref{clm:small-tree} the depth of $T$ is at most $14 D_i$.
In follows that the forest $T \setminus \{e\}$ contains at most two trees of diameter $2\cdot 14 D_i$. 
Since $G$ is 3 edge-connected, there exists a path from some node in $N_i$ to $u$ in $G \setminus \{e,e'\}$.

Hence, the shortest path from $N_i$ to $u$ in $G \setminus \{e,e'\}$ denoted as $P$ can be transformed into a path $P'$ containing at most one edge in $P$, and two tree subpaths of the forest $T \setminus \{e\}$. Therefore, $\dist_{G \setminus \{e,e'\}}(N_i,u) \leq |P'|\leq 4\cdot 14 D_i+1 \leq 7\cdot 9 D_i$.
\end{proof}
\begin{claim} \label{clm:accept-diameter}
If $N_i \neq \emptyset$, $s$ accepts the message $M$ by the end of Step 2 of the $i$'th iteration. 
\end{claim}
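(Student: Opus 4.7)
\noindent\emph{Proof plan.}
The plan is to follow the blueprint of \Cref{clm:all-accepts}, but with the multi-source set $N_i$ playing the role that $s$ played there, and with $s$ itself as the distinguished target of the argument. Throughout, let $L' = 7 \cdot 9 D_i = 63 D_i$, which is the value of $L$ used internally by Alg.\ $\BBalgo(9D_i)$; its covering family $\mathcal{G}$ is $(L', 1)$, and its Phase~$2$ lasts $O(L')$ rounds. Note that $s \in A_i$, since $s$ accepts its own outgoing broadcast of $m_0$ in iteration $i$.

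First I would fix a BFS forest $T'$ in $G \setminus \{e'\}$ rooted simultaneously at all nodes of $N_i$ (equivalently, a BFS from a virtual super-source attached to $N_i$). Applying \Cref{obs:small-path} with any incident edge shows $\dist_{G \setminus \{e'\}}(N_i, u) \leq L'$ for every $u \in A_i$, so the depth of $T'$ is at most $L'$. For every $u \in A_i$ with parent $p(u)$ in $T'$, a second application of \Cref{obs:small-path} with $e = (u, p(u))$ furnishes a path $P_u \subseteq G \setminus \{e', (u, p(u))\}$ from some source $n_u \in N_i$ to $u$ of length at most $L'$; the covering family property then yields a subgraph $G_{j_u}$ containing $P_u$ with $(u, p(u)) \notin G_{j_u}$.

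Next I would invoke the multi-source analogue of \Cref{lem:pipeline} (with initiating source $n_u$, subgraph $G_{j_u}$, and path $P_u$) to conclude that each $u \in A_i$ stores $(M, j_u)$ within the $O(L' \cdot \omega(\mathcal{G}) + |\mathcal{G}|)$ rounds allotted to Phase~$1$ of Alg.\ $\BBalgo(9D_i)$. The extension of \Cref{lem:pipeline} to multiple sources is essentially free in our setting, because every source in $N_i$ initiates the \emph{identical} message $M$, so each index $k$ still induces only one message $(M, k)$ traversing the network, and the bounded-width queue analysis along the fixed path $P_u$ carries over unchanged.

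Finally I would replicate the layered induction of \Cref{clm:all-accepts} on depth in $T'$: every node at depth $k$ of $T'$ accepts $M$ by round $k$ of Phase~$2$. The base case ($k = 0$) is immediate, since every node of $N_i$ initiates $accept(M)$ in the first round. For the inductive step at $u$, the parent $p(u)$ accepts and forwards $accept(M)$ by round $k$; since $u$ has already stored $(M, j_u)$ with $(u, p(u)) \notin G_{j_u}$, the acceptance condition triggers at $u$ in round $k$. Because $s \in A_i$ sits at depth at most $L'$ in $T'$ and Phase~$2$ lasts $O(L')$ rounds, $s$ accepts $M$ before Step~$2$ terminates, which is precisely the claim. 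The main subtlety I anticipate is formally justifying the multi-source variant of \Cref{lem:pipeline}; as noted, the identical-message structure keeps the bounded-width argument intact, but this point should be made explicit.
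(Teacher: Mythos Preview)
Your proposal is correct and follows essentially the same approach as the paper: both arguments apply \Cref{obs:small-path} to obtain, for each node $u$ on the way to $s$, an $N_i$-to-$u$ path avoiding $\{e',(u,p(u))\}$, invoke the (multi-source) analogue of \Cref{lem:pipeline} to ensure $u$ stored a suitable $(M,j)$ in Phase~1, and then run the layered induction of \Cref{clm:all-accepts} in Phase~2. The only cosmetic difference is that you induct over the layers of a BFS forest rooted at $N_i$ (proving all of $A_i$ accepts), whereas the paper inducts along a single shortest $N_i$-to-$s$ path and uses \Cref{clm:small-tree} to get the slightly sharper length bound $14D_i+1$ instead of your $63D_i$; either bound is $O(L')$ and suffices.
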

\begin{proof}
Let $P=(u_0,u_1, \ldots ,u_\eta=s)$ be a shortest path from some node $u_0 \in N_i$ to the source node $s$ in $G\setminus\{e'\}$. 
As $P$ is the shortest such path, for every $j \neq 0$ $u_j\in A_i$, and by to \Cref{clm:small-tree} $|P|\leq (14D_i+1)$. 
In order to prove \Cref{clm:accept-diameter}, we will show by induction on $j$ that every $u_j\in P$ accepts the message $M$ by round $j$ of Phase 2 in Alg. $\BBalgo(9D_i)$ (in Step 2). 

For the base case of $j=0$, as $u_0 \in N_i$, it accepts the message $M$ at the beginning of the phase. Assume the claim holds for $u_j$ and consider the node $u_{j+1}$.  By \Cref{obs:small-path} there exists a path $P_{j+1}$ from some node in $N_i$ to $u_{j+1}$ in $G \setminus \{e',(u_j,u_{j+1})\}$ of length $|P_{j+1}|\leq 7\cdot 9D_i$. 
Hence, by \Cref{lem:pipeline} combined with the covering family used in Alg. $\BBalgo(9D_i)$, we conclude that $u_{j+1}$ stored a message $(M,k)$ in Phase 1 where $(u_j,u_{j+1})\notin G_{k}$.
By the induction assumption, $u_j$ sends $u_{j+1}$ an $accept(M)$ message by round $j$ of Phase 2. Since $(u_{j+1},u_j) \notin G_{k}$, it follow that $u_{j+1}$ accepts the message $M$ by round $(j+1)$.
\end{proof}

By \Cref{clm:small-tree} it follows that when $D_i < (D/28)$ there must exist a node $w \in V$ that did not accept the message $m_0$ during the execution of Alg.  $\BBalgo(D_i)$ in the first step, and $N_i \neq \emptyset$.  
On the other hand, when $D_i\geq D$ by \Cref{clm:all-accepts} all nodes accept the message $m_0$ during the first step of the $i$'th iteration, and therefore $N_i = \emptyset$. Hence, for an iteration $i^*$ in which no node broadcasts the message $M$ (and therefore $s$ decides to terminate the execution), it holds that $D_{i^*} \in [D/28,2D]$. Since $s$ broadcasts the termination message $M_T$ by applying Alg. $\BBalgo(28D_{i^*})$ with diameter estimate $28D_{i^*}$, we conclude that all nodes in $V$ will finish the execution as required. 

\smallskip

\noindent \textbf{Omitting the simultaneous wake-up assumption.} The main adaptation is that in the second step of each iteration, the message $M$ is initiated by the nodes in $N_i$ with neighbors in $A_i$ (and possibly also the endpoints of the adversarial edge), rather than all the nodes in $N_i$. Specifically, the modifications are as follows.
At the end of the first step, every node $u\in A_i$ informs all its neighbors that the first step has ended. Every node $u\in N_i$ receiving such a message, initiates the second step by broadcasting the message $M$ using Alg. $\BBalgo(9D_i)$. In the case where some node $u\in N_i$ receives messages indicating that the first step has ended from two distinct neighbors $w,w'$ in two distinct rounds $\tau \neq \tau'$, the node $u$ broadcasts the message $M$ in rounds $(\tau+1)$ and $(\tau'+1)$. We note that this case can occur when at least one of the endpoints of the adversarial edge $e'$ is in $N_i$. A node $u\in A_i$ that receives $M$ during the first step ignores these messages.

The correctness follows by the fact that for every edge $(u,w)\in G \setminus \{e'\}$ such that $u\in N_i$ and $w\in A_i$, the node $u$ broadcasts the message $M$ at the beginning of the second step of iteration $i$. 
For that reason, \Cref{clm:accept-diameter} works in the same manner. Additionally, in an iteration $i^*$ where $N_{i^*}=\emptyset$, no node broadcasts the message $M$ in the second step, and therefore $s$ broadcasts the termination message $M_T$.

\section{Broadcast against $t$ Adversarial Edges} \label{sec:t-faults}
In this section, we consider the broadcast problem against $t$ adversarial edges and prove \Cref{thm:t-broadcast}. The adversarial edges are fixed throughout the execution but are unknown to any of the nodes. Given a $D$--diameter, $(2t+1)$ edge-connected graph $G$, and at most $t$ adversarial edges $F \subseteq E$, the goal is for a source node $s$ to deliver a message $m_0$ to all nodes in the graph. At the end of the algorithm, each node is required to output the message $m_0$. Our algorithm is again based on a locally known covering family $\mathcal{G}$ with several desired properties. The algorithm floods the messages over the subgraphs of $\mathcal{G}$. The messages exchanged over each subgraph $G_i \in \mathcal{G}$ also contains the path information along which the message has been received. As we will see, the round complexity of the algorithm is mostly dominated by the cardinality of $\mathcal{G}$. 
We use the following fact from~\cite{RPC21}, whose proof follows by the proof of \Cref{fc:rpc}.  
\begin{fact}[Implicit in \cite{RPC21}]\label{fc:rpc-t}
	Given a graph $G$ and integer parameters $L$ and $k$, there exists a $0$-round algorithm that allows all nodes to locally know an $(L,k)$ covering family $\mathcal{G}=\{G_1,\ldots, G_\ell\}$ such that $\ell=((Lk\log n)^{k+1})$. 
\end{fact}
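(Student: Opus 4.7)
The plan is to mirror the proof of \Cref{fc:rpc} word for word, only replacing the fault parameter $f=1$ with the general parameter $f=k$ in the $(L,f)$ Replacement Path Cover (RPC) construction of \cite{RPC21}. Since node IDs lie in $[1,\poly(n)]$, each node can internally simulate the deterministic RPC construction of \cite{RPC21} on the complete graph $G^*$ over the ID space with parameters $L$ and $k$. This yields, without any communication, an ordered family $\mathcal{G}'=\{G'_1,\ldots,G'_\ell\}$ of $G^*$-subgraphs with $\ell=((Lk\log n)^{k+1})$, which is the bound given by \cite{RPC21} for $(L,k)$-RPCs on $G^*$.

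Next, I would define $\mathcal{G}=\{G_1,\ldots,G_\ell\}$ by $G_i:=G'_i\cap E(G)$. To verify the $(L,k)$ covering property (\Cref{def:covering-Lk}) for $\mathcal{G}$, I fix $\langle u,v,E'\rangle\in V\times V\times E^{\le k}$ and an $L$-length $u$-$v$ path $P\subseteq G\setminus E'$. Since $P\subseteq G\subseteq G^*$ and $E'\subseteq E\subseteq E(G^*)$, the RPC property of $\mathcal{G}'$ provides some $G'_i$ with $P\subseteq G'_i$ and $E'\cap G'_i=\emptyset$. Intersecting with $E(G)$ preserves both conditions, so $G_i$ satisfies (P1) and (P2). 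This is the only place where the argument diverges at all from \Cref{fc:rpc}, and the divergence is purely notational (paths and fault sets of size up to $k$ instead of size $1$).

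For local knowledge, I would rely on the structure of the \cite{RPC21} construction: each $G'_i$ is labelled by a tuple $(h_1,i_1,\ldots,h_k,i_k)$ where the $h_j$'s are drawn from an explicit hit-miss hash family $\mathcal{H}=\{h:[m]\to[q]\}$ with $|\mathcal{H}|,q=O(Lk\poly\log m)$, and an edge $e'$ belongs to $G'_i$ iff $h_j(\mathrm{ID}(e'))\ne i_j$ for all $j\in[k]$. Since the family $\mathcal{H}$ is the same at every node and can be evaluated without communication, either endpoint of an edge $e=(u,v)\in E(G)$ can locally determine, from $\mathrm{ID}(e)$ and the index $i$, whether $e\in G_i$, which is exactly the requirement of local knowledge (the nodes do not know $G$ itself, but since membership in $G_i$ given membership in $G$ is decided purely by the hash evaluation, this suffices).

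The only mild obstacle is a bookkeeping one: verifying that the cardinality bound $((Lk\log n)^{k+1})$ matches what the RPC construction of \cite{RPC21} actually delivers for general $k$, as opposed to the $k=1$ specialization stated in \Cref{fc:rpc}. This reduces to a direct look-up of their main theorem with $f\!=\!k$, and no new ideas are needed. All other steps (zero-round computation, $(L,k)$-covering, local knowledge) transfer verbatim from the proof of \Cref{fc:rpc}.
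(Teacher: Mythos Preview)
Your proposal is correct and matches the paper's approach exactly: the paper itself does not give a separate proof of \Cref{fc:rpc-t} but simply states that it ``follows by the proof of \Cref{fc:rpc}'', and you have carried out precisely that substitution of $f=k$ for $f=1$ in the RPC construction of \cite{RPC21}. Your added detail about the tuple labeling $(h_1,i_1,\ldots,h_k,i_k)$ for local knowledge is more explicit than what the paper provides, but the essential point---that membership in each $G_i$ is determined purely by hash evaluations on edge identifiers---is exactly what is needed and is inherited from the single-fault case.
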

Towards proving \Cref{thm:t-broadcast}, we prove the following theorem which will become useful also for the improved algorithms for expander graphs in Sec. \ref{sec:expanders}. 

\begin{theorem} \label{lem:main-t}
	Given is a $(2t+1)$ edge-connected graph $G$ of diameter $D$, and a parameter $L$ satisfying that for every $u,v \in V$, and every set $E' \subseteq E$ of size $|E'| \leq 2t$, it holds that $\dist_{G \setminus E'}(u,v)\leq L$.
	Assuming that the nodes locally know an $(L,2t)$ covering family $\mathcal{G}$, there exists a deterministic broadcast algorithm $\BBtalgo(\cG,L,t)$ against $t$ adversarial edges $F$ with round complexity $O(L \cdot |\mathcal{G}|)$.
\end{theorem}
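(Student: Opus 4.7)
The plan is to run the $|\mathcal{G}|$ subgraphs of the $(L,2t)$ covering family as independent, time-separated flooding procedures, and then decide at each node via a combinatorial consistency check against the covering structure. Concretely, the algorithm would split execution into $|\mathcal{G}|$ sub-phases of $L$ rounds each; in sub-phase $i$ the source $s$ sends the tagged message $(m_0,i)$ to its $G_i$-neighbors, and every other node $u$, upon first receiving some $(m,i)$ during sub-phase $i$ via an incident edge $(w,u) \in G_i$, stores the pair and forwards it on all of its $G_i$-edges. Messages tagged $(m,i)$ that arrive outside sub-phase $i$'s time window, or through an edge $(w,u) \notin G_i$, are discarded. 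The round complexity is immediately $L\cdot|\mathcal{G}|$. Each node $v$ then applies the decision rule: output $m$ iff for every $F' \subseteq E$ with $|F'| \leq t$, there is some index $i$ such that $v$ stored $(m,i)$ during sub-phase $i$ and $F' \cap G_i = \emptyset$. This check is locally computable from the local knowledge of $\mathcal{G}$; the round complexity is the object of the theorem, not local computation.

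For correctness, the central invariant to establish is that whenever $F \cap G_i = \emptyset$, the only message any node can ever store in sub-phase $i$ is the honest $(m_0,i)$. The proof is an induction on hop distance from $s$ in $G_i$: forwarding uses only $G_i$-edges, which are all reliable since $F \cap G_i = \emptyset$, while adversarial injections on $F$-edges fail the ``edge in $G_i$'' membership check and are discarded. Given this invariant, no-false-acceptance is immediate: if $v$ outputs $m' \neq m_0$, instantiate the decision rule with $F'=F$ to obtain a sub-phase $i$ with $F \cap G_i = \emptyset$ in which $v$ stored $(m',i)$, contradicting the invariant. For the liveness direction, fix $v$ and any $F' \subseteq E$ with $|F'| \leq t$, set $E' = F \cup F'$ so that $|E'| \leq 2t$, and use the diameter hypothesis to obtain an $s$-$v$ path $P \subseteq G \setminus E'$ of length at most $L$. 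Definition~\ref{def:covering-Lk} applied to $\mathcal{G}$ yields a subgraph index $i$ with $P \subseteq G_i$ and $E' \cap G_i = \emptyset$. Since $F \subseteq E'$, the invariant ensures that honest propagation of $(m_0,i)$ proceeds undisturbed in sub-phase $i$, and along $P$ (which lies in $G_i$ and is reliable) the message reaches $v$ within the $L$ rounds allotted to the sub-phase; the condition $F' \cap G_i = \emptyset$ also holds since $F' \subseteq E'$. Thus the decision rule fires for $m_0$ at $v$.

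The main obstacle I anticipate is making the ``$F$-clean invariant'' fully rigorous against the adaptive adversary, which can inject arbitrary $O(\log n)$-bit messages on its $F$-edges in every round and may coordinate its strategy across all sub-phases based on the transcript. The crucial step is to justify that timing-tag validation together with the $G_i$-membership check for the incident edge truly prevent any $F$-originated message from ever acquiring the status of a legitimate $(m,i)$-forward in a sub-phase whose subgraph avoids $F$, regardless of what the adversary has done in earlier sub-phases. Once this filtering behavior is specified precisely and shown to be honored by the protocol, the inductive step along $G_i$ becomes routine, and the rest of the argument reduces to the covering-family accounting outlined above.
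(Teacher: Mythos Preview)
Your flooding-plus-local-test scheme is logically sound, and in some ways cleaner than the paper's: you avoid a second acceptance phase and you avoid transmitting path information. There is, however, a genuine gap between your decision rule and the hypothesis of the theorem. The statement only assumes that $\mathcal{G}$ is \emph{locally known} in the paper's sense, which lets a node test membership in $G_i$ only for its \emph{own incident} edges. Your rule asks $v$ to quantify over all $F'\subseteq E$ with $|F'|\le t$ and to check $F'\cap G_i=\emptyset$ for arbitrary edges; under the stated assumption $v$ knows neither the global edge set $E$ nor the membership of non-incident edges in $G_i$, so the test is not computable from $v$'s local view. For the specific hash-based family of Fact~\ref{fc:rpc-t} every node can evaluate membership of any edge identifier and your argument would go through, but the theorem is stated for an abstract locally known family and is reused in Section~\ref{sec:expanders} for a covering family built by independent per-endpoint coin flips, where a node genuinely cannot know which remote edges were sampled into $G_i$.

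The paper closes this gap by having nodes transmit, along with the message in sub-phase $i$, the actual $s$-$v$ path traversed (a ``heard bundle''), and then running a short second phase in which $accept(m)$ messages cascade from $s$; a node $v$ accepts $m$ when it receives $accept(m)$ from a neighbor $w$ and, among its collected paths for $m$ that avoid the edge $(v,w)$, the minimum $s$-$v$ cut is at least $t$. Safety follows because if $v$ were the first to accept a false $m'$, then $(w,v)\in F$ and the remaining $\le t{-}1$ adversarial edges would have to hit every collected path, forcing the cut below $t$. Liveness uses the covering family exactly as you do: for each candidate $(t{-}1)$-set one finds a clean sub-phase whose path avoids that set together with $(v,w)$ and all of $F$. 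The point is that the min-cut test uses only data $v$ has explicitly received, so it respects the weak local-knowledge assumption. Incidentally, the obstacle you flag---making the clean-subgraph invariant robust to an adaptive adversary---is not where the difficulty lies; once the receiver filters on $G_i$-membership of the incoming edge, the induction you sketch is immediate.
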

We note that by \Cref{obs:bounded-diam-f}, every $(2t+1)$ edge-connected graph $G$ with diameter $D$ satisfies the promise of \Cref{lem:main-t} for $L=(6t+2)D$. 
Our algorithm makes use of the following definition for a minimum $s$-$v$ cut defined over a collection of $s$-$v$ \emph{paths}. 
\begin{definition}[Minimum (Edge) Cut of a Path Collection]
	Given a collection of $s$-$v$ paths $\mathcal{P}$, the minimum $s$-$v$ cut in $\mathcal{P}$, denoted as $\mincut(s,v,\mathcal{P})$, is the minimal number of edges appearing on all the paths in $\mathcal{P}$. I.e., letting $\mincut(s,v, \mathcal{P})=x$ implies that there exists a collection of $x$ edges $E'$ such that for every path $P \in \mathcal{P}$, it holds that $E' \cap P\neq \emptyset$. 
\end{definition}
We are now ready to describe the broadcast algorithm given that the nodes know an $(L,2t)$ covering family $\mathcal{G}$ (along with the parameters $L$ and $t$) as specified by Theorem \ref{lem:main-t}. Later, we explain the general algorithm that omits this assumption.
%
%

\paragraph{Broadcast Algorithm $\BBtalgo(\cG,L,t)$.} 
Similarly to the single adversarial edge case, the algorithm has two phases, a flooding phase, and an acceptance phase. In the first phase of the algorithm, the nodes exchange messages over the subgraphs of $\mathcal{G}$, which also contains the path information along which the messages are received. In addition, instead of propagating the messages of distinct $G_i$ subgraphs in a pipeline manner, we run the entire $i$'th algorithm (over the edges of the graph $G_i$) after finishing the application of the $(i-1)$'th algorithm\footnote{One might optimize the $O(t)$ exponent by employing a pipeline approach in this case as well.}. 
\\
In the first phase, the nodes flood \emph{heard bundles} over all the $G_i \in \mathcal{G}$ subgraphs, defined as follows.
\\
\noindent\textbf{Heard bundles}: A \emph{bundle} of heard messages sent from a node $v$ to $u$ consists of: 
\vspace{-4pt}
\begin{enumerate} 
	\item A header message $heard(m,len,P)$, where $P$ is an $s$-$v$ path of length $len$ along which $v$ received the message $m$.
	\vspace{-4pt}
	\item A sequence of $len$ messages specifying the edges of $P$, one by one.
\end{enumerate}
\vspace{-4pt}
This bundle contains $(len+1)$ messages that will be sent in a pipeline manner in the following way. The first message is the header $heard(m,len,P)$. Then in the next consecutive $len$ rounds, $v$ sends the edges of $P$ in reverse order (from the edge incident to $v$ to $s$). 
\vspace{-7pt}
\paragraph{Phase 1: Flooding.}
The first phase consists of $\ell=|\mathcal{G}|$ iterations, where each iteration is implemented using $O(L)$ rounds. 
In the first round of the $i$'th iteration, the source node $s$ sends the message $heard(m_0,1,\emptyset)$ to all neighbors. 
Every node $v$, upon receiving the \emph{first} bundle message $heard(m',x,P)$ over an edge in $G_i$ from a neighbor $w$, stores the bundle $heard(m', x+1 , P\cup \{w\})$ and sends it to all neighbors.
Note that each node stores and sends at most \emph{one} heard bundle $heard(m', x ,P)$ in each iteration (and not one per message $m'$). 
\vspace{-7pt}
\paragraph{Phase 2: Acceptance.}
The second phase consists of $O(L)$ rounds, in which \emph{accept} messages are propagated from the source $s$ to all nodes as follows. In the first round, $s$ sends $accept(m_0)$ to all neighbors. Every node $v \in V \setminus \{s\}$ decides to accept a message $m'$ if the following two conditions hold: (i) $v$ receives $accept(m')$ from a neighbor $w$, and
(ii) $\mincut(s,v, \mathcal{P}) \geq t$, where 
\begin{equation}\label{eq:path-col}
	\mathcal{P}=\{P ~\mid~ \text{$v$ stored a $heard(m',len,P)$  message and } (v,w) \notin P\}~.
\end{equation}
Note that since the decision here is made by computing the minimum cut of a path collection, it is indeed required (by this algorithm) to send the path information.

\vspace{-5pt}
\paragraph{Correctness.}
We begin with showing that no node accepts a false message.
\begin{claim} \label{clm:t-no-false}
	No node $v \in V$ accepts a message $m' \neq m_0$ in the second phase.
\end{claim}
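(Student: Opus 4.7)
The plan is to combine a first-to-accept minimality argument (in the spirit of \Cref{clm:no-false}) with a structural invariant on the stored heard bundles. The invariant I would establish is: whenever an honest node $u$ stores a bundle $heard(m', x, P)$ with $m' \neq m_0$, the stored path $P$ contains at least one adversarial edge of $F$. This is proved by induction on $x$: the source $s$ only ever stores $heard(m_0, 0, \emptyset)$, so no inductive chain can reach $s$; and if $u$ stored the bundle in Phase~1 after receiving it from a neighbor $w$ over edge $(u,w)$, then either $(u,w) \in F$ (done, since $(u,w)$ is the last edge appended to $P$), or $(u,w) \notin F$, in which case $w$ honestly transmitted what it had stored, namely $heard(m', x-1, P \setminus \{(u,w)\})$, and the induction hypothesis produces an adversarial edge inside $P \setminus \{(u,w)\}$.

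With the invariant in hand, I would suppose for contradiction that some node accepts a message $m' \neq m_0$ and pick $v$ to be the earliest such node, breaking ties arbitrarily. Let $w$ be the neighbor from which $v$ received the triggering $accept(m')$ message. The next step is to show that $(v,w) \in F$: otherwise $w$ transmitted $accept(m')$ honestly, which by the Phase~2 rule forces $w$ itself to have accepted $m'$ in a strictly earlier round than $v$ (and $w \neq s$, since $s$ only emits $accept(m_0)$), contradicting the minimality of $v$.

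Finally, I would invoke the mincut acceptance condition at $v$. The path collection $\mathcal{P}$ consists of stored paths for message $m'$ that exclude the edge $(v,w)$, and by the acceptance rule $\mincut(s,v,\mathcal{P}) \geq t$. By the invariant, every $P \in \mathcal{P}$ meets $F$, so the set $F' := F \cap \bigcup_{P \in \mathcal{P}} P$ is an edge cut for $\mathcal{P}$, yielding $|F'| \geq \mincut(s,v,\mathcal{P}) \geq t$. Since by construction $(v,w)$ lies on no path of $\mathcal{P}$, we have $(v,w) \notin F'$, so $|F| \geq |F'| + 1 \geq t+1$, contradicting $|F| \leq t$.

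The most delicate point is the first-to-accept reduction: accepting and forwarding happen in quick succession in Phase~2, so I will need the standard one-round \congest\ transmission delay to conclude that an honest forwarder strictly precedes its receiver in acceptance time. Once this is pinned down, the remainder is a clean pigeonhole between the single adversarial edge $(v,w)$ used to forge the fake $accept(m')$ and the $t$ further adversarial edges required by the invariant to corrupt all paths in $\mathcal{P}$, which together exceed the adversary's budget.
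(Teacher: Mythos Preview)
Your proof is correct and follows essentially the same approach as the paper: the first-to-accept argument pins $(v,w) \in F$, and the invariant that every stored fake-message path contains an $F$-edge yields that $F \setminus \{(v,w)\}$ (of size at most $t-1$) cuts $\mathcal{P}$, contradicting the acceptance condition $\mincut(s,v,\mathcal{P}) \geq t$. The paper compresses your inductive invariant into a one-line observation (``any path $P$ \ldots\ must contain at least one edge in $E'$''), so your write-up is in fact more explicit on that step.
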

\begin{proof} \vspace{-5pt}
	Assume by contradiction there exists a node that accepts a false message $m'$, and let $v$ be the \emph{first} such node. By first we mean that any other node that accepted $m'$ accepted the message in a later round than $v$, breaking ties arbitrarily.
	Hence, $v$ received a message $accept(m')$ from some neighbor $w$. Because $v$ is the first such node, the edge $(w,v)$ is adversarial. Let $E'= F \setminus \{(w,v)\}$ be the set of the remaining $(t-1)$ adversarial edges, and let $\mathcal{P}$ be given as in Eq. (\ref{eq:path-col}).
	We next claim that $\mincut(s,v, \mathcal{P}) \leq (t-1)$ and therefore $v$ does not accept $m'$. 
	
	To see this, observe that any path $P$ such that $v$ received a message $heard(m', len, P)$ must contain at least one edge in $E'$. This holds even if the content of the path $P$ is corrupted by the adversarial edges. Since there are at most $(t-1)$ edges in $E'$, and each of the paths in $\mathcal{P}$ intersects these edges, it holds that $\mincut(s,v, \mathcal{P}) \leq (t-1)$ as required. 
\end{proof}
Finally, we show that all nodes in $V$ accept the message $m_0$ during the second phase. This completes the proof of \Cref{lem:main-t}.
\begin{claim} \label{clm:t-all-accepts}
	All nodes accept $m_0$ within $O(L)$ rounds from the beginning of the second phase.
\end{claim}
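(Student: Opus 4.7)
The plan is to show acceptance by induction along a BFS tree. Let $F$ denote the set of adversarial edges ($|F|\le t$), and let $T$ be a BFS tree of $G \setminus F$ rooted at $s$; this tree exists because the $(2t+1)$ edge-connectivity of $G$ guarantees that $G\setminus F$ is connected, and by the promise of the theorem applied to $E'=F$, every node lies within distance at most $L$ of $s$ in $G\setminus F$, so $T$ has depth at most $L$.

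The key flood-phase lemma I would establish first is: for every node $v$ and every edge set $E'\subseteq E$ with $|E'|\le t$, there exists an iteration $i$ of Phase~1 at whose end $v$ has stored a bundle $heard(m_0,len,P)$ with $P$ a genuine $s$-$v$ path of length at most $L$ satisfying $P\cap E'=\emptyset$. To prove this, take $E''=E'\cup F$, of size at most $2t$; the promise on $L$ yields an $L$-length $s$-$v$ path in $G\setminus E''$, and the $(L,2t)$-covering property (Def.~\ref{def:covering-Lk}) provides a subgraph $G_i\in\mathcal{G}$ that contains this path and is disjoint from $E''$. Since $G_i\cap F=\emptyset$, the entire $i$-th flooding iteration is honest: every heard bundle transmitted along a $G_i$-edge faithfully reflects its own propagation history in $G_i$. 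Because an $s$-$v$ path of length at most $L$ exists in $G_i$, within the $O(L)$ rounds allotted to iteration $i$ the node $v$ receives its first heard bundle via some $G_i$-neighbor, and the stored path $P$ is a bona fide $s$-$v$ path in $G_i$, hence disjoint from $E'$.

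Given this, I close the induction on the layers of $T$. In round $1$ of Phase~2 the source transmits $accept(m_0)$ to its neighbors, so every layer-$1$ node receives $accept(m_0)$ from $s$ in round $1$. For the induction step, suppose every layer-$(\ell-1)$ node has accepted $m_0$ and broadcast $accept(m_0)$ by round $\ell-1$, and fix $v$ in layer $\ell$ with parent $w=p(v)$. The edge $(v,w)\notin F$ is reliable, so $v$ receives $accept(m_0)$ from $w$ by round $\ell$. To verify condition~(ii) of the acceptance rule, I must show $\mincut(s,v,\mathcal{P})\ge t$ where $\mathcal{P}$ consists of the stored $m_0$-paths avoiding $(v,w)$. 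For any candidate cut $E''\subseteq E$ of size $t-1$, apply the flood-phase lemma to $E'=E''\cup\{(v,w)\}$ (still of size at most $t$); the lemma produces a stored real $s$-$v$ path for $m_0$ that avoids both $(v,w)$ and $E''$, so it lies in $\mathcal{P}$ and is not intersected by $E''$. Hence no $(t-1)$-edge set cuts $\mathcal{P}$, so $v$ accepts $m_0$ in round $\ell$ and broadcasts $accept(m_0)$, closing the induction. Since $T$ has depth at most $L$, all nodes accept within $O(L)$ rounds of Phase~2.

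The main obstacle is the flood-phase lemma, i.e.\ making precise that the bundle $v$ stores during a ``clean'' iteration $i$ is guaranteed to encode a faithful $s$-$v$ path in $G_i$, even though the first-to-arrive bundle might not be the specific $L$-length path produced by the covering family. This hinges on the observation that once $G_i$ is adversary-free every bundle transmitted on a $G_i$-edge truthfully tracks its propagation, so the first bundle $v$ stores in iteration $i$ inevitably encodes an honest walk through $G_i$ of length at most $L$. Once this guarantee is in place, the BFS-layer induction and the min-cut verification via the flood-phase lemma are routine.
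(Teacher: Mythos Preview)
Your proposal is correct and follows essentially the same approach as the paper: induction on distance from $s$ in $G\setminus F$, and for each node $v$ with reliable incoming parent edge $(w,v)$, verifying $\mincut(s,v,\mathcal{P})\ge t$ by applying the $(L,2t)$-covering property to $F\cup E''\cup\{(v,w)\}$ for an arbitrary $(t-1)$-set $E''$, then observing that the stored path in the clean iteration lies in $G_k$ and hence avoids $E''\cup\{(v,w)\}$. Your explicit ``flood-phase lemma'' and your remark that the stored path need not be the specific covering-family path (only some honest $G_i$-path) are exactly the points the paper's proof uses, just packaged slightly differently.
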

\begin{proof} \vspace{-5pt}
	We will show that all nodes accept the message $m_0$ by induction on the distance from the source $s$ in the graph $G \setminus F$. 
	The base case holds vacuously, as $s$ accepts the message $m_0$ in round $0$. Assume all nodes at distance at most $i$ from $s$ in $G \setminus F$ accepts the message $m_0$ by round $i$. Consider a node $v$ at distance $(i+1)$ from $s$ in $G \setminus F$. By the induction assumption on $i$, $v$ receives the message $accept(m_0)$ from a neighbor $w$ in round $j \leq (i+1)$ over a reliable edge $(w,v)$. 
	We are left to show that $\mincut(s,v, \mathcal{P})\geq t$, where $\mathcal{P}$ is as given by Eq. (\ref{eq:path-col}). Alternatively, we show that for every edge set $E' \subseteq E \setminus \{(w,v)\}$ of size $(t-1)$, the node $v$ stores a heard bundle containing $m_0$ and a path $P_k$ such that $P_k \cap (E'\cup \{(v,w)\}) = \emptyset$ during the first phase. This necessary implies that the minimum cut is at least $t$.
	
	For a subset $E' \subseteq E$ of size $(t-1)$, as $|F \cup E'\cup \{w,v\}| \leq 2t$ by the promise on $L$ in \Cref{lem:main-t}, $\dist_{G \setminus (F \cup E'\cup \{w,v\})}(s,v) \leq L$. By the properties of the covering family $\mathcal{G}$ (Def. \ref{def:covering-Lk}), it follows that there exists a subgraph $G_k$ such that $G_k \cap (F \cup E' \cup \{(v,w)\})=\emptyset$, and $\dist_{G_k}(s,v) \leq L$. Hence, all edges in $G_k$ are reliable, and the only message that passed through the heard bundles during the $k$'th iteration is the correct message $m_0$.
	As $\dist_{G_k}(s,v) \leq L$, the node $v$ stores a heard bundle $heard(m_0, x , P_k)$ during the $k$'th iteration for some $s$-$v$ path $P_k$ of length $x= O(L)$. 
	Moreover, as $P_k \subseteq G_k$ it also holds that $P_k \cap (E'\cup \{(v,w)\}) = \emptyset$.
	We conclude that $\mincut(s,v, \mathcal{P})\geq t$, and by the definition of Phase 2, $v$ accepts $m_0$ by round $(i+1)$. The claim follows as the diameter of $G \setminus F$ is $O(L)$ due to the promise on $L$.
\end{proof}

\paragraph{Algorithm $\UnBBalgo$ (Proof of \Cref{thm:t-broadcast})}
We now describe the general broadcast algorithm. 
Our goal is to apply Alg. $\BBtalgo(\cG,L,t)$ over the $(L,2t)$ covering family $\cG$ for $L=O(t D)$, constructed using \Cref{fc:rpc-t}. Since the nodes do not know the diameter $D$ (or a linear estimate of it), we make $O(\log D)$ applications of Alg. $\BBtalgo(\cG_i,L_i,t)$ using the $(L_i,2t)$ covering family $\cG_i$ for $L_i=O(t D_i)$, where $D_i=2^i$ is the diameter guess for the $i$'th application. 

Specifically, at the beginning of the $i$'th application, the source $s$ initiates the execution of Alg.
$\BBtalgo(\cG_i,L_i,t)$ with the desired message $m_0$ over an $(L_i,2t)$ covering family $\mathcal{G}_i$ constructed using \Cref{fc:rpc-t} with $L_i=O(t D_i)$ and $D_i=2^{i}$. 
Denote all nodes that accepted the message $m_0$ at the end of Alg. $\BBtalgo(\cG_i,L_i,t)$ by $A_i$, and let $N_i=V \setminus A_i$ be the nodes that did not accept the message. 

The algorithm now applies an additional step where the nodes in $N_i$ inform $s$ that they did not accept any message in the following manner. All nodes in $N_i$ broadcast \emph{the same} designated message $M$ by applying Alg. $\BBtalgo(\cG'_i,c t L_i,t)$ over a $(c t L_i,2t)$ covering family $\cG'_i$, for some fixed constant $c>0$ (known to all nodes). This can be viewed as performing a single broadcast execution (i.e., with the same source message) but from $|N_i|$ multiple sources. We next set $\tau_i=O(t \cdot D_i \log n)^{O(t)}$ as a bound on the waiting time for a node to receive any acknowledgment. 

If the source node $s$ accepts the message $M$ at the end of this broadcast execution, it waits for $\tau_i$ rounds, and then continues to the next application\footnote{We make the source node $s$ wait since in the case where it actually sends a termination message, all nodes accept it within $\tau_i$ rounds. Therefore, we need to make sure that all nodes start the next $(i+1)$ application at the same time.} $(i+1)$ (with diameter guess $2^{i+1}$). In the case where $s$ did not accept the message $M$ within $\tau_i$ rounds from the beginning of that broadcast execution, it broadcasts a termination message $M_T$ to all nodes in $V$. This is done by applying Alg. $\BBtalgo(\cG'_i,c t L_i,t)$ over the $(c t L_i,2t)$ covering family $\cG'_i$. Once a node $v \in V$ accepts the termination message $M_T$, it completes the execution with the last message it has accepted so far (in the analysis part, we show that it indeed accepts the right message). A node $v$ that did not receive a termination message $M_T$ within $\tau_i$ rounds, continues to the next application of Alg. $\BBtalgo$.

The correctness argument exploits the fact that for an application $i$ such that $N_i \neq \emptyset$, the graph $G'$ obtained by contracting \footnote{I.e., we contract all edges with both endpoints in $N_i$.} all nodes in $N_i$ into a single node $a$, satisfies the following: (i) it is $(2t+1)$ edge-connected, (ii) it contains $s$, and (iii) it has diameter $O(L_i)=O(t \cdot D_i)$. 
Property (i) holds since $G$ is $(2t+1)$ edge-connected, and property (ii) holds by the definition of $A_i$.
We next show property (iii) holds as well.
\begin{claim} \label{clm:distance-t}
	For every application $i$ of Alg. $\BBtalgo(\cG_i,L_i,t)$ and every node $v \in A_i$, it holds that $\dist_{G \setminus F}(s,v) = O(L_i)$.
\end{claim}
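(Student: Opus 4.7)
\noindent\textbf{Proof proposal for Claim \ref{clm:distance-t}.}
The plan is to mirror the structure of the single-adversarial-edge argument in Claim~\ref{clm:small-tree}, but using the min-cut acceptance condition of Alg.~$\BBtalgo$ in place of the simpler ``store a bundle avoiding $e'$'' condition. Fix an application $i$ and a node $v\in A_i$. I would trace back the chain of $accept(m_0)$ messages that led $v$ to accept: define $v=v_0,v_1,v_2,\ldots$ iteratively by letting $v_{j+1}$ be the neighbor from which $v_j$ received the $accept(m_0)$ that caused it to accept, continuing only as long as the edge $(v_j,v_{j+1})$ is reliable. Since each $v_{j+1}$ accepted strictly earlier than $v_j$ and Phase $2$ lasts only $O(L_i)$ rounds, this chain terminates after some $J=O(L_i)$ steps at a node $v_J$, and the chain $v_0,v_1,\ldots,v_J$ is a path in $G\setminus F$.

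If $v_J=s$, we are immediately done: $v$ is within $O(L_i)$ of $s$ in $G\setminus F$. Otherwise, the chain stopped because $v_J$ accepted $m_0$ via an adversarial edge $(v_J,u)\in F$. At this point I would invoke the Phase $2$ acceptance condition for $v_J$: $\mincut(s,v_J,\mathcal{P}_{v_J})\geq t$ where $\mathcal{P}_{v_J}$ consists of $v_J$'s stored bundles for $m_0$ that avoid the edge $(v_J,u)$. Since $|F\setminus\{(v_J,u)\}|=t-1$, this set cannot cut $\mathcal{P}_{v_J}$, so there exists a stored path $P^*\in\mathcal{P}_{v_J}$ avoiding all of $F$.

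The key step is then to argue that $P^*$ is not merely a declared (potentially corrupt) sequence of edges, but an actual physical forwarding chain in $G\setminus F$ from $s$ to $v_J$. This is precisely the trace-back observation already used in the proof of Claim~\ref{clm:t-no-false}: at each hop an honest receiver appends its true sender to the bundle, and so whenever an adversarial edge is present on the physical chain, the declared path must record it. Consequently, a declared path that contains no edge of $F$ must equal the true physical chain, which originated honestly at $s$ with message $m_0$. Hence $P^*\subseteq G\setminus F$ is an $s$-$v_J$ path of length at most the Phase~$1$ bundle-length bound $O(L_i)$. Concatenating with the reliable chain $v_J,\ldots,v_0=v$ yields $\dist_{G\setminus F}(s,v)\leq |P^*|+J=O(L_i)$.

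The main obstacle I anticipate is just making the trace-back observation airtight when multiple adversarial edges could lie on a declared path; in particular, I need to be careful that ``$P^*$ avoids $F$'' really does promote to ``$P^*$ is a genuine propagation path'' rather than merely a plausible-looking data object injected by a distant adversarial edge. This is handled by noting that an honest node only emits a bundle it has itself received and stored, while appending its true sender, so any discrepancy between declared and physical paths must be witnessed by an $F$-edge on the declared path itself; once this is in hand, the rest of the argument is routine and essentially identical to the single-fault case.
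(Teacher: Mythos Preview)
Your proposal is correct and takes essentially the same approach as the paper: trace the $accept(m_0)$ chain back from $v$ until you hit either $s$ or an $F$-edge, then at that node use the acceptance condition $\mincut\ge t$ together with $|F\setminus\{\text{that edge}\}|\le t-1$ to extract a stored Phase~1 path avoiding all of $F$, and concatenate. The trace-back observation you flag---that a declared bundle path with $P^*\cap F=\emptyset$ must coincide with the true physical forwarding path because each honest receiver appends its true sender---is exactly what the paper relies on (more tersely) when it asserts that $P_2$ is a genuine $s$--$v_2$ path in $G\setminus F$.
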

\begin{proof}
	Recall that Alg. $\BBtalgo$ proceeds in two phases. In the first phase, the source node propagates \emph{heard} bundles, and in the second phase, the source node propagates \emph{accept} messages.  
	Let $v$ be a node that accepts the message $m_0$ in the $i$'th application of Alg. $\BBtalgo(\cG_i,L_i,t)$. By Phase 2 of Alg. $\BBtalgo$, $v$ received an $accept(m_0)$ message from a neighbor $w$ in the second phase, and in the first phase it received heard bundles regarding $m_0$ over a path collection $\cP$ such that $\mincut(s,v,\cP) \geq t$, and $(v,w) \notin P$ for every $P \in \cP$.
	Let $P_1$ be the path on which the message $accept(m_0)$ propagated towards $v$ in the second phase of Alg. $\BBtalgo(\cG_i,L_i,t)$, executed by $s$ with the desired message $m_0$. Since the second phase is executed for $O(L_i)$ rounds, it holds that $|P_1| = O(L_i)$. If $P_1 \cap F = \emptyset$, $P_1$ starts at $s$ and the claim follows.

	Next, assume $P_1 \cap F \neq \emptyset$. Let $(v_1,v_2)$ be the closest edge to $v$ on $P_1$ such that $(v_1,v_2)\in F$ (i.e., $P_1[v_2,v] \cap F = \emptyset$). Hence, $v_2$ sent an $accept(m_0)$ message to $v$ during the second phase over the path $P_1$. Therefore, $v_2$ received heard bundles regarding $m_0$ over a path collection $\cP$ such that $\mincut(s,v,\cP) \geq t$, and $(v_1,v_2) \notin P$ for every $P \in \cP$. As $|F \setminus \{(v_1,v_2)\}| \leq (t-1)$, there must be a path $P_2 \in \cP$ such that $P_2 \cap F = \emptyset$. Hence, $P_2$ is an $s$-$v_2$ path  of length $|P_2| = O(L_i)$ in $G \setminus F$. Thus, the concatenated path $P_2 \circ P_1[v_2,v]$ is a path of length $O(L_i)$ from $s$ to $v$ in $G \setminus F$. 
\end{proof}

Using similar arguments to \Cref{obs:bounded-diam-f}, from \Cref{clm:distance-t} it follows that for every set $E' \subseteq E$ of size $|E'| \leq t$, and every node $v \in A_i$ it holds that $\dist_{G \setminus F\cup E'}(N_i,v) = O(t \cdot L_i)$.
\begin{observation} \label{obs:distance-2t+1}
	For every node $v \in A_i$ and an edge set $\E' \subseteq E$ of size $|E'|\leq t$, it holds that $\dist_{G \setminus (F\cup E')}(N_i,v) = O(t \cdot L_i)$.
\end{observation}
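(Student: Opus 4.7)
The plan is to adapt the tree-transformation argument of \Cref{obs:bounded-diam-f} (as used in the single-fault case in the proof of \Cref{obs:small-path}) to handle up to $t$ additional edge faults $E'$. The starting point is a truncated BFS tree $T$ rooted at $s$ in $G \setminus F$ of minimum depth spanning $A_i$, with all leaves in $A_i$. By \Cref{clm:distance-t} the depth of $T$ is $O(L_i)$, hence every tree subpath has length $O(L_i)$.

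Next, I consider the forest $T \setminus E'$, which has at most $|E'|+1 \leq t+1$ tree components $C_1,\ldots,C_m$, each of diameter $O(L_i)$. Since $G$ is $(2t+1)$-edge connected and $|F \cup E'| \leq 2t$, the subgraph $G \setminus (F \cup E')$ is still connected, so $v$ is reachable from $N_i$ in $G \setminus (F \cup E')$. To bound this distance, I introduce an auxiliary \emph{component graph} $H$ whose vertices are the tree components $C_1,\ldots,C_m$ together with a special vertex $a$ representing $V \setminus V(T) \subseteq N_i$ (if non-empty). An $H$-edge connects two parts whenever there is an edge between them in $G \setminus (F \cup E')$. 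Connectivity of $G \setminus (F \cup E')$ implies connectivity of $H$, and clearly $|V(H)| \leq t+2$.

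Let $C_v$ be the $H$-vertex containing $v$. A shortest $H$-path from $C_v$ to any $H$-vertex intersecting $N_i$ (either $a$, or some $C_j$ that contains an $N_i$ node as an internal $T$-vertex) has length at most $t+1$. This $H$-path lifts to a $G$-path: each $H$-edge corresponds to a single ``jump'' edge in $G \setminus (F \cup E')$, and within each visited tree component $C_j$ one uses a tree subpath (of length $O(L_i)$) to traverse from the incoming endpoint to the outgoing one. Concatenating yields a walk in $G \setminus (F \cup E')$ from $v$ to some node of $N_i$ of total length at most $(t+2) \cdot O(L_i) + (t+1) = O(t \cdot L_i)$, as required.

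The main obstacle is verifying that the lifted path uses only edges in $G \setminus (F \cup E')$: the tree subpaths lie in $T \subseteq G \setminus F$ and, by construction of the components of $T \setminus E'$, avoid $E'$; the jumping edges lie in $G \setminus (F \cup E')$ by definition of $H$. A corner case is when $V(T)$ already contains nodes of $N_i$ as internal vertices: then $C_v$ itself may intersect $N_i$, and the bound collapses to $O(L_i)$ automatically. Overall, the argument mirrors \Cref{obs:bounded-diam-f}, but with ``depth $D$ BFS tree of $G$'' replaced by ``depth $O(L_i)$ truncated BFS tree in $G \setminus F$ spanning $A_i$'' and with $E'$ (of size $\leq t$) playing the role of the fault set.
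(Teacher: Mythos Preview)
Your proposal is correct and follows essentially the same approach as the paper: both start from the truncated BFS tree $T$ in $G\setminus F$ of depth $O(L_i)$ spanning $A_i$, split it into at most $t+1$ pieces by removing $E'$, invoke $(2t+1)$-edge connectivity to get connectivity of $G\setminus(F\cup E')$, and then bound the $N_i$--$v$ distance by stitching together at most $t+1$ tree subpaths and $O(t)$ ``jump'' edges. The only difference is presentational --- you make the component graph $H$ explicit and lift a short $H$-path, whereas the paper phrases the same step as transforming the shortest $N_i$--$v$ path in $G\setminus(F\cup E')$ by shortcutting through the tree components (exactly as in \Cref{obs:bounded-diam-f}).
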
 
\begin{proof}
Let $T$ be a truncated BFS tree rooted at $s$ in $G \setminus F$, such that (1) $A_i \subseteq V(T)$, and (2) all the leaf nodes of $T$ are included in $A_i$. 
By \Cref{clm:distance-t} the depth of $T$ is $O(L_i)$.
It then follows that the forest $T \setminus  E'$ consists of $|E'|+1\leq (t+1)$ trees, each of diameter $O(L_i)$. 
Since $G$ is $(2t+1)$ edge-connected, there exists a path from some node in $N_i$ to $v$ in $G \setminus (F\cup E')$.
Hence, the shortest path from $N_i$ to $v$ in $G \setminus (F\cup E')$ denoted as $P$ can be transformed into a path $P'$ containing at most $t$ edges in $P$, as well as, $(t+1)$ tree subpaths of the forest $T \setminus E'$. Therefore, $\dist_{G \setminus (F\cup E')}(N_i,v) =O(t \cdot L_i)$.
\end{proof}

We are now ready to complete the proof of \Cref{thm:t-broadcast}.

\begin{proof}[Proof of \Cref{thm:t-broadcast}]
Fix an application $i$ of Alg. $\BBtalgo(\cG_i,L_i,t)$. 
We first claim that if $N_i \neq \emptyset$, the source node $s$ accepts the message $M$, and continues to the next application $(i+1)$.
In the second step of the $i$'th application, the $N_i$ nodes broadcast the message $M$ by executing Alg. $\BBtalgo(\cG'_i,ctL_i,t)$ over a $(ctL_i,2t)$ covering family $\cG'_i$. 
By \Cref{obs:distance-2t+1}, for a large enough constant $c$ it hold that for every $v \in A_i$ and $\E' \subseteq E$ of size $|E'|\leq t$, it holds that $\dist_{G \setminus F\cup E'}(N_i,v) \leq c t L_i$. 
Hence, using the same arguments as in \Cref{clm:t-all-accepts} we can conclude that all nodes in $A_i$ accept the message $M$ in the second phase of Alg. $\BBtalgo(\cG'_i,ctL_i,t)$.
Specifically, the source node $s\in A_i$ accepts the message and proceeds to application $(i+1)$. 

Next, consider the first application $i^*$ for which $N_{i^*} = \emptyset$, and all nodes accept the broadcast message $m_0$. Since the diameter of the graph is $D$, the algorithm must have run at least $D/2$ rounds. This implies that $D_{i^*}=\Omega(D/t)$ and $L_{i^*} = \Omega(D)$.  Hence, for a large enough constant $c$, when $s$ broadcasts the termination message $M_T$ using Alg. $\BBtalgo(\cG'_{i^*},ctL_{i^*},t)$ over the $(ctL_{i^*},2t)$ covering family $\cG'_{i^*}$, all nodes accept $M_T$ and finish the execution. 
	
	We next consider the round complexity. Clearly, when taking $D_i \geq D$, by \Cref{lem:main-t} the set $N_i=\emptyset$, and we are done. Therefore, we can conclude that $i^* \leq \log D + 1$. 
	The algorithm performs $O(\log D)$ applications of Alg. $\BBtalgo$, each with parameters $L_i=O(tD)$ and $ctL_i=O(t^2D)$ (used in the construction of $\cG'_i$). Thus, the total round complexity is bounded by $(tD \log n)^{O(t)}$.
\end{proof}
Finally, we observe that our broadcast algorithm can be implemented in the \local\ model using $O(tD\log n)$ many rounds.
\begin{corollary}\label{cor:localByzn}
	For every $(2t+1)$ edge-connected graph, and a source node $s$, there is a deterministic broadcast algorithm against $t$ adversarial edges that runs in $O(tD \log n)$ \emph{local} rounds.
\end{corollary}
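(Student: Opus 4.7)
}
The plan is to re-implement Algorithm $\BBtalgo$ of \Cref{lem:main-t} directly in the \local\ model, exploiting the absence of bandwidth constraints to collapse the pipeline over the covering family into a single universal flooding. Concretely, Phase~1 of $\BBtalgo$ would be replaced by the following: for $O(tD)$ consecutive rounds each node transmits to every neighbor the entire set of heard-bundles $heard(m,len,P)$ it currently stores. Since messages may be of arbitrary size in \local, this takes one round per iteration, in contrast to the $|\mathcal{G}|$ separate iterations required in \congest. Phase~2 is kept verbatim: a node $v$ accepts $m'$ when it receives $accept(m')$ from a neighbor $w$ and the stored path collection $\mathcal{P}$ for $m'$ (with $(v,w)\notin P$ for each $P\in\mathcal{P}$) satisfies $\mincut(s,v,\mathcal{P})\geq t$.

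Correctness then follows from essentially the same two arguments as \Cref{clm:t-no-false} and \Cref{clm:t-all-accepts}. For the no-false-accept direction, any path carried in a heard-bundle for $m'\neq m_0$ must contain at least one adversarial edge, so once we exclude the (necessarily adversarial) receiving edge $(v,w)$, at most $t-1$ adversarial edges remain, forcing $\mincut\leq t-1$. For the all-accept direction, given any candidate cut $E'\subseteq E\setminus\{(v,w)\}$ of size $t-1$, the set $F\cup E'\cup\{(v,w)\}$ has at most $2t$ edges, so by \Cref{obs:bounded-diam-f} and the $(2t+1)$ edge-connectivity there is an $s$-$v$ path of length $O(tD)$ in $G\setminus(F\cup E'\cup\{(v,w)\})$. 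All its edges are reliable and the corresponding heard-bundle is delivered during the universal flooding, so the min-cut criterion is satisfied.

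Since the nodes still do not know $D$, I would wrap this procedure in the guess-and-double scheme of Algorithm $\UnBBalgo$ used in the proof of \Cref{thm:t-broadcast}: run $O(\log D)=O(\log n)$ applications with $D_i=2^i$, interleaved with the acknowledgement broadcast from $N_i$ back to $s$ and the final termination broadcast (all implemented by the same \local\ primitive). The $i$-th application now costs $O(tD_i)$ \local\ rounds rather than $(tD_i\log n)^{O(t)}$ \congest\ rounds, so the geometric sum telescopes to a total of $O(tD\log n)$ rounds.

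The only mild obstacle is checking that the correctness proofs of \Cref{thm:t-broadcast} (in particular \Cref{clm:distance-t} and \Cref{obs:distance-2t+1}) transfer when the covering-family flooding is replaced by universal flooding. This is automatic, since universal flooding subsumes flooding over any fixed subgraph: every heard-bundle whose existence was guaranteed via $\mathcal{G}$ is, a fortiori, delivered here, and no additional false bundles are admitted beyond those the original algorithm was already designed to reject via the min-cut test.
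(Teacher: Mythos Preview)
Your proposal is correct, and in fact a touch more elementary than what the paper does. The paper's proof is just two sentences: it keeps the algorithm \emph{literally identical} to the \congest\ version (covering family and all), and observes that with unbounded bandwidth the $|\mathcal{G}|$ subgraph-iterations of Phase~1 can be executed simultaneously in $L=O(tD)$ rounds rather than sequentially in $L\cdot|\mathcal{G}|$ rounds. Correctness is then inherited verbatim from \Cref{lem:main-t} and the $\UnBBalgo$ wrapper.

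You instead discard the covering family altogether and replace Phase~1 by a single universal flooding of \emph{all} heard-bundles. This is a genuine simplification: the \local\ algorithm no longer depends on \Cref{fc:rpc-t} at all. The price is that you must re-verify \Cref{clm:t-no-false} and \Cref{clm:t-all-accepts} in the new setting rather than inherit them, which you do correctly (the key invariant---that every stored path for $m'\neq m_0$ records the last adversarial edge on its actual route, because honest nodes always append the incoming edge---survives intact). Your flooding delivers a strict superset of the bundles delivered by the paper's simultaneous covering-family flooding, so the all-accept direction is immediate; and the min-cut acceptance test is unchanged, so no new false bundles slip through. One cosmetic difference: your Phase~1 may store exponentially many bundles per node, whereas the paper's version stores at most $|\mathcal{G}|$; this is of course irrelevant in \local.
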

\begin{proof}
	The algorithm is the same as in the \congest\ model. However, since in the local model there are no bandwidth restrictions, the message propagation over the $|\mathcal{G}|$ subgraphs of the $(L, t)$ covering family can be implemented simultaneously within $L=O(tD)$ rounds. 
\end{proof}

\section{Broadcast Algorithms for Expander Graphs}\label{sec:expanders}

In this section, we show improved constructions of covering families for expander graphs, which consequently lead to considerably faster broadcast algorithms. 
Specifically, we show that the family of expander graphs with a sufficiently large minimum degree admits covering families of a considerably smaller size (than that obtained for general graphs). 
We start by providing the precise definition of expander graphs used in this section. 

\paragraph{Expander graphs.} For a node subset $S \subseteq V$, denote by $\delta_G(S)=|E \cap (S \times (V\setminus S))|$ the number of edges crossing the $(S, V \setminus S)$ cut . The \emph{volume} $vol_G(S)$ of $S$ in $G$ is the number of edges of $G$ incident to $S$. Assuming both $S$ and $V \setminus S$ have positive volume in $G$, the \emph{conductance} $\phi_G(S)$ of $S$ is defined as $\phi_G(S)=\delta_G(S)/\min\{ vol_G(S), vol_G(V \setminus S)\}$. The \emph{edge expansion}  of a graph $G$ is given by $\phi(G)=\min_{S\subset V}\phi_G(S)$. We say a graph $G$ is a $\phi$-expander if $\phi(G)\geq \phi$.

The structure of this section is as follows. We first provide a combinatorial randomized construction of $(L, 2t)$ covering families for $n$-node $\phi$-expander graphs\footnote{with slightly weaker properties, which are sufficient for our purposes.} with minimum degree $\Theta(t \log n/ \phi)$. Then, we show a randomized construction of $(L, 2t)$ covering families in the adversarial \congest\ model. Finally, we provide an improved broadcast algorithm that uses these families and is resilient against $t$ adversarial edges, given a $\phi$-expander graph with minimum degree $\Theta(t^2 \log n / \phi)$.

\paragraph{Covering families with improved bounds.}  The computation of the improved covering family is based on showing that a sampled subgraph of $G$ obtained by sampling each edge independently with probability $p=\Theta(1 / t)$ satisfies some desired expansion and connectivity properties. 

We use the following result from \cite{wulff2017fully} which provides conductance guarantees for the subgraphs obtained by Karger's edge sampling technique. This result also implies that expander graphs with large minimum degree have large edge-connectivity.

\begin{theorem}[Lemma 20 from \cite{wulff2017fully}]\label{thm:conduct-karger}
	Given $c>0, \kappa \geq 1$ and $\rho\leq 1$, let $G=(V,E)$ be an $n$-node multigraph with degree at least $\kappa \rho$. Let $G'=(V,E')$ be the multigraph obtained from $G$ by sampling each edge independently with probability 
	$$p=\min\{1, (12c+24)\ln n/(\rho^2 \kappa)\}~.$$
	Then, with probability $1-O(1/n^c)$, for every cut $(S,V \setminus S)$ in $G$, it holds that:
	\begin{itemize}
		\item if $\phi_{G}(S)\geq \rho$ then $\phi_{G'}(S)$ deviates from $\phi_{G}(S)$ by a factor of at most $4$, and
		\item if $\phi_{G}(S)< \rho$ then $\phi_{G'}(S)<6\rho$.
	\end{itemize}
\end{theorem}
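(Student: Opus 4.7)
The plan is to follow the classical Karger edge-sampling template, enhanced with a conductance-aware concentration analysis. Three ingredients suffice: per-cut Chernoff bounds on the relevant sums of independent Bernoullis, a bucketing of cuts by size, and a union bound driven by Karger's cut-counting theorem.

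First I would fix a single cut $(S, V\setminus S)$ and consider the three sums $\delta_{G'}(S)$, $vol_{G'}(S)$ and $vol_{G'}(V\setminus S)$, each of which is a sum of independent indicators and hence amenable to multiplicative Chernoff. The minimum-degree hypothesis forces $\min(vol_G(S), vol_G(V\setminus S)) \geq \kappa\rho$. In the large-conductance regime $\phi_G(S)\geq \rho$ this gives $\delta_G(S)\geq \rho^2\kappa$, so $p\cdot \delta_G(S) = \Omega(\ln n)$ with $p = (12c+24)\ln n/(\rho^2\kappa)$, and both numerator and denominator concentrate multiplicatively to within any desired constant factor. Dividing a slightly inflated $p\delta_G(S)$ by a slightly deflated $p\min(vol_G(S), vol_G(V\setminus S))$ yields the factor-$4$ preservation stated in the first bullet (the constant $4$ is a budget-allocation choice, slack in numerator times slack in denominator).

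The subtle case is $\phi_G(S) < \rho$, where $\delta_G(S)$ itself can be too small for a relative-error Chernoff bound to pin $\delta_{G'}(S)$ to its mean. The plan is to replace the two-sided bound on the numerator by a one-sided additive upper tail, which controls $\delta_{G'}(S) \leq p\delta_G(S) + O(\ln n)$ with high probability regardless of expectation size, while still using the multiplicative lower tail on the volumes (whose means are already $\Omega(\ln n)$ thanks to the minimum-degree floor). Then
\[
\phi_{G'}(S) \;\leq\; \frac{p\delta_G(S) + O(\ln n)}{(1-\varepsilon)\,p\min(vol_G(S), vol_G(V\setminus S))} \;\leq\; (1+O(\varepsilon))\phi_G(S) \,+\, \frac{O(\ln n)}{p\kappa\rho},
\]
and substituting $p = (12c+24)\ln n/(\rho^2\kappa)$ converts the additive term into $O(\rho)$, so tuning constants delivers the $6\rho$ bound.

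Finally, to turn this per-cut statement into an "all cuts" statement I would partition cuts by size and invoke Karger's cut-counting lemma: in a multigraph with minimum cut $\lambda$, the number of cuts of size at most $\alpha\lambda$ is at most $n^{2\alpha}$. Note $\lambda \geq \kappa\rho$ is forced by the degree assumption together with the first bullet applied to small cuts. For each geometric bucket of cuts of size $\Theta(\alpha\kappa\rho)$, the per-cut failure probability is at most $n^{-\Omega(c\alpha)}$ (the sampling rate $p$ was calibrated precisely for this), which beats the count $n^{2\alpha}$ with room to spare; summing over $O(\log n)$ buckets remains $O(n^{-c})$. The main obstacle is the small-conductance regime, where the usual relative-error Chernoff bounds fail and one must combine an additive tail on the numerator with a multiplicative tail on the denominator; hitting the explicit $6\rho$ (as opposed to some unspecified $O(\rho)$) requires the minimum-degree hypothesis to be strong enough that the additive slack $O(\ln n/p)$ collapses to a small multiple of $\rho$, which is exactly why the sampling probability is chosen as $(12c+24)\ln n/(\rho^2\kappa)$ rather than anything smaller.
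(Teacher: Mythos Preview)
The paper does not prove this statement at all: it is quoted as Lemma~20 of Wulff-Nilsen~\cite{wulff2017fully} and used purely as a black box (to derive Corollary~18 and Corollary~20). There is therefore no proof in the paper to compare your proposal against.

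As a separate remark on your sketch: the high-level template (per-cut Chernoff, bucketing, Karger cut-counting) is the right one, and your treatment of the two conductance regimes is sensible. However, the sentence ``$\lambda \geq \kappa\rho$ is forced by the degree assumption together with the first bullet applied to small cuts'' is not correct as stated. The minimum-degree hypothesis only bounds \emph{singleton} cuts from below; the global minimum cut $\lambda$ of a multigraph can be arbitrarily smaller than the minimum degree (think of two dense blobs joined by a single edge), and the theorem explicitly allows cuts with $\phi_G(S)<\rho$, so you cannot assume all cuts are large. Getting the union bound to go through in the low-conductance regime, where $\delta_G(S)$ may be tiny and Karger's $n^{2\alpha}$ count is driven by the true $\lambda$, is exactly the technical wrinkle that Wulff-Nilsen's argument has to address beyond the vanilla Karger analysis; your sketch elides this point.
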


\begin{corollary}
	Given a $\rho$-expander graph $G$ with minimum degree $\kappa \cdot \rho$, the edge-connectivity of $G$ is $\Omega(\rho^2 \kappa/ \log^2 n)$.
\end{corollary}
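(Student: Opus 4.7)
The plan is to prove the corollary directly from the definitions of conductance and edge expansion, without needing to invoke the sampling-based Theorem~\ref{thm:conduct-karger} at all. Let $(S, V \setminus S)$ be any nontrivial cut of $G$, and assume without loss of generality that $vol_G(S) \leq vol_G(V \setminus S)$. By the $\rho$-expander hypothesis,
\[
\delta_G(S) \;\geq\; \rho \cdot vol_G(S).
\]
Since $S$ is nonempty, $vol_G(S) = \sum_{v \in S} \deg_G(v)$ is bounded below by the minimum degree, which is $\kappa \rho$ by assumption. Substituting gives $\delta_G(S) \geq \rho \cdot (\kappa \rho) = \kappa \rho^2$, and taking the minimum over all cuts yields an edge-connectivity lower bound of $\kappa \rho^2$, which is in particular $\Omega(\rho^2 \kappa / \log^2 n)$ as claimed.

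There is essentially no obstacle — the statement is a two-line consequence of the defining inequality of a $\rho$-expander together with the minimum-degree hypothesis, and the direct argument in fact yields a strictly stronger bound of $\kappa \rho^2$. I expect that the $1/\log^2 n$ slack in the corollary statement is written in this form purely to align with the sampling probability of Theorem~\ref{thm:conduct-karger}, which is used elsewhere in the construction of the improved covering families for expander graphs; it does not reflect any genuine loss incurred in proving the edge-connectivity bound itself. Accordingly, I would simply present the two-line derivation above as the proof.
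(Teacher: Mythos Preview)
Your proof is correct, and indeed yields the stronger bound $\kappa\rho^2$ with no logarithmic loss: for any nontrivial cut $(S,V\setminus S)$, the side of smaller volume still contains at least one vertex, so $\min\{vol_G(S),vol_G(V\setminus S)\}\ge \kappa\rho$, and the $\rho$-expansion hypothesis gives $\delta_G(S)\ge \rho\cdot\kappa\rho=\kappa\rho^2$.

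The paper, however, takes a genuinely different and far more indirect route. It samples a family of $\lceil 1/p\rceil$ random subgraphs with $p=\Theta(\log n/(\rho^2\kappa))$, invokes Theorem~\ref{thm:conduct-karger} (the Wulff-Nilsen conductance-preservation lemma) to argue that each sampled subgraph is w.h.p.\ still a $\Theta(\rho)$-expander and hence connected, and uses Chernoff to show each edge lies in only $O(\log n)$ of the subgraphs. A pigeonhole over this family then shows that any set $E'$ of $\Theta(\rho^2\kappa/\log^2 n)$ edges is entirely avoided by at least one connected subgraph, whence $G\setminus E'$ is connected. The two $\log n$ factors you see in the statement come precisely from the sampling probability $p$ and from the Chernoff bound on edge multiplicities---they are artifacts of this particular argument, not of the underlying combinatorics.

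Your approach is strictly simpler and strictly stronger for the purpose of bounding edge-connectivity. What the paper's argument buys is that it is essentially a dry run of the covering-family machinery (Lemma~\ref{thm:cover-family-expander} and Lemma~\ref{theorem:cover-family-expander-directed}) that drives the rest of Section~\ref{sec:expanders}: it demonstrates that Karger-style sampling in an expander yields many connected subgraphs with small overlap, which is exactly the structure needed for the improved broadcast algorithm. So the paper's proof is pedagogically motivated rather than optimal for the corollary itself.
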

\begin{proof}
For $p=c_1\log n/ \rho^2 \kappa$, consider a collection of $\lceil 1/p \rceil$ subgraphs $\mathcal{H}= \{H_1, \ldots H_{\lceil 1/p \rceil}\}$, where every subgraph $H_i$ is obtained by sampling each $G$-edge into $H_i$ independently with probability $p$. Our first goal is to show that with high probability the following properties hold: (i) all subgraphs in $\mathcal{H}$ are connected, and (ii) every edge $e\in E$ appears in at most $c_2 \log n$ many subgraphs in $\mathcal{H}$ for some constant $c_2 > 0$. 

 By \Cref{thm:conduct-karger} for $c_1 \geq 12c+24$ each subgraph $H_i\in \mathcal{H}$ has conductance $\Theta(\rho)$ with probability $1-1/n^c$. 
 Hence, for a large enough constant $c_1$, by the union bound over the $\lceil 1/p \rceil= \lceil \rho^2 \kappa/c_1 \log n \rceil$ subgraphs in $\mathcal{H}$, all subgraphs have conductance $\Theta(\rho)$ w.h.p. Hence, w.h.p. every subgraph $H_i\in \mathcal{H}$ is connected and property (i) holds.
Consider property (ii). Since we have $\lceil 1/p \rceil$ many subgraphs in $\mathcal{H}$, by the Chernoff bound w.h.p. every edge $e\in E$ appears in at most $O(\log n)$ subgraphs in $\mathcal{H}$. 
Thus, we conclude that w.h.p. $\mathcal{H}$ satisfies both properties (i) and (ii).

Since a random collection of $\lceil 1/p \rceil$ subgraphs as descried above satisfies the properties w.h.p, we conclude that $G$ must contain a collection $\mathcal{H}^*$ of $\lceil 1/p \rceil$ subgraphs that satisfies the properties.
We now complete the argument on the edge-connectivity of $G$ by showing that for every subset $E' \subseteq E$ of size at most $|E'|=\ell$ for $\ell=\lfloor 1/(2c_2\log n\cdot p) \rfloor =\Theta(\rho^2 \kappa/ \log^2 n)$, the subgraph $G \setminus E'$ is connected. By property (ii) of $\mathcal{H}$, every edge $e\in E'$ appears in at most $c_2 \log n$ subgraphs. Hence, the total number of subgraphs in $\mathcal{H}$ that includes edges from $E'$ is at most $c_2 \log n \cdot \ell \leq \lceil 1/p \rceil /2$. It follows that there exists a subgraph $H_i\in \mathcal{H}$ that avoids all edges in $E'$ (i.e., $H_i \subseteq G \setminus E'$). By property (i) the subgraph $H_i$ is connected, and therefore $G \setminus E'$ is connected as well. The claim follows.
\end{proof}

We will use the following fact that appeared in \cite{spielman09} that provides an upper bound for the diameter of $\phi$-expander graphs. For completeness we provide here the proof.
\begin{fact}[Section 19.1 from \cite{spielman09}]\label{thm:bounded-diam-conduct}
	The diameter of an $n$-node $\phi$-expander graph $G$ is bounded by $O(\log n / \phi)$.
\end{fact}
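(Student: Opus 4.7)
The plan is a standard two-phase ball-growing argument based on conductance. Fix any source vertex $s\in V$, and for each integer $r\geq 0$ let $B_r$ be the set of vertices at distance at most $r$ from $s$ in $G$. Since $s$ is arbitrary, it suffices to show that $B_r=V$ for some $r=O(\log n/\phi)$.

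Phase one: exponential volume growth. While $vol_G(B_r)\leq vol_G(V\setminus B_r)$, the $\phi$-expander assumption gives $\delta_G(B_r)\geq \phi\cdot vol_G(B_r)$. Every edge counted in $\delta_G(B_r)$ has its outside endpoint in $B_{r+1}\setminus B_r$, so summing degrees over this new layer yields $vol_G(B_{r+1}\setminus B_r)\geq \delta_G(B_r)$, and therefore
\[
vol_G(B_{r+1})\;=\;vol_G(B_r)+vol_G(B_{r+1}\setminus B_r)\;\geq\;(1+\phi)\,vol_G(B_r).
\]
Starting from $vol_G(B_0)=\deg(s)\geq 1$ (a $\phi$-expander with $\phi>0$ has no isolated vertices) and using that the total volume is at most $n^2$, iterating this exponential growth forces $vol_G(B_{r_1})>vol_G(V\setminus B_{r_1})$ after $r_1=O(\log n/\phi)$ rounds.

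Phase two: exponential shrinkage of the complement. Once $T_r:=V\setminus B_r$ is the smaller side, conductance applied to $T_r$ gives $\delta_G(T_r)\geq \phi\cdot vol_G(T_r)$. Let $R_r\subseteq T_r$ denote the set of vertices of $T_r$ adjacent to $B_r$; every boundary edge of $T_r$ has its $T_r$-endpoint in $R_r$, so $vol_G(R_r)\geq \delta_G(T_r)$. By definition $R_r\subseteq B_{r+1}$, hence $T_{r+1}=T_r\setminus R_r$, and disjointness of $R_r$ and $T_{r+1}$ gives
\[
vol_G(T_{r+1})\;=\;vol_G(T_r)-vol_G(R_r)\;\leq\;(1-\phi)\,vol_G(T_r).
\]
After a further $r_2=O(\log n/\phi)$ rounds this drives $vol_G(T_r)$ below $1$, forcing $T_r=\emptyset$.

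Combining the two phases, every vertex lies within distance $r_1+r_2=O(\log n/\phi)$ of $s$, and since $s$ was arbitrary, the diameter of $G$ is $O(\log n/\phi)$. The only delicate point is the round at which the minority side of the cut switches, which is absorbed into a single additive constant; the base inequality $\deg(s)\geq 1$ is immediate from $\phi>0$. I do not anticipate any further technical obstacle.
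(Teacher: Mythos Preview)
Your argument is correct. The growth inequality in Phase~1 and the shrinkage inequality in Phase~2 are both sound, and the switch from one phase to the other is handled cleanly since once $T_r$ is the minority side it stays the minority side.

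The paper's proof uses the same core idea---conductance forces $vol_G(B_r)\geq (1+\phi)^r$ while the ball is the minority side---but packages it differently: rather than running a second shrinkage phase from a single source, it grows balls from \emph{both} endpoints $u$ and $v$ simultaneously, stops once each ball has volume at least $|E|/2$, and argues that the two balls must then touch via an edge, giving $\dist_G(u,v)\leq 2k+1$. Your two-phase single-source version is a perfectly valid alternative; the paper's two-ball trick buys a slightly shorter write-up since it avoids the separate complement-shrinkage analysis, while your version has the minor advantage of bounding the eccentricity of each vertex directly.
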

\begin{proof}
	For an integer $k \geq 0$ and a node $w$, let $B_k(w)=\{v\in V~\mid~ \dist_G(w,v)\leq k \}$ be the set of nodes at distance at most $k$ from $w$. We show by induction on $k\geq 0$ that for every node $w$ satisfying that $vol_G(B_k(w)) \leq |E|/2$, it holds that $vol_G(B_k(u)) \geq (1+\phi)^k$. For the base case of $k=0$, $B_0(w)=\{w\}$. Since $\phi>0$, the node $w$ has at least one neighbor in $G$, and thus $vol_G(B_0(w)) \geq 1$. 
	Assume that the claim holds for $k$ and consider $B_{k+1}(w)$ such that $vol_G(B_{k+1}(w)) \leq |E|/2$. 
	By the assumption, it holds that $vol_G(B_{k+1}(w))) \leq vol_G(V \setminus B_{k+1}(w))$, and therefore:
	\begin{equation} \label{eq:1}
		\delta_G(B_{k+1}(w))\geq \phi \cdot vol_G(B_{k+1}(w))~.
	\end{equation}
	By definition of $vol_G(S)$, we have:
	\begin{equation} \label{eq:2}
		vol_G(B_{k+1}(w)) \geq vol_G(B_{k}(w))+\delta_G(B_{k+1}(w))~.
	\end{equation}
	Combining \Cref{eq:1} and \Cref{eq:2}, we conclude that 
	\begin{align}
		vol_G(B_{k+1}(w)) &\geq vol_G(B_{k}(w))+\delta_G(B_{k+1}(w))  \nonumber \\ 
		&\geq vol_G(B_{k}(w)) +  \phi \cdot vol_G(B_{k+1}(w)) \nonumber \\ 
		&\geq (1+\phi) \cdot vol_G(B_{k}(w)) \nonumber \\
		&\geq (1+\phi)^{k+1}, \nonumber 
	\end{align}
	where the last inequality follows by the induction assumption.
	Let $k$ be the minimal integer satisfying that $(1+\phi)^{k} \geq |E|/2$, and consider a fixed pair of nodes $u$ and $v$. By the above, we have that $|vol_G(B_k(u))|, |vol_G(B_k(v))|\geq |E|/2$. Thus, there is an edge $(w_1,w_2)$ where $w_1 \in B_k(u)$ and $w_2 \in B_k(v)$, and therefore $\dist_G(u,v)\leq 2k+1=O(\log n/\phi)$. 
\end{proof}
By combining \Cref{thm:conduct-karger} with \Cref{thm:bounded-diam-conduct} we get:
\begin{corollary}\label{cor:expander-connec-sampling}
	Let $G=(V,E)$ be an $n$-node $\phi$-expander graph with minimum degree $\Theta(t \log n/\phi)$. 
	Let $G'$ be a subgraph of $G$ obtained by sampling each $G$-edge independently into $G'$ with probability of $p=\Theta(1/t)$. Then $G'$ has diameter of $O(\log n /\phi)$, w.h.p.
\end{corollary}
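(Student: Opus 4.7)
The strategy is to apply Theorem~\ref{thm:conduct-karger} to conclude that $G'$ inherits the expansion of $G$ up to a constant factor, and then to invoke Fact~\ref{thm:bounded-diam-conduct} to convert this into the desired diameter bound. Concretely, I would instantiate Theorem~\ref{thm:conduct-karger} with $\rho=\phi$ and $\kappa=\Theta(t\log n/\phi^2)$. These two choices are tied to each other: they are made so that (a)~the theorem's minimum-degree hypothesis $\kappa\rho=\Theta(t\log n/\phi)$ matches the assumption placed on $G$ in the corollary, and (b)~the sampling probability prescribed by the theorem, $(12c+24)\ln n/(\rho^2\kappa)=\Theta(1/t)$, matches (up to the hidden constant) the probability $p=\Theta(1/t)$ used to sample $G'$. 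The constants hidden inside the two $\Theta(\cdot)$'s in the hypotheses can be tuned to make these match for any desired high-probability exponent $c$.

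With the parameters set, every cut $(S,V\setminus S)$ of $G$ satisfies $\phi_G(S)\geq\phi=\rho$, so the first bullet of Theorem~\ref{thm:conduct-karger} applies simultaneously to every cut, giving $\phi_{G'}(S)\geq \phi_G(S)/4\geq \phi/4$ with probability $1-O(1/n^c)$. Thus $G'$ is a $(\phi/4)$-expander w.h.p.\ on the vertex set $V$ of size $n$, and is in particular connected. Applying Fact~\ref{thm:bounded-diam-conduct} to $G'$ then yields $\text{diam}(G')=O(\log n/(\phi/4))=O(\log n/\phi)$, which is precisely the bound claimed by the corollary.

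The only delicate point — and the part I would flag as the main bookkeeping obstacle — is verifying that the $\Theta(\cdot)$ constants line up: Theorem~\ref{thm:conduct-karger} fixes the sampling probability exactly as $(12c+24)\ln n/(\rho^2\kappa)$, so one must either choose the constants in the corollary's hypotheses to exactly realize this probability, or argue (by a straightforward coupling) that sampling at a higher rate $p$ only strengthens the resulting conductance lower bound, since the two-sided deviation in Theorem~\ref{thm:conduct-karger} only concerns us on the lower side and $\phi_{G'}(S)\leq 1$ in any case. Either way, no new ideas beyond Theorem~\ref{thm:conduct-karger} and Fact~\ref{thm:bounded-diam-conduct} are needed.
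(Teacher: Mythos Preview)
Your proposal is correct and follows essentially the same approach as the paper: set $\rho=\phi$ and $\kappa=\Theta(t\log n/\phi^2)$, apply Theorem~\ref{thm:conduct-karger} to conclude $\phi(G')=\Theta(\phi)$ w.h.p., and then invoke Fact~\ref{thm:bounded-diam-conduct} for the diameter bound. Your discussion of the constant-matching and the coupling argument is in fact more careful than the paper's terse proof, which simply asserts that the parameters line up.
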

\begin{proof}
	Since the minimum degree of $G$ is $\Delta = \Theta(t \log n/\phi)$, it holds that $\Delta \geq \kappa \cdot \phi$ for $\kappa = t \log n / \phi^2$, and  $p=\Theta(\log n /(\kappa \cdot \phi^2))$.
	Therefore, by \Cref{thm:conduct-karger} it holds that $\phi(G')=\Theta(\phi)$, w.h.p. By~\Cref{thm:bounded-diam-conduct}, we have that the diameter of $G'$ is $O(\log n /\phi)$.
\end{proof}
We next show that for $\phi$-\emph{expander} graphs with minimum degree $\Theta(t \log n/\phi)$,
there exist covering graph families with a considerable smaller cardinality than those obtained for \emph{general} graphs. Note that the definition of the covering family in \Cref{thm:cover-family-expander} is more relaxed than that of \Cref{def:covering-Lk} w.r.t property (P1). 
\begin{lemma}\label{thm:cover-family-expander}
	Let $G$ be an $n$-node $\phi$-expander graph with minimum degree $\Theta(t\log n/\phi)$.
	There is a randomized algorithm that computes an $(L,2t)$ covering family $\mathcal{G}$ of $O(t\cdot \log n)$ subgraphs for $L=O(\log n /\phi)$, satisfying the following properties with high probability. 
	For every $u,v,E' \in V \times V \times E^{\leq 2t}$, there exists a subgraph $G_i$ such that (P1') $\dist_{G_i}(u, v) \leq L$ and (P2) $E' \cap G_i=\emptyset$.
\end{lemma}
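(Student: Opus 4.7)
The plan is a direct randomized construction. Fix a sufficiently small constant $c_0>0$ and set $p=1/(c_0 t)$. I would independently sample $N=\Theta(t\log n)$ spanning subgraphs $G_1,\dots,G_N$ of $G$, where each $G_i$ is formed by including each edge of $G$ independently with probability $p$; this immediately gives $|\mathcal{G}|=O(t\log n)$. It suffices to show that, with high probability over the sampling, for every $E'\subseteq E$ with $|E'|\leq 2t$ there exists an index $i$ such that $G_i$ avoids $E'$ and has diameter at most $L=O(\log n/\phi)$. Such a $G_i$ automatically satisfies $\dist_{G_i}(u,v)\leq L$ for every pair $u,v$, yielding (P1') and (P2) simultaneously for all $(u,v,E')$.

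Fix such an $E'$. Since the event $\{E'\cap G_i=\emptyset\}$ depends only on the sampling decisions for edges in $E'$, it has probability $(1-p)^{|E'|}\geq (1-1/(c_0 t))^{2t}=\Omega(1)$, and conditional on it the remaining edges of $G_i$ are sampled independently from $H:=G\setminus E'$, each with probability $p$. Thus conditionally $G_i$ is exactly a Karger sample of $H$ with parameter $p$, and I plan to invoke \Cref{cor:expander-connec-sampling} on $H$. Two hypotheses need verification. First, the minimum degree of $H$ is at least $\Theta(t\log n/\phi)-2t=\Theta(t\log n/\phi)$, since removing $2t$ edges decreases any degree by at most $2t$. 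Second, for every cut $(S,V\setminus S)$, letting $M=\min\{vol_G(S),vol_G(V\setminus S)\}\geq \Omega(t\log n/\phi)$ (which follows from the minimum-degree hypothesis on $G$), I would bound
\begin{align*}
\phi_H(S) \;\geq\; \frac{\delta_G(S)-2t}{M} \;\geq\; \phi - \frac{2t}{M} \;\geq\; \phi - \frac{2\phi}{\log n} \;\geq\; \phi/2,
\end{align*}
so $H$ is an $\Omega(\phi)$-expander. \Cref{cor:expander-connec-sampling} then yields that the conditional $G_i$ has diameter $O(\log n/\phi)$ with probability at least $1-n^{-c}$ for any constant $c$ of our choosing.

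Combining, for each fixed $E'$ a single $G_i$ is \emph{good for $E'$} (avoids $E'$ and has diameter at most $L$) with probability $\Omega(1)$. Independence across the $N$ samples gives that the probability that no $G_i$ is good for $E'$ is at most $(1-\Omega(1))^N\leq n^{-\Omega(t)}$, once the hidden constant in $N=\Theta(t\log n)$ is taken large enough. Since the number of subsets $E'\subseteq E$ of size at most $2t$ is bounded by $n^{O(t)}$, a union bound caps the total failure probability by $n^{-\Omega(t)}$, completing the argument. I expect the main obstacle to be the conductance analysis of $H=G\setminus E'$: the adversarial removal of $2t$ arbitrary edges could in principle damage small cuts, and this is precisely where the hypothesis that the minimum degree of $G$ is $\Theta(t\log n/\phi)$ (rather than some weaker $\Omega(t/\phi)$) is crucial---the extra logarithmic slack in the volume is exactly what absorbs the $2t$-edge loss in every cut and keeps $\phi_H=\Omega(\phi)$, so that the hypotheses of \Cref{cor:expander-connec-sampling} remain valid after conditioning on $G_i\cap E'=\emptyset$.
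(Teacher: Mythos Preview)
Your construction is identical to the paper's (sample $N=\Theta(t\log n)$ subgraphs, each edge kept with probability $\Theta(1/t)$), and your argument is correct, but your analysis takes an unnecessarily roundabout route. You condition on $G_i\cap E'=\emptyset$, observe that the conditional law of $G_i$ is a Karger sample of $H=G\setminus E'$, and then spend effort proving that $H$ is still an $\Omega(\phi)$-expander with the right minimum degree so that \Cref{cor:expander-connec-sampling} applies to $H$. That step is fine, but it is not needed.

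The paper instead decouples the two properties. It applies \Cref{cor:expander-connec-sampling} directly to $G$ (not to $G\setminus E'$) to conclude that \emph{every} $G_i$ has diameter $O(\log n/\phi)$ with high probability, via a union bound over the $O(t\log n)$ subgraphs. Separately, for each fixed $E'$ it bounds $\Pr[E'\cap G_i=\emptyset]\geq(1-p)^{2t}=\Omega(1)$ and takes the product over the $N$ independent samples and a union bound over all $E'$. Since the diameter bound already holds for all $G_i$ simultaneously, whichever $G_i$ avoids $E'$ automatically has small diameter, and there is no need to reason about the conditional distribution or about the expansion of $G\setminus E'$. Equivalently, even within your framework you could have written $\Pr[G_i\text{ avoids }E'\text{ and }\mathrm{diam}(G_i)\leq L]\geq \Pr[G_i\text{ avoids }E']-\Pr[\mathrm{diam}(G_i)>L]\geq \Omega(1)-n^{-c}$ and skipped the analysis of $H$ altogether. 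Your detour through $H$ does buy a small conceptual insight---that the $\Theta(t\log n/\phi)$ minimum-degree hypothesis has enough slack to absorb $2t$ adversarial edge deletions without hurting conductance---but it is not required for the lemma as stated. (A minor quibble: you want $c_0$ to be a fixed positive constant, not ``sufficiently small''; any fixed $c_0>0$ gives $(1-1/(c_0 t))^{2t}=\Omega(1)$.)
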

\begin{proof}
The covering family $\mathcal{G}$ is obtained by sampling a collection of $O(t\cdot \log n)$ subgraphs. 
	Specifically, each subgraph $G_i \in \mathcal{G}$ is obtained by sampling each $G$-edge into $G_i$ independently with probability of $p=\Theta(1/t)$. By \Cref{cor:expander-connec-sampling}, it holds that w.h.p. $G_i$ is a connected graph with diameter $O(\log n /\phi)$. We now show that w.h.p. $\mathcal{G}$ is an $(L,2t)$ covering family for $L=O(\log n /\phi)$.
	
	Fix a pair of nodes $u,v\in V$ and a subset of at most $2t$ edges $E' \subseteq G$. We claim that with probability of at least $1-1/n^{\Omega(t)}$, there exists a subgraph $G_i$ satisfying (P2). 
	The probability that $E' \cap G_i=\emptyset$ is at least $q= (1-p)^{2t}\geq 1/e^3$. Therefore, the probability that no subgraph in $\mathcal{G}$ satisfies (P2) is at most $(1-q)^{c\cdot t \log n}\leq 1/n^{c't}$.
	By applying the union bound over all $O((tn)^{4t})$ possible subsets of $2t$ edges, we get that w.h.p. property (P2) holds for every subset $E'$. In addition, by \Cref{cor:expander-connec-sampling} w.h.p. it holds that every $G_i$ has diameter $L=O(\log n /\phi)$. Therefore, w.h.p., for every pair of nodes $u,v\in V$ it holds that $\dist_{G_i}(u, v) \leq L$. By the union bound over the subgraphs in $\mathcal{G}$, we get that w.h.p. both properties (P1') and (P2) hold for every $u,v,E' \in V \times V \times E^{\leq 2t}$.
\end{proof}

Unfortunately, it is unclear how to compute the covering family of \Cref{thm:cover-family-expander} in the adversarial \congest\ model. The reason is that the endpoints of an adversarial edge $e=(u,v) \in F$ cannot faithfully sample $e$ with probability $p$. For example, letting the endpoint of larger ID $u$ perform the sampling, the other endpoint $v$ might not be correctly informed with the outcome of this sampling. To resolve this, we let each endpoint sample a directed edge $(u,v)$ with a probability of $p$. Consequently, the graph family that we get consists of directed graphs, and the graph is required to have minimum degree $\Theta(t^2 \log n / \phi)$. 

\begin{lemma}\label{theorem:cover-family-expander-directed}[Directed Covering Families for Expanders]
	For any $n$-node $\phi$-expander graph $G=(V,E)$ with minimum degree $\Theta(t^2\log n/\phi)$, there is a randomized distributed algorithm that in the presence of at most $t$ adversarial edges $F$, locally computes a \emph{directed} $(L,2t)$ covering family of $O(t \log n)$ \emph{directed} subgraphs $\mathcal{G}$ for $L=O(\log n /\phi)$, satisfying the following w.h.p.
	For every $u,v,E' \in V \times V \times E^{\leq 2t}$, there exists a subgraph $G_i$ such that (P1') $\dist_{G_i}(u, v) \leq L$ and (P2) $E' \cap G_i=\emptyset$ (i.e., $G_i$ does not contain any edge in $E'$ in neither direction).
\end{lemma}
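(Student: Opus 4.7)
The plan is to adapt the sampling blueprint of \Cref{thm:cover-family-expander} so that it can be carried out with no communication at all, thereby neutralizing the adversary during the construction phase. Specifically, the algorithm is $0$-round and purely local: each node $u$, for every neighbor $v$ and every index $i \in \{1,\ldots,k\}$ with $k = \Theta(t \log n)$, independently flips a biased coin with heads probability $p = c_1/t$, and includes the \emph{directed} edge $(u,v)$ in $G_i$ iff the coin comes up heads. The tail of a directed edge thus knows from its own randomness alone whether the edge belongs to $G_i$, which is exactly the form of local knowledge required by the algorithms that consume $\mathcal{G}$. Because no messages are sent during the construction, the adversary's control of $F$ and its knowledge of the random coins cannot alter the joint distribution of $G_1,\ldots,G_k$.

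The analysis goes through an undirected auxiliary subgraph. Let $\widehat{G}_i \subseteq G$ be the undirected graph containing the edge $\{u,v\}$ iff both $(u,v) \in G_i$ and $(v,u) \in G_i$. By independence of the coins, each $G$-edge lies in $\widehat{G}_i$ independently with probability $q = p^2 = \Theta(1/t^2)$. Writing the minimum degree of $G$ as $\kappa \phi$ with $\kappa = \Theta(t^2 \log n/\phi^2)$, one checks that $q$ exceeds the Karger--Wulff-Nilsen threshold $(12c+24)\ln n/(\phi^2 \kappa) = \Theta(1/t^2)$ for any desired constant $c$, provided $c_1$ is chosen large enough. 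Hence \Cref{thm:conduct-karger} gives $\phi(\widehat{G}_i) = \Omega(\phi)$ w.h.p., and \Cref{thm:bounded-diam-conduct} yields undirected diameter $O(\log n/\phi)$. Since every undirected edge of $\widehat{G}_i$ contributes \emph{both} directed copies to $G_i$, the directed diameter of $G_i$ is at most $L = O(\log n/\phi)$. A union bound over the $k$ indices shows that w.h.p.\ every $G_i$ has directed diameter at most $L$, establishing property (P1') simultaneously for all pairs $u,v$ and all indices.

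For property (P2), fix a set $E' \subseteq E$ with $|E'| \leq 2t$. The event that $G_i$ omits every directed copy of every edge in $E'$ has probability at least $(1-p)^{2|E'|} \geq (1-p)^{4t} \geq e^{-5c_1} = \Omega(1)$, since the $2|E'|$ relevant coins are independent. As the $k$ subgraphs are generated independently, the probability that none of them avoids $E'$ is at most $(1-\Omega(1))^{\Theta(t \log n)} \leq n^{-c't}$, where $c'$ can be made arbitrarily large by inflating the hidden constant in $k$. Taking a union bound over the $\binom{|E|}{2t} \leq n^{O(t)}$ choices of $E'$ drives the total failure probability to $n^{-\Omega(t)}$. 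Combined with the $E'$-independent (P1') guarantee from the previous paragraph, w.h.p.\ for every triple $(u,v,E')$ there exists a single index $i$ for which $G_i$ meets both (P1') and (P2), as required.

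The main obstacle I anticipate is calibrating $p$ so that both properties hold on the same random draw. Property (P2) forces $p = O(1/t)$, since otherwise each subgraph is very likely to hit $E'$. Property (P1'), obtained through the undirected auxiliary graph $\widehat{G}_i$, requires $p^2 = \Omega(\log n/(\phi^2 \kappa))$ so that \Cref{thm:conduct-karger} applies. These two constraints together pin down $p = \Theta(1/t)$ together with $\kappa = \Theta(t^2 \log n/\phi^2)$, which is precisely why the minimum-degree hypothesis must increase from $\Theta(t \log n/\phi)$ in \Cref{thm:cover-family-expander} to $\Theta(t^2 \log n/\phi)$ in the directed setting, matching the assumption of the statement.
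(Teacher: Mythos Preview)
Your proof is correct and follows the paper's argument essentially verbatim: sample $\Theta(t\log n)$ directed subgraphs with each arc included independently with probability $\Theta(1/t)$, verify (P2) via $(1-p)^{4t}=\Omega(1)$ together with a union bound over $n^{O(t)}$ edge sets, and verify (P1') by observing that the bidirected core of each $G_i$ is (by \Cref{thm:conduct-karger}, equivalently \Cref{cor:expander-connec-sampling}) an $\Omega(\phi)$-expander and hence has diameter $O(\log n/\phi)$. The only discrepancy is that the paper lets the \emph{head} of each arc do the coin flip rather than the tail, so that the receiver can locally decide whether to honor an incoming message in the subsequent broadcast; your assertion that the tail's knowledge is ``exactly the form of local knowledge required by the algorithms that consume $\mathcal{G}$'' should therefore be reversed, though this does not affect properties (P1') and (P2) themselves.
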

\begin{proof}
	The covering family $\mathcal{G}$ is obtained by sampling a collection of $O(t \log n)$ \emph{directed} subgraphs. 
	Specifically, each subgraph $G_i \in \mathcal{G}$ is obtained by sampling a directed edge $(u,v)$ with probability of $p=\Theta(1/t)$. The sampling of a directed edge $(u,v)$ is done locally by $v$.
	We now show that w.h.p. $\mathcal{G}$ is a (directed) $(L,2t)$ covering family for $L=O(\log n /\phi)$. 
	
	Fix a pair of nodes $u,v\in V$ and a subset of at most $2t$ edges $E' \subseteq G$. 
	We claim that with a probability of at least $1-1/n^{\Omega(t)}$, there exists a subgraph $G_i$ satisfying (P2). 
	The probability that $E' \cap G_i=\emptyset$ is at least $q=(1-p)^{4t}\geq 1/e^5$. Therefore, the probability that no subgraph in $\mathcal{G}$ satisfies (P2) is at most
	$(1-q)^{c\cdot t \log n}=1/n^{c't}$. By applying the union bound over all $O((tn)^{4t})$ possible sets of $2t$ edges, we get that w.h.p. property (P2) holds for every subset $E'$. 
	
	As each directed edge is sampled with probability $p$, each edge is sampled in both directions with probability $p^2=\Theta(1/t^2)$. Since $G$ has minimum degree $\Theta(t^2 \log n/ \phi)$, by 
	 \Cref{cor:expander-connec-sampling} it holds that w.h.p. $G_i$ contains a bidirected subgraph which (when viewed as an undirected subgraph) has diameter $O(\log n /\phi)$. 
	As a result,  w.h.p. it holds that every $G_i$ has a round-trip diameter $O(\log n /\phi)$. Therefore, for $L=\Theta(\log n / \phi)$ w.h.p. for every pair of nodes $u,v\in V$ it holds that $\dist_{G_i}(u, v) \leq L$. By the union bound over the subgraphs in $\mathcal{G}$, we get that w.h.p. both properties (P1') and (P2) hold for every $u,v,E' \in V \times V \times E^{\leq 2t}$.
\end{proof}
\paragraph{Broadcast algorithm using a directed covering family (Proof of \Cref{thm:expander-broadcast}).} Let $G$ be an $n$-node $\phi$-expander graph, with minimum degree $\Theta(t^2\log n/\phi)$, and a fix set of $t$ unknown adversarial edges $F$. 
Throughout the algorithm, we assume that the nodes hold a linear estimate om the expansion parameter $\phi$. (This assumption can also be avoided by adding a logarithmic overhead in the graph diameter to the final round complexity.)

\noindent The algorithm begins with locally computing an $(L,2t)$ directed covering family $\mathcal{G}=\{G_1, \ldots, G_{\ell}\}$ for $G$ of size $\ell=O(t \log n)$ using \Cref{theorem:cover-family-expander-directed}, where $L=O(\log n /\phi)$. 
Next, the nodes execute the broadcast algorithm $\BBtalgo(L,t)$ of \Cref{lem:main-t} 
over $\mathcal{G}$.

Recall that the algorithm of \Cref{lem:main-t} consists of two phases, a flooding phase where heard bundles are propagated over the subgraphs in the covering family, and an acceptance phase.
The flooding phase proceeds in $\ell$ iterations, where each iteration is implemented in $O(L)$ rounds.
In every iteration $i$, the nodes propagate heard bundles over the directed subgraph $G_i \in \mathcal{G}$. Specifically, a node $v\in V$ stores and sends only heard bundles that are received over its \emph{directed} incoming edges, which are sampled (locally by $v$) into $G_i$.
Every message received by $v$ from a neighbor $u\in N(v)$, such that $(u,v)\notin G_i$ is ignored. 
The acceptance phase is executed in $O(L)$ rounds, as described in~\Cref{sec:t-faults}.

\paragraph{Correctness.}
We now show that the correctness of the algorithm still holds. Let $s$ be the designated source node, and let $m_0$ be the message $s$ sends using the broadcast algorithm. We first note that due to \Cref{clm:t-no-false}, no node $v \in V$ accepts a wrong message $m' \neq m_0$.
We are left to show all nodes in $V$ accept the message $m_0$ during the second phase. 
\begin{claim}
	All nodes accept $m_0$ within $O(L)$ rounds from the beginning of the second phase.
\end{claim}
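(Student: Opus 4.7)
The plan is to mirror the proof of \Cref{clm:t-all-accepts} by induction on $d = \dist_{G \setminus F}(s,v)$. The base case is immediate: the source $s$ accepts $m_0$ in round $0$ of Phase 2. For the inductive step, take a node $v$ at distance $d+1$ and let $w$ be its predecessor on a shortest $s$-$v$ path in $G \setminus F$. The edge $(w,v)$ is reliable, and by the induction hypothesis $w$ sends $accept(m_0)$ to $v$ by round $d+1$. What remains is to certify $\mincut(s, v, \cP) \geq t$, where $\cP$ is the collection of paths in the heard bundles for $m_0$ stored at $v$ and avoiding $(v,w)$, as required by the acceptance rule.

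For this cut bound, fix any $E' \subseteq E \setminus \{(w,v)\}$ with $|E'| \leq t-1$. Since $|F \cup E' \cup \{(w,v)\}| \leq 2t$, properties (P1') and (P2) of the directed covering family $\mathcal{G}$ from \Cref{theorem:cover-family-expander-directed} supply an index $k$ such that $G_k$ contains no edge of $F \cup E' \cup \{(w,v)\}$ in either direction and $\dist_{G_k}(s,v) \leq L$. Because $G_k$ avoids every adversarial edge, the $k$-th iteration of Phase 1 behaves as if executed on a fully reliable graph: no honest node admits a bundle over an edge outside $G_k$, so any messages the adversary injects on edges of $F$ are discarded during iteration $k$. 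A directed $s$-$v$ path of length at most $L$ in $G_k$ then carries a truthful bundle $heard(m_0, x, P_k)$ to $v$ within the $O(L)$ rounds allocated to iteration $k$, and $P_k \subseteq G_k$ automatically avoids $E' \cup \{(w,v)\}$. Ranging over all such $E'$, no edge set of size $t-1$ covers every $P_k$ produced this way, which yields $\mincut(s,v,\cP) \geq t$ and hence acceptance of $m_0$ by round $d+1$.

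The main obstacle is the \emph{directed} construction of $\mathcal{G}$ and the fact that the two endpoints of an adversarial edge may sample inconsistently, so neither endpoint can a priori trust the other's view of membership in $G_k$. Property (P2), which excludes an adversarial edge in \emph{both} directions from the chosen $G_k$, is exactly what neutralizes this: in iteration $k$, neither endpoint accepts bundles crossing the adversarial edge, so the discrepancy never affects what gets stored. Finally, to see that the induction terminates within $O(L)$ rounds, apply properties (P1') and (P2) with $E' = F$ (of size at most $t \leq 2t$): some $G_k \subseteq G \setminus F$ achieves $\dist_{G_k}(s,v) \leq L$ for every $v$, so the diameter of $G \setminus F$ is $O(L)$, matching the number of rounds of Phase 2.
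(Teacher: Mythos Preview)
Your proof is correct and follows essentially the same approach as the paper: induction on $\dist_{G\setminus F}(s,v)$, using properties (P1') and (P2) of the directed covering family from \Cref{theorem:cover-family-expander-directed} applied to the set $F \cup E' \cup \{(w,v)\}$ to certify the min-cut condition, and bounding the depth of the induction via the diameter of $G\setminus F$. Your added paragraph explaining why the directed sampling discrepancy is harmless (because (P2) excludes the adversarial edge in both directions) makes explicit a point the paper leaves implicit.
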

\begin{proof} \vspace{-5pt}
We show that all nodes accept the message $m_0$ by induction on the distance from the source $s$ in the graph $G \setminus F$.	
The base case holds since $s$ accepts the message $m_0$ in round $0$. Assume all nodes at distance at most $i$ from $s$ in $G \setminus F$ accepts the message by round $i$. Consider a node $v$ at distance $(i+1)$ from $s$ in $G \setminus F$. By the induction assumption on $i$, the node $v$ receives the message $accept(m_0)$ from a neighbor $w$, in round $j \leq (i+1)$ over a  reliable edge $(w,v)$. 
We are left to show $\mincut(s,v, \mathcal{P})\geq t$, where $\mathcal{P}$ is as given by Eq. (\ref{eq:path-col}): 
	\begin{equation*}
		\mathcal{P}=\{P ~\mid~ \text{$v$ stored a $heard(m',len,P)$  message and } (v,w),(w,v) \notin P\}~.
	\end{equation*}
	Alternatively, we show that for every edge set $E' \subseteq E \setminus \{(w,v)\}$ of size $(t-1)$, the node $v$ stores a heard bundle containing $m_0$ and a path $P_k$ such that $P_k \cap (E' \cup \{(w,v)\}) = \emptyset$ during the first phase. This necessary implies that the minimum-cut is at least $t$.
	
	By \Cref{theorem:cover-family-expander-directed} there exists a subgraph $G_k\in \mathcal{G}$ satisfying: (P1') $dist_{G_k}(s,v) \leq L$, and (P2) $G_k \cap (F \cup E' \cup \{(v,w)\})=\emptyset$. Hence, all directed edges in $G_k$ are reliable, and the only message passed through the heard bundles during the $k$'th iteration is the correct message $m_0$. Additionally, as  $G_k$ contains a directed $s$-$v$ path of length $O(L)$, the node $v$ stores a heard bundle $heard(m_0, x , P_k)$ during the $k$'th iteration, for some $s$-$v$ path $P_k$ of length $x$. As $P_k \subseteq G_k$, it also holds that $P_k \cap (E'\cup \{(v,w)\}) = \emptyset$.
	We conclude that $\mincut(s,v, \mathcal{P})\geq t$, and by the definition of Phase 2, $v$ accepts $m_0$ by round $(i+1)$.  Since $|F| \leq t$, \Cref{theorem:cover-family-expander-directed} implies that the diameter of $G \setminus F$ is $O(L)$, and the claim follows.
\end{proof}

\noindent\textbf{Round complexity.} The first phase consists of $\ell=O(t \log n)$ iterations, each implemented using $O(L)=O(\log n /\phi)$ rounds. The second phase takes $O(t \log n / \phi)$ rounds. Hence the total round complexity is bounded by $O(t \cdot \log^2 n/\phi)$.

\paragraph{Acknowledgments.} We are very grateful to David Peleg and Eylon Yogev for many useful discussions. 
\bibliographystyle{plain}
\bibliography{secure-addition}

\newpage
\appendix

\section{Missing Proofs of Section~\ref{sec:broadcast-algo}}\label{sec:miss} \label{app:known}
\APPENDPRFACT

\end{document}